\newcommand{\BlackBoxes}{\global\overfullrule5pt}
\def\varepsilon{{\varepsilon}}
\newcommand{\be}{\begin{equation}}
\newcommand{\ee}{\end{equation}}
\newcommand{\bea}{\begin{eqnarray}}
\newcommand{\eea}{\end{eqnarray}}
\newcommand{\beas}{\begin{eqnarray*}}
\newcommand{\eeas}{\end{eqnarray*}}
\newtheorem{theorem}{Theorem}[section]
\newtheorem{proposition}[theorem]{Proposition}
\newtheorem{corollary}[theorem]{Corollary}
\newtheorem{lemma}[theorem]{Lemma}
\newtheorem{remark}[theorem]{Remark}
\newtheorem{example}[theorem]{Example}
\newtheorem{examples}[theorem]{Examples}
\newtheorem{foo}[theorem]{Remarks}
\newenvironment{proof}{\addvspace{\medskipamount}\par\noindent{\it Proof}.}
{\unskip\nobreak\hfill$\Box$\par\addvspace{\medskipamount}}
\newcommand{\E}[1]{{\rm E}\left[#1\right]}
\newcommand{\EQ}[1]{{\rm E}_{Q}\left[#1\right]}
\newcommand{\EP}[1]{{\rm E}_{P}\left[#1\right]}
\newcommand{\op}[1]{\operatorname{#1}}
\newcommand{\R}{{\mathbb R}}
\newcommand{\Nat}{{\mathbb N}}
\newcommand{\Pcal}{{\mathcal P}}
\newcommand{\Ucal}{{\mathcal U}}
\newcommand{\Vcal}{{\mathcal V}}
\newcommand{\Ccal}{{\mathcal C}}
\newcommand{\Wcal}{{\mathcal W}}
\DeclareMathOperator{\esssup}{ess\,sup}
\DeclareMathOperator{\essinf}{ess\,inf}
\DeclareMathOperator{\dom}{dom}
\begin{document}
\title{Robust Optimal Risk Sharing and Risk Premia in Expanding Pools\\[3mm]
\large Running Title: Robust Optimal Risk Sharing and Risk Premia}
\author{Thomas Knispel
\\
{\footnotesize Leibniz University of Hannover, Talanx AG}\\
{\footnotesize and Competence Center for Risk and Insurance}\\
{\footnotesize \textit{Mail:} Leibniz University of Hannover}\\
{\footnotesize Institute of Probability and Statistics}\\
{\footnotesize Welfengarten 1}\\
{\footnotesize D-30167 Hannover, Germany}\\
{\footnotesize \textit{Email:} {\tt knispel@leibniz-lab.de}}\\
\and Roger J. A. Laeven$^{*}$\\
{\footnotesize University of Amsterdam, EURANDOM}\\
{\footnotesize and CentER}\\
{\footnotesize \textit{Mail:} University of Amsterdam}\\
{\footnotesize Amsterdam School of Economics}\\
{\footnotesize PO Box 15867}\\
{\footnotesize 1001 NJ Amsterdam, The Netherlands}\\
{\footnotesize \textit{Email:} {\tt R.J.A.Laeven@uva.nl}}\\
{\footnotesize $^{*}$Corresponding author}\\[1mm]
\and Gregor Svindland\\
{\footnotesize Ludwig-Maximilians University of Munich}\\
{\footnotesize \textit{Mail:} Ludwig-Maximilians University of Munich}\\
{\footnotesize Mathematics Institute}\\
{\footnotesize Theresienstr. 39}\\
{\footnotesize D-80333 Munich, Germany}\\
{\footnotesize \textit{Email:} {\tt svindla@math.lmu.de}}\\[3mm]}
\date{\today}
\maketitle
\begin{abstract}
We consider the problem of optimal risk sharing in a pool of cooperative agents. 
We analyze the asymptotic behavior of the certainty equivalents and risk premia
associated with the Pareto optimal 
risk sharing contract as the pool expands. 
We first study this problem under expected utility preferences with an objectively or subjectively given probabilistic model.
Next, we develop a robust approach by explicitly taking uncertainty about the probabilistic model (ambiguity) into account.
The resulting robust certainty equivalents and risk premia compound risk and ambiguity aversion.
We provide explicit results on their limits and rates of convergence, 
induced by Pareto optimal risk sharing in expanding pools.
\noindent
\\[3mm]\noindent\textbf{Keywords:} Ambiguity; Convex risk measures; Large pools; Pareto optimality; Risk premia; Risk sharing; Robust preferences.
\\[3mm]\noindent\textbf{AMS 2010 Classification:} Primary: 91B06, 91B16, 91B30; Secondary: 60E15,
62P05.
\\[3mm]\noindent\textbf{JEL Classification:} D81, G10, G20.
\end{abstract}

\makeatletter
\makeatother
\maketitle

\onehalfspacing

\section{Introduction}

Risk sharing constitutes a main principle in economic and mathematical risk theory.
It refers to a subdivision of the aggregate risk in a pool by exchanging and relocating risks among the cooperative individuals that participate in the pool. Risk sharing provides a means of inducing risk reduction for the individuals, in a potentially Pareto optimal sense.
Since the seminal work by Borch \cite{B62} it has been studied by numerous authors in a wide variety of settings;
see e.g., Arrow \cite{A63}, Wilson \cite{W68}, DuMouchel \cite{D68}, Gerber \cite{G78,G79}, B\"uhlmann and Jewell \cite{BJ79}, Landsberger and Meilijson \cite{LM94},
and, more recently, Carlier and Dana \cite{CD03}, Heath and Ku \cite{HK04}, Barrieu and El Karoui \cite{BeK05,BeK09},
Dana and Scarsini \cite{DS07}, Jouini, Schachermayer and Touzi \cite{JST08},
Kiesel and R\"uschendorf \cite{KR08}, Ludkovski and R\"uschendorf \cite{LR08},
Filipovi\'c and Svindland \cite{FS08}, Dana \cite{D11},
Ravanelli and Svindland \cite{RS14}, and the references therein.

This paper explores what happens when Pareto optimal risk sharing is combined with an \textit{expanding} pool of risks.
In an expanding pool of independent and identically distributed (i.i.d.) risks,
the distribution of the aggregate risk spreads,
but the average risk obeys the law of large numbers and converges to its expectation
(see e.g., Samuelson \cite{S63}, Diamond \cite{D84} and Ross \cite{R99} for a detailed discussion).
We analyze when
the individuals' risk reduction induced by Pareto optimal risk sharing may be exploited to the full limit:
when, upon subdividing and relocating the aggregate risk according to the Pareto optimal 
risk sharing rule
in an expanding pool of i.i.d.\ risks
with cooperating individuals that have identical preferences,
will risk sharing eventually lead to annihilating risk beyond its expectation?

We answer this question by analyzing, in a general setting, the asymptotic behavior of the certainty equivalents and risk premia
in an expanding pool of risks under Pareto optimal risk sharing.
Adopting the classical expected utility model of Von Neumann and Morgenstern \cite{vNM44},
Pratt \cite{P64} studies the connection between the risk premium,
defined as the expected value of a given risk minus its certainty equivalent
(the monetary amount that makes an agent indifferent to the risk),
and the utility function.
He shows that greater local risk aversion (risk aversion in the small) at all wealth levels
implies greater global risk aversion (risk aversion in the large) and vice versa,
in the sense that the risk premia 
vary with the local risk aversion intensity. 
Furthermore, Pratt \cite{P64} provides an expansion of the risk premium
for a small and actuarially fair risk, given by the local risk aversion times half the variance of the risk.
Hence, with vanishing variance of the average risk in an expanding pool of i.i.d.\ risks---as implied by the law of large numbers---,
the risk premium 
associated with the Pareto optimal risk sharing rule can be seen to vanish, too.
We analyze this convergence formally and derive results on the risk premium's rate of convergence.
We first consider the relatively simple case of the expected utility model, as in Pratt \cite{P64} but with refined results,
and next turn to more advanced decision models, for which the problem proves to be much more delicate.

In recent years, the distinction between risk (probabilities given) and ambiguity (probabilities unknown) has received much attention.
Under Savage's \cite{S54} subjective expected utility model this distinction is absent
due to the assignment of subjective probabilities.
Modeling approaches that explicitly recognize the fact that a specific probabilistic model may be misspecified are referred to as robust
(see e.g., Hansen and Sargent \cite{HS01,HS07}).
A popular class of models for decision under risk and ambiguity is provided by the multiple priors models (Gilboa and Schmeidler \cite{GS89};
see also Schmeidler \cite{S86,S89}).
It occurs as a special case of the rich variational and homothetic preference models (Maccheroni, Marinacci and Rustichini \cite{MMR06},
Cerreia-Vioglio et al. \cite{CMMM08} and Chateauneuf and Faro \cite{CF10}).
These models all reduce to the expected utility model of Von Neumann and Morgenstern \cite{vNM44}
when ambiguity has resolved in the classical Anscombe and Aumann \cite{AA63} setup.
A related strand of literature in financial mathematics is that of convex measures of risk introduced by
F\"ollmer and Schied \cite{FS02}, Frittelli and Rosazza Gianin \cite{FRG02}, and
Heath and Ku \cite{HK04}, generalizing
Artzner et al. \cite{ADEH99}; see also the early Wald \cite{W50}, Huber \cite{H81}, Deprez and Gerber \cite{DG85},
Ben-Tal and Teboulle \cite{BT86,BT87}, and the more recent Carr, Geman and Madan \cite{CGM01},
Ruszczy\'nski and Shapiro \cite{RS06a}
and Ben-Tal and Teboulle \cite{BT07}.
F\"ollmer and Schied \cite{FS11,FS13} and Laeven and Stadje \cite{LS13,LS14} provide precise connections between the two strands of the literature.
We explore the combination of optimal risk sharing and an expanding pool of risks
in the presence of uncertainty about the true probabilistic model.

More specifically, we start in this paper by considering classical expected utility,
so that the certainty equivalent $U$ of a risk $X$ is given by
\begin{equation}\label{eq:introce}U(X)=u^{-1}(\E{u(X)}),\end{equation}
with $u$ a utility function and $\mathrm{E}[\cdot]$ the expectation under an objectively or subjectively given probabilistic model.
We analyze in this setting the precise limiting behavior and convergence rates of the risk premium associated with the average risk
$S_{n}/n$, where $S_n=\sum_{i=1}^n X_i$ for i.i.d.\ risks $X_{i}$, $n\in \Nat$, given by
\begin{equation}\label{eq:introrp}\pi(v,S_{n}/n)=\E{v+S_{n}/n}-U(v+S_{n}/n),\end{equation}
corresponding to proportional (equal, $1/n$) risk sharing of the aggregate risk
among $n$ cooperative individuals with identical utility function $u$
and initial wealth $v$,
which we prove to be Pareto optimal in this setting under mild conditions.

Next, we explicitly take uncertainty about the probabilistic model into account
and adopt a robust approach.
This setting turns out to be intriguingly more delicate.
It is best thought of as featuring probabilistic models that are the Kolmogorov extensions
of a family of product probability measures.
We first consider certainty equivalents that are
``robustified'' over a class of such probabilistic models $\mathcal{P}$:
\begin{equation}\label{eq:introrobustce}
 \Ucal_\Pcal(X)=\inf_{Q\in \Pcal} U_Q(X)+\alpha(Q),
\end{equation}
with $U_Q(X)=u^{-1}(\EQ{u(X)})$
and where $\alpha:\mathcal{P}\rightarrow\mathbb{R}\cup\{\infty\}$ is a penalty function
that measures the plausibility of the probabilistic model $Q\in\mathcal{P}$.
We prove that the proportional risk sharing rule remains Pareto optimal in this setting.
Furthermore, we prove that in an expanding pool of risks
the robustified certainty equivalent of the average risk converges to the robustified expectation,
and we provide explicit bounds on the corresponding convergence rates.
We find in particular that the convergence rates are dictated
by the individuals' coefficient of absolute risk aversion and the robustified first two moments,
expectation and variance.

Finally, we naturally extend the risk premium of Pratt \cite{P64}
to our setting with risk \textit{and} ambiguity, by considering
\begin{equation*}
\pi(v,X)= \Wcal(v+X) - \Wcal_\Pcal(v+X)\quad\mathrm{and}\quad \pi(v,X)=\Ucal(v+X) - \Vcal_\Pcal(v+X),
\end{equation*}
in the case of homothetic and variational preferences, respectively,
with
\begin{equation*}
\Wcal(X)=\inf_{Q\in \Pcal} \EQ{X}\beta(Q)\quad\mathrm{and}\quad\Wcal_\Pcal(X)=u^{-1}\left(\inf_{Q\in \Pcal} \EQ{u(X)}\beta(Q)\right),
\end{equation*}
where $\beta:\mathcal{P}\rightarrow[1,\infty]$ is a penalty function, 
and
\begin{equation*}
\Ucal(X)=\inf_{Q\in \Pcal} \EQ{X}+\alpha(Q)\quad\mathrm{and}\quad\Vcal_\Pcal(X)=u^{-1}\left(\inf_{Q\in \Pcal} \EQ{u(X)}+\alpha(Q)\right).
\end{equation*}
The robustified certainty equivalents and risk premia compound risk and ambiguity aversion (Ghirardato and Marinacci \cite{GM02}).
We prove that under Pareto optimal risk sharing in an expanding pool of risks the robust risk premium converges to zero in the homothetic case,
but, for non-trivial $\alpha$, will not vanish in the limit in the variational case, in which case it converges to
$\Ucal(v+X_1)-\Vcal(v+X_1)$,
with
\begin{equation*}
\Vcal(X)=u^{-1}\left(\inf_{Q\in \Pcal} u(\EQ{X})+\alpha(Q)\right),
\end{equation*}
and we analyze the corresponding convergence rates.

Our convergence results may be compared to the convergence results obtained by F\"ollmer and Knispel \cite{FK11a,FK11b}.
These authors analyze the limiting behavior of the risk capital 
per financial position,
when computing capital requirements for large and expanding portfolios of i.i.d.\ financial positions, 
in the \textit{absence} of optimal risk sharing
(and in the more restrictive setting of convex measures of risk rather than
the general setting provided by homothetic and variational preferences, as is considered here).
This seemingly related problem
requires much different techniques
and leads to completely different results.
For example, without optimal risk sharing, the certainty equivalent per position
in an expanding portfolio of $n$ i.i.d.\ risks under expected utility with exponential utility
(which yields a prototypical example of a convex measure of risk, up to a sign change)
is constant in $n$.
By contrast, with Pareto optimal risk sharing,
the certainty equivalent of the average risk in an expanding pool of $n$ i.i.d.\ risks under
expected utility with exponential utility
(and the same coefficient of absolute risk aversion for all $n$ individuals)
converges to the plain expectation, as $n$ tends to infinity.

The remainder of this paper is organized as follows.
In Section \ref{sec:cert:equiv} we analyze the asymptotic behavior of the certainty equivalents
and risk premia with optimal risk sharing in an expanding pool under the expected utility model.
In Section \ref{sec:cert:rob} we study optimal risk sharing under ambiguity
and investigate the asymptotic behavior of the robustified certainty equivalents.
In Section \ref{sec:rp:rob} we consider robust risk premia
and analyze their limits and convergence rates.
Conclusions are in Section \ref{sec:con}.

\setcounter{equation}{0}

\section{Optimal Risk Sharing: Certainty Equivalents and Risk Premia}\label{sec:cert:equiv}


Let $u:\R\to [-\infty, \infty)$ be a utility function.\footnote{We allow $u$ to take the value $-\infty$ in order to incorporate utility functions with bounded domains such as power utilities or logarithmic utilities, etc.}
Unless explicitly stated otherwise, we assume all utility functions considered in this paper to be strictly increasing on their domain $\dom u:=\{x\in \R\mid u(x)>-\infty\}$, concave and twice continuously differentiable on the interior of their domain $\operatorname{int} \dom u$.
We denote by $u^{-1}$ the inverse of $u$.
The inverse is well-defined on the image $\operatorname{Im} u \cap \R$ since $u$ is strictly increasing on $\dom u$,
and we extend $u^{-1}$ to $\operatorname{Im} u$ by setting $u^{-1}(-\infty):=-\infty$.
Moreover, we assume that $u^{-1}$ is continuously differentiable on $\operatorname{int}\dom u^{-1}$.

Let $(\Omega,{\cal F},P)$ denote some fixed probability space and $\E{\cdot}$ the expectation with respect to $P$.
Furthermore, let $X\in L^1:=L^1(\Omega,{\cal F},P)$\footnote{Throughout the paper, for the sake of brevity, we will stick to the convention of not differentiating between random variables and the $P$-almost sure equivalence classes they induce.}.
Consider an agent whose preferences are described by the expected utility criterion $\E{u(X)}$, with a subjective utility function $u$.
The \textit{certainty equivalent} corresponding to $u$ is given by
\begin{equation}\label{ce:eq2} U(X):=u^{-1}(\E{u(X)}),\quad  X\in L^1,\end{equation}
which takes values in $[-\infty,\infty)$ (by Jensen's inequality).
Suppose the agent has initial wealth $v\geq 0$ and considers taking an additional risk $X$.
The associated \textit{risk premium}, $\pi(v,X)$, is then obtained as
the solution to the equivalent utility equation
\begin{equation}\label{eq:risk:prem}u(v+\E{X}-\pi(v,X))=\E{u(v+X)}, \; \mbox{i.e.,}\quad \pi(v,X)=v+\E{X}-U(v+X).\end{equation}
The risk premium makes the agent indifferent between taking the risk on the one hand
and earning the expectation of the risk minus the risk premium with certainty on the other.

A special case of interest occurs if we consider exponential utility, $u(x)=1-\exp(-\gamma x)$, $\gamma>0$,
which exhibits constant absolute risk aversion (CARA), because $-u''(x)/u'(x)=\gamma$.
Then the certainty equivalent is given by
\begin{equation}\label{eq:entropicce}U(X)=-\frac{1}{\gamma}\log \E{\exp(-\gamma X)},\end{equation}
with $U(X)=\E{X}$ when $\gamma\downarrow 0$ and $U(X)=\essinf{X}$ when $\gamma\uparrow\infty$,
and which is non-increasing in $\gamma$.
It corresponds to minus the entropic measure of risk
(F\"ollmer and Schied \cite{FS11}), 
or minus the exponential premium for the loss $-X$ (Gerber \cite{G79}, Goovaerts, de Vylder and Haezendonck \cite{GdVH84} and Goovaerts et al. \cite{GKLT04}).
It is particularly popular in decision theory (see e.g., Gollier \cite{G01})
and financial mathematics (see e.g., Rouge and El Karoui \cite{ReK00}, Mania and Schweizer \cite{MS05} and the references therein). 
The corresponding risk premium is given by
\begin{equation*}\pi(v,X)=\E{X}+\frac{1}{\gamma}\log \E{\exp(-\gamma X)},\end{equation*}
which is independent of $v$.

Optimal risk sharing under the expected utility model was studied e.g., by
Borch \cite{B62}, Wilson \cite{W68}, DuMouchel \cite{D68}, Gerber \cite{G78,G79}, B\"uhlmann and Jewell \cite{BJ79}
and Gerber and Pafumi \cite{GP98}.
We consider a market or pool of $n$ expected utility maximizers with identical utility function $u$
and with aggregate random endowment $W$.
We are interested in the problem of finding the ``most efficient'' subdivision of $W$
among the $n$ agents.
We let ${\mathbb A}(W):=\{(Y_1,\ldots, Y_n)\in L^1\mid \sum_iY_i=W\}$ be the set of all possible (full) allocations of $W$.
The following lemma and proposition prove Pareto optimality and uniqueness of the proportional (equal, $1/n$) risk sharing rule
under mild conditions:
\begin{lemma}\label{lem:exput1}
Suppose that all $n$ agents apply the same expected utility criterion $\E{u(X)}$, $X\in L^1$, and that the aggregate random endowment $W$ satisfies $W/n\in \operatorname{int} \dom u$ $P$-a.s.\ and $\E{u(W/n)}\in \R$.
Then, the allocation which assigns the share $W/n$ of $W$ to each agent is Pareto optimal.
Indeed we have that
\begin{equation}\label{eq:unique:exput}
n\E{u\left(\frac{W}{n}\right)} = \max_{(Y_1,\ldots, Y_n)\in {\mathbb A}(W)}\sum_{i=1}^n \E{u(Y_i)}.
\end{equation}
Furthermore, if $u$ is strictly concave on $\dom u$, then the allocation $(W/n,\ldots, W/n)$ is the only solution to \eqref{eq:unique:exput}.
\end{lemma}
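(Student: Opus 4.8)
The plan is to reduce the whole statement to a pointwise application of Jensen's inequality for the concave function $u$, followed by monotonicity of the expectation. First I would record that, since $u$ is concave on $\R$ with values in $[-\infty,\infty)$ and $\dom u\neq\emptyset$, fixing any $x_0\in\operatorname{int}\dom u$ gives the affine majorant $u(x)\le u(x_0)+u'(x_0)(x-x_0)$ for all $x\in\R$; hence $u(Y)^+\in L^1$ for every $Y\in L^1$, so $\E{u(Y)}$ is well-defined in $[-\infty,\infty)$. In particular every term $\E{u(Y_i)}$ appearing in \eqref{eq:unique:exput} makes sense, and for the equal allocation $\sum_{i=1}^n\E{u(W/n)}=n\E{u(W/n)}\in\R$ by hypothesis.

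Next, for an arbitrary allocation $(Y_1,\dots,Y_n)\in{\mathbb A}(W)$ I would apply Jensen's inequality $P$-a.s.: since $\frac1n\sum_iY_i=\frac{W}{n}$, concavity gives
\[
\frac1n\sum_{i=1}^n u(Y_i)\;\le\; u\!\left(\frac{W}{n}\right)\qquad P\text{-a.s.},
\]
with the left-hand side interpreted as $-\infty$ whenever some $Y_i\notin\dom u$. Taking expectations (legitimate by the integrability of the affine majorant) and using $\E{u(W/n)}\in\R$ yields $\sum_{i=1}^n\E{u(Y_i)}\le n\E{u(W/n)}$. Since the equal allocation $(W/n,\dots,W/n)\in{\mathbb A}(W)$ attains this bound, \eqref{eq:unique:exput} follows at once. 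Pareto optimality is then immediate: if some feasible allocation satisfied $\E{u(Y_i)}\ge\E{u(W/n)}$ for all $i$ with strict inequality for at least one $i$, summing over $i$ would contradict the bound just derived.

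For the uniqueness claim under strict concavity, suppose $(Y_1,\dots,Y_n)\in{\mathbb A}(W)$ also attains the maximum in \eqref{eq:unique:exput}. Then the nonnegative random variable $n\,u(W/n)-\sum_{i=1}^n u(Y_i)$ has expectation zero, hence vanishes $P$-a.s. On the $P$-full event $\{W/n\in\operatorname{int}\dom u\}$ the value $u(W/n)$ is finite, which forces every $Y_i\in\dom u$ there, and then strict concavity of $u$ turns the $P$-a.s.\ Jensen inequality into an equality only where $Y_1=\dots=Y_n$; combined with $\sum_iY_i=W$ this gives $Y_i=W/n$ $P$-a.s.\ for all $i$. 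I do not expect a genuine obstacle; the only points requiring care are the bookkeeping around the value $-\infty$ (well-definedness of the $\E{u(Y_i)}$ and the $P$-a.s.\ finiteness of $u(W/n)$, which is exactly what $W/n\in\operatorname{int}\dom u$ $P$-a.s.\ delivers) and the standard but slightly delicate passage from a pointwise inequality that is an equality in expectation to $P$-a.s.\ equality of the integrands.
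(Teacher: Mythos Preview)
Your proposal is correct and follows essentially the same route as the paper's own proof: both reduce the inequality to the pointwise concavity (discrete Jensen) bound $\tfrac1n\sum_i u(Y_i)\le u(W/n)$ and then pass to expectations, with uniqueness read off from strict concavity. You supply more care than the paper does (the affine-majorant argument for well-definedness of $\E{u(Y_i)}$ and the explicit ``nonnegative with zero mean $\Rightarrow$ zero a.s.'' step for uniqueness), but there is no genuine methodological difference.
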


\begin{proof}
Let $(Y_1,\ldots,Y_n)$ be an allocation of $W$ such that
$\sum_{i=1}^n \E{u(Y_i)}\in \R$ (clearly, \eqref{eq:unique:exput} is trivially satisfied if $\sum_{i=1}^n \E{u(Y_i)}=-\infty$).
Then we must have $Y_i\in \dom u$ $P$-a.s.\ for all $i=1,\ldots, n$.
Now apply Jensen's inequality to obtain:
\begin{equation*}\frac{1}{n}\sum_{i=1}^n \E{u(Y_i)}\leq \E{u\left(\frac1n\sum_{i=1}^nY_i\right)}=\E{u\left(\frac{W}{n}\right)}.\end{equation*}
The final assertion about uniqueness is a direct consequence of the strict concavity of $u$.
\end{proof}

\begin{proposition}\label{prop:cert1}
Suppose that all $n$ agents apply the same certainty equivalent criterion $U$ as in \eqref{ce:eq2}.
Also suppose that the aggregate random endowment $W$ satisfies $W/n\in \operatorname{int} \dom u$ $P$-a.s.\ and $\E{u(W/n)}\in \R$. Then, the allocation which assigns the share $W/n$ of $W$ to each agent is Pareto optimal
from the perspective of the certainty equivalents.
Furthermore, if $n\geq 3$, $\dom u= \R$, and $\E{u(W)}\in \R$, then
\begin{equation}\label{eq:convolution} nU\left(\frac{W}{n}\right)=\max_{(Y_1,\ldots, Y_n)\in {\mathbb A}(W)}\sum_{i=1}^n U(Y_i),\end{equation}
and any Pareto optimal allocation of $W$ must maximize the right hand side of \eqref{eq:convolution}.
In that case, if $U$ is strictly concave,
\begin{equation*}\mbox{ i.e.,\ if $P(X\neq Y)>0$ implies $U(\lambda X+(1-\lambda)Y)>\lambda U(X)+(1-\lambda)U(Y)$ for all $\lambda\in (0,1)$,}
\end{equation*}then $W/n$ is the unique Pareto optimal allocation of $W$.
If $U$ is the certainty equivalent given by (\ref{eq:entropicce}), then $W/n$ is the unique Pareto optimal allocation of $W$ up to a reallocation of cash,
i.e., all Pareto optimal allocations are elements of $\{(W/n+m_1,\ldots, W/n+m_n)\mid \sum_{i=1}^n m_i=0\}$.
\end{proposition}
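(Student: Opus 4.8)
\emph{Proof strategy.}
The plan is to treat the assertions in turn, with Lemma~\ref{lem:exput1} driving the expected-utility layer. For the Pareto optimality of the proportional rule under the stated mild conditions: since $u^{-1}$ is strictly increasing on $\operatorname{Im} u$ and $\E{u(W/n)}\in\R$, an allocation $(Y_1,\ldots,Y_n)\in{\mathbb A}(W)$ dominates $(W/n,\ldots,W/n)$ in the certainty equivalents if and only if $\E{u(Y_i)}\ge\E{u(W/n)}$ for every $i$, with strict inequality for at least one $i$ (the value $-\infty$ being incompatible with domination as $U(W/n)\in\R$). Summing over $i$ then contradicts \eqref{eq:unique:exput}, so $(W/n,\ldots,W/n)$ is Pareto optimal.

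For \eqref{eq:convolution} under the additional hypotheses $n\ge 3$, $\dom u=\R$, $\E{u(W)}\in\R$, the inequality ``$\ge$'' is trivial (take $Y_i=W/n$). The reverse inequality cannot be obtained by Jensen applied to $U$ directly, since $U(X)=u^{-1}(\E{u(X)})$ need not be concave in $X$ (concavity of $U$ amounts to concavity of the absolute risk tolerance $-u'/u''$, which on $\dom u=\R$ would force $u$ to be exponential, hence is not assumed). I would instead reduce to comonotone allocations: by a comonotone improvement theorem (Landsberger and Meilijson~\cite{LM94}; Ludkovski and R\"uschendorf~\cite{LR08}; Filipovi\'c and Svindland~\cite{FS08}), for any allocation $(Y_i)$ of $W$ there is a comonotone one, $Y_i'=f_i(W)$ with each $f_i$ nondecreasing and $\sum_i f_i=\mathrm{id}$, such that $Y_i'$ second-order stochastically dominates $Y_i$ for each $i$; concavity and monotonicity of $u$ then give $\E{u(Y_i')}\ge\E{u(Y_i)}$, hence $\sum_iU(Y_i')\ge\sum_iU(Y_i)$. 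It then remains to bound $\sum_iU(f_i(W))$ over comonotone allocations by $nU(W/n)$, which I would do via a symmetrization among the $n$ nondecreasing slopes, exploiting $\tfrac1n\sum_if_i=\mathrm{id}/n$; this is the step in which $n\ge 3$ and $\dom u=\R$ should be used. For the clause that every Pareto optimal allocation maximizes the right-hand side of \eqref{eq:convolution}: the comonotone improvement of a Pareto optimal allocation must coincide with it in every individual certainty equivalent, so one may take the allocation comonotone; the symmetrization then moves it to $(W/n,\ldots,W/n)$ along a path on which all individual certainty equivalents are weakly increasing, forcing $\sum_iU(Y_i)=nU(W/n)$ to begin with.

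The uniqueness statements then follow by inspecting equality cases. If $U$ is strictly concave, applying strict concavity $n$ times gives $nU(W/n)=nU\big(\tfrac1n\sum_iY_i\big)\ge\sum_iU(Y_i)$ with equality only when $P(Y_i\neq Y_j)=0$ for all $i,j$; with \eqref{eq:convolution} and the previous paragraph this pins any Pareto optimal allocation down to $(W/n,\ldots,W/n)$. For the entropic $U$ of \eqref{eq:entropicce}, which is concave but not strictly so, write $e^{-\gamma W/n}=\prod_{i=1}^n(e^{-\gamma Y_i})^{1/n}$ for an allocation $(Y_i)$ of $W$; the generalized H\"older inequality gives $\E{e^{-\gamma W/n}}\le\prod_i\E{e^{-\gamma Y_i}}^{1/n}$, i.e.\ $\sum_iU(Y_i)\le nU(W/n)$, with equality exactly when the $e^{-\gamma Y_i}$ are pairwise proportional $P$-a.s., i.e.\ $Y_i-Y_j$ is $P$-a.s.\ constant for all $i,j$; so the Pareto optimal allocations are precisely those of the form $(W/n+m_1,\ldots,W/n+m_n)$ with $\sum_im_i=0$.

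The main obstacle I anticipate is the reverse inequality in \eqref{eq:convolution} together with the ``Pareto optimal $\Rightarrow$ maximizer'' assertion for a general, possibly non-concave, certainty equivalent $U$: while the passage to comonotone allocations is routine, the symmetrization over the $n$ slopes, the precise role of the hypothesis $n\ge 3$ (as opposed to $n\ge 2$), and the verification that the resulting path keeps all individual certainty equivalents weakly increasing are the delicate points; one must also keep track of $L^1$-integrability under the comonotone improvement and the symmetrization.
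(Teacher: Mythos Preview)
Your arguments for the Pareto optimality of the equal split and for the two uniqueness clauses are sound; the H\"older-inequality route in the entropic case is a clean alternative to the paper's appeal to ``strict concavity up to constants''.

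The genuine gap is in \eqref{eq:convolution} and the accompanying ``Pareto optimal $\Rightarrow$ maximizer'' clause. You correctly note that $U$ need not be concave and propose a comonotone reduction followed by an unspecified ``symmetrization over the $n$ slopes''. But after the comonotone reduction you still need $\sum_i U(f_i(W))\le nU(W/n)$, and averaging the $f_i$ over permutations is exactly the Jensen step for $U$ that is unavailable; you offer no substitute, and you concede that the role of $n\ge 3$ is unclear to you. I do not see how to close this line of argument.

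The paper proceeds quite differently, reversing the logic. Rather than bounding the sup-convolution from above, it invokes the scalarization of Pareto optima (Gerber~\cite{G79}): any Pareto optimal $(X_1,\ldots,X_n)$ maximizes $\sum_i\lambda_i U(Y_i)$ over ${\mathbb A}(W)$ for some weights $\lambda_i\ge 0$, not all zero. One then forces $\lambda_1=\cdots=\lambda_n$ by a cash-shifting trick: if $\lambda_1>\lambda_2$, test the allocation $Y_1=m$, $Y_2=-m$, $Y_3=3W/n$, $Y_4=\cdots=Y_n=W/n$. Since $\dom u=\R$ one has $U(\pm m)=\pm m$; since $\E{u(W)}\in\R$ and $u$ is concave one has $U(3W/n)>-\infty$; and $n\ge 3$ ensures a third agent is available to absorb the residual so that agents~1 and~2 can carry pure constants. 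Letting $m\to\infty$ drives the weighted sum to $+\infty$, contradicting optimality. With all weights equal, every Pareto optimum maximizes $\sum_i U(Y_i)$; applying this to the already-established Pareto optimum $(W/n,\ldots,W/n)$ yields \eqref{eq:convolution} at once.
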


\begin{proof}
The fact that $W/n$ is Pareto optimal from the perspective of the certainty equivalents
follows from Lemma \ref{lem:exput1} and the fact that $u^{-1}$ is strictly increasing on $\dom u^{-1}=\operatorname{Im} u\cap \R$.
Next, for any Pareto optimal allocation $(X_1,\ldots, X_n)$ there are weights $\lambda_1, \ldots, \lambda_n\geq 0$, not all equal to $0$, such that \begin{equation*}\sum_{i=1}^n\lambda_iU(X_i)=\sup_{(Y_1,\ldots, Y_n)\in {\mathbb A}(W)}\sum_{i=1}^n\lambda_iU(Y_i); \end{equation*}
see e.g., Gerber \cite{G79}.
Suppose that $n\geq 3$, $\dom u= \R$, and $\E{u(W)}\in \R$.
Then, in particular  $U(m)=m$ for all $m\in \R$.
Suppose that  $\lambda_1>\lambda_2$, then
\begin{eqnarray*}
\sum_{i=1}^n\lambda_iU(X_i)& = & \sup_{(Y_1,\ldots, Y_n)\in {\mathbb A}(W)}\sum_{i=1}^n\lambda_iU(Y_i) \\ &\geq & \sup_{m>0} (\lambda_1-\lambda_2)m +\lambda_3 U(3W/n) + (n-3)U(W/n),
\end{eqnarray*} where $U(3W/n)>-\infty$ follows from concavity of $u$ and $\E{u(W)}\in \R$.
Letting $m\to \infty$ yields a contradiction. Hence, $\lambda_1=\lambda_2$, and similarly it follows that indeed $\lambda_1= \lambda_2=\ldots = \lambda_n$. Consequently we may assume that $\lambda_i=1$ for all $i$. This proves \eqref{eq:convolution}. Now suppose that $U$ is in addition strictly concave, and that there is another Pareto optimal allocation $(X_1,\ldots, X_n)$ of $W$ different from $(W/n,\ldots, W/n)$. Then there is at least one $j\in \{1,\ldots, n\}$ such that $P(X_i\neq W/n)>0$.
By strict concavity of $U$, we have that
$$\sum_{i=1}^nU\left(\frac12 (X_i+W/n)\right)>\frac12 \sum_{i=1}^n \left(U(X_i)+U(W/n)\right) = n U(W/n),$$ which contradicts \eqref{eq:convolution}.
If $U$ is the certainty equivalent given by (\ref{eq:entropicce}), then $U$ is strictly concave up to constants in the sense that if $X-Y\not \in \R$, then $U(\lambda X+(1-\lambda)Y)>\lambda U(X)+(1-\lambda)U(Y)$ for all $\lambda\in (0,1)$.
This proves the last assertion.
\end{proof}

In the remainder of this section, we let $X_i\in L^1$, $i=1,2,\ldots$,
be an i.i.d.\ sequence of risks and let $S_n$ given by $S_n=\sum_{i=1}^n X_i$, $n\in \Nat$, be the aggregate risk in a pool.
Then, $S_n/n\to \E{X_1}$ $P$-a.s.\ as $n\to \infty$ by the strong law of large numbers.
With this benchmark result in mind,
and using Lemma \ref{lem:exput1} and Proposition \ref{prop:cert1},
we analyze the behavior 
of the certainty equivalents and risk premia,
induced by Pareto optimal risk sharing of the aggregate risk among $n$ cooperative agents with identical expected utility preferences,
as the pool expands.

\begin{lemma}\label{lem:upperbound}
Consider a certainty equivalent $U$ as in $\eqref{ce:eq2}$. Then $U(S_n/n)\leq \E{X_1}$ and $U(S_n/n)$ is increasing in $n$. Thus $\pi(v, \frac{S_n}{n})$ is decreasing in $n$.
\end{lemma}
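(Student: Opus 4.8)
The plan is to establish all three claims from Jensen's inequality together with the monotonicity of $u^{-1}$. For the upper bound, note first that the $X_i$ being i.i.d.\ gives $\E{S_n/n}=\E{X_1}$; concavity of $u$ and Jensen then yield $\E{u(S_n/n)}\le u(\E{S_n/n})=u(\E{X_1})$, and applying $u^{-1}$ (which is increasing on $\operatorname{Im}u\cap\R$ and extended by $u^{-1}(-\infty)=-\infty$) gives $U(S_n/n)=u^{-1}(\E{u(S_n/n)})\le\E{X_1}$.

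The monotonicity of $U(S_n/n)$ in $n$ is the substantive point, and the key idea is a symmetrization argument showing that $S_{n+1}/(n+1)$ is ``less dispersed'' than $S_n/n$ (in fact that $S_{n+1}/(n+1)$ is smaller in the convex order). Set $T_i:=\tfrac1n\sum_{j\ne i}X_j$ for $i=1,\dots,n+1$. Since the $X_i$ are i.i.d., each $T_i$ has the same distribution as $S_n/n$, and a direct count---each $X_k$ appears in exactly $n$ of the $n+1$ inner sums---gives $\tfrac1{n+1}\sum_{i=1}^{n+1}T_i=\tfrac1{n(n+1)}\sum_{i=1}^{n+1}\sum_{j\ne i}X_j=\tfrac1{n+1}S_{n+1}$. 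I would then apply Jensen pointwise to the concave $u$, namely $u\!\left(\tfrac1{n+1}\sum_{i=1}^{n+1}T_i\right)\ge\tfrac1{n+1}\sum_{i=1}^{n+1}u(T_i)$, take expectations, and use $T_i\overset{d}{=}S_n/n$ to get $\E{u(S_{n+1}/(n+1))}\ge\tfrac1{n+1}\sum_{i=1}^{n+1}\E{u(T_i)}=\E{u(S_n/n)}$; applying $u^{-1}$ yields $U(S_{n+1}/(n+1))\ge U(S_n/n)$. Integrability causes no trouble here: if $\E{u(S_n/n)}=-\infty$ the inequality is trivial, and otherwise this quantity lies in $\R$ by the bound just proved, so the manipulations are legitimate.

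For the risk premium, \eqref{eq:risk:prem} together with $\E{S_n/n}=\E{X_1}$ gives $\pi(v,S_n/n)=v+\E{X_1}-U(v+S_n/n)$, so it suffices to show $U(v+S_n/n)$ increases in $n$. This follows from exactly the argument above applied to $v+S_{n+1}/(n+1)=\tfrac1{n+1}\sum_{i=1}^{n+1}(v+T_i)$, using $v+T_i\overset{d}{=}v+S_n/n$.

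The one step that requires an idea rather than routine computation is the exchangeability identity writing $S_{n+1}/(n+1)$ as a convex combination of $n+1$ identically distributed copies of $S_n/n$; with that in hand, everything else reduces to Jensen's inequality and the monotonicity of $u^{-1}$. Incidentally, the same argument recovers the classical fact that $S_n/n$ is monotone in the convex order.
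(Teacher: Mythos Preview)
Your proof is correct and follows essentially the same approach as the paper. The paper uses the identical symmetrization identity (writing $S_{n+1}/(n+1)$ as an average of $n+1$ leave-one-out sums, each distributed as $S_n/n$) together with Jensen's inequality; your $T_i$ is precisely the paper's $S_n^i/n$, and your treatment of the risk premium by shifting to the i.i.d.\ sequence $v+X_i$ matches the paper's final step.
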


\begin{proof}
It follows from Jensen's inequality that always $U(S_n/n)\leq \E{X_1}$. Next we show that  $U(S_n/n)$ is increasing in $n$. Indeed, we may rewrite \begin{equation*}S_{n+1}=\frac{1}{n}\sum_{i=1}^{n+1}S_n^i,\quad \mbox{where} \quad S_n^i:=\sum_{j=1,\, j\neq i}^{n+1} X_j,\end{equation*}
and thus by concavity of $u$,
\begin{equation}\label{eq:conc:order}
\E{u\left(\frac{S_{n+1}}{n+1}\right)} = \E{u\left(\frac{1}{n+1}\sum_{i=1}^{n+1}\frac{S^i_{n}}{n}\right)}
\geq \frac{1}{n+1}\sum_{i=1}^{n+1}\E{u\left(\frac{S^i_{n}}{n}\right)}=\E{u\left(\frac{S_{n}}{n}\right)},
\end{equation}
because $S_n^i$ and $S_n$ are identically distributed under $P$.
Hence, $U(S_n/n)\leq U(S_{n+1}/(n+1))$ since $u^{-1}$ is increasing.
The final statement follows from applying the first statement of the lemma to the i.i.d.\ sequence $v+X_i$, and recalling the definition of the risk premium in \eqref{eq:risk:prem}.
\end{proof}

Lemma~\ref{lem:upperbound} shows that optimally pooling and relocating the aggregate risk reduces the risk premium.
Moreover, the following result shows that typically $U(S_n/n)\to \E{X_1}$ as $n\to \infty$,
i.e., the risk premium $\pi(v,\frac{S_n}{n})$ vanishes (converges to $0$) in the limit:

\begin{proposition}\label{ce:thm2} Suppose that $X_1\in L^2$, $X_1\in \op{int}\dom u$ $P$-a.s., $\E{u(X_1)}\in \R$, and that $u'(S_n/n)$ is bounded from above by some square integrable random variable independent of $n$.
Then,
\begin{equation}\label{ce:prop:eq2}
\limsup_{n\to \infty}\,\sqrt{n}\pi\left(v,\frac{S_n}{n}\right)=\limsup_{n\to \infty}\, \sqrt{n}\left(v+\E{X_1}-U\left(v+\frac{S_n}{n}\right)\right)\leq \sigma_P(X_1),
\end{equation}
where $\sigma_P(X_1):=\sqrt{\E{(X_1-\E{X_1})^2}}$ denotes the standard deviation of $X_1$.
In particular, \eqref{ce:prop:eq2} holds already in case $u$ is only once continuously differentiable (instead of twice).
\end{proposition}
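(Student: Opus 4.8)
The plan is to make a first-order (tangent) expansion of $u$ around the deterministic point $v+\E{X_1}$ and, using the defining equation of the risk premium, to show that the error it produces is of exact order $\sigma_P(X_1)/\sqrt n$. First I would record the elementary consequence of concavity that for all $x,y\in\op{int}\dom u$,
\[
u(y)\ \ge\ u(x)+u'(y)\,(y-x),
\]
obtained by integrating the nonincreasing function $u'$ over the interval with endpoints $x$ and $y$. Applying this with $x=v+\E{X_1}$ and $y=v+S_n/n$ -- standard arguments from $X_1\in\op{int}\dom u$ $P$-a.s., $v\ge0$ and $\E{u(X_1)}\in\R$ give $v+\E{X_1}\in\op{int}\dom u$ and $v+S_n/n\in\op{int}\dom u$ $P$-a.s. -- and taking expectations yields
\[
\E{u(v+S_n/n)}\ \ge\ u(v+\E{X_1})+\E{u'(v+S_n/n)\,(S_n/n-\E{X_1})}.
\]
The correction term is controlled by Cauchy--Schwarz: in absolute value it is at most $\|u'(v+S_n/n)\|_{L^2}\cdot\|S_n/n-\E{X_1}\|_{L^2}=\|u'(v+S_n/n)\|_{L^2}\,\sigma_P(X_1)/\sqrt n$, using $\var{S_n/n}=\sigma_P(X_1)^2/n$. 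Here the domination hypothesis is exactly what makes this $L^2$-norm finite and uniformly bounded: since $u'$ is positive and nonincreasing and $v\ge0$, one has $0<u'(v+S_n/n)\le u'(S_n/n)\le Z$ for a fixed $Z\in L^2$ independent of $n$.

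Next I would turn this lower bound on $\E{u(v+S_n/n)}$ into an upper bound on the risk premium $\pi_n:=\pi(v,S_n/n)$. By \eqref{eq:risk:prem}, $\E{u(v+S_n/n)}=u(v+\E{X_1}-\pi_n)$, with $\pi_n\ge0$ and $\E{u(v+S_n/n)}\in\R$ both following from Lemma~\ref{lem:upperbound} (applied to the i.i.d.\ sequence $v+X_i$) together with $\E{u(X_1)}\in\R$; in particular $v+\E{X_1}-\pi_n\in\dom u$. The inequality above then rearranges to $u(v+\E{X_1})-u(v+\E{X_1}-\pi_n)\le\|u'(v+S_n/n)\|_{L^2}\,\sigma_P(X_1)/\sqrt n$, while concavity again gives $u(v+\E{X_1})-u(v+\E{X_1}-\pi_n)=\int_{v+\E{X_1}-\pi_n}^{v+\E{X_1}}u'(t)\,dt\ge u'(v+\E{X_1})\,\pi_n$ with $u'(v+\E{X_1})>0$. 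Combining, $\sqrt n\,\pi_n\le\|u'(v+S_n/n)\|_{L^2}\,\sigma_P(X_1)/u'(v+\E{X_1})$.

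Finally I would pass to the limit. By the strong law of large numbers $S_n/n\to\E{X_1}$ $P$-a.s., and since $u'$ is continuous at the interior point $v+\E{X_1}$ this gives $u'(v+S_n/n)^2\to u'(v+\E{X_1})^2$ $P$-a.s.; as $u'(v+S_n/n)^2\le Z^2\in L^1$, dominated convergence yields $\|u'(v+S_n/n)\|_{L^2}\to u'(v+\E{X_1})$. Substituting into the last estimate gives $\limsup_{n\to\infty}\sqrt n\,\pi_n\le\sigma_P(X_1)$, which is \eqref{ce:prop:eq2}. The final assertion is then immediate: only concavity, monotonicity and $u\in C^1$ (continuity of $u'$) entered the argument, never the second derivative. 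The one step that requires genuine care -- the \emph{main obstacle} -- is the correction term: the crude bound $\|u'(v+S_n/n)\|_{L^2}\le\|Z\|_{L^2}$ only yields a constant proportional to $\sigma_P(X_1)$ but strictly larger than it, so the domination hypothesis must be used twice, once for integrability in Cauchy--Schwarz and once, via dominated convergence, to pin the limiting constant to the sharp value $u'(v+\E{X_1})$; verifying the domain conditions $v+\E{X_1}\in\op{int}\dom u$ and $v+S_n/n\in\op{int}\dom u$ $P$-a.s.\ is a second, routine, bookkeeping point.
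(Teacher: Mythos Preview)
Your proof is correct and follows the same core strategy as the paper: a first-order expansion of $u$ around $v+\E{X_1}$, control of the remainder by Cauchy--Schwarz (H\"older) yielding the factor $\sigma_P(X_1)/\sqrt n$, and dominated convergence to identify the limiting constant as $u'(v+\E{X_1})$, which then cancels.

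The execution differs in one respect worth noting. The paper writes the mean-value form $u(S_n/n)=u(\E{X_1})+u'(Y_n)(S_n/n-\E{X_1})$ with an intermediate point $Y_n$, and then makes a second Taylor expansion of $u^{-1}$ around $u(\E{X_1})$, introducing another intermediate point $y_n$; the limit $(u^{-1})'(y_n)\to 1/u'(\E{X_1})$ uses the standing hypothesis that $u^{-1}$ is $C^1$. You replace both steps by the elementary concavity inequalities $u(y)\ge u(x)+u'(y)(y-x)$ and $u(a)-u(a-\pi)\ge u'(a)\pi$, and work directly with the defining equation $u(v+\E{X_1}-\pi_n)=\E{u(v+S_n/n)}$, thereby never invoking $u^{-1}$ or its derivative. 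This is slightly more elementary---in particular it shows that the $C^1$ assumption on $u^{-1}$ is not needed for \eqref{ce:prop:eq2}---at the cost of being tailored to the one-sided bound (a $\limsup$), whereas the paper's two-sided Taylor machinery is set up so that the same computation also yields the exact limit in Theorem~\ref{ce:thm1}.
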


However, under some additional assumptions we can say even more about the rate of convergence:

\begin{theorem}\label{ce:thm1}
Suppose that $X_1\in L^4$, $X_1\in \op{int}\dom u$ $P$-a.s., $\E{u(X_1)}\in \R$, and that $u''$ can be controlled from below by some $\widetilde Y\in L^2$ on the range of $S_n/n$ in the following way: \begin{equation}\label{eq:control}\widetilde Y\leq \essinf\left\{u''(v+Y)\mid \mbox{$Y$ is a random variable s.t.\ } Y(\omega)\in \left[\E{X_1}, \frac{S_n(\omega)}{n}\right] \;P\mbox{-a.s.}\right\},\end{equation}
with the convention that for $a<b$ we set $[b,a]:=[a,b]$.
Then,
\begin{equation}\label{ce:prop2:eq2}
\lim_{n\to \infty}\, n\pi\left(v,\frac{S_n}{n}\right) = \lim_{n\to \infty}\, n\left(v+\E{X_1}-U\left(v+\frac{S_n}{n}\right)\right)= \frac12 R(v+\E{X_1})\sigma_P^2(X_1),
\end{equation} where $R(x):=-u''(x)/u'(x)$ is the Arrow-Pratt coefficient of absolute risk aversion.
\end{theorem}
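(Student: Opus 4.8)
The plan is to perform a second-order Taylor expansion of the map $x\mapsto U(v+x)$ around $x=\E{X_1}$, evaluated at $x=S_n/n$, and then control the expansion terms using the moment hypotheses and the law of large numbers. Recall from Proposition~\ref{ce:thm2} (and Lemma~\ref{lem:upperbound}) that $\pi(v,S_n/n)\geq 0$, is decreasing in $n$, and $\sqrt{n}\,\pi(v,S_n/n)$ is bounded; in particular $n\,\pi(v,S_n/n)$ need not be bounded a priori, so the first order of business is to get a genuinely quadratic control. Writing $m:=\E{X_1}$ and $Z_n:=S_n/n-m$ (so $\E{Z_n}=0$ and $\E{Z_n^2}=\sigma_P^2(X_1)/n$), I would first expand $u$ itself: $u(v+S_n/n)=u(v+m)+u'(v+m)Z_n+\tfrac12 u''(\xi_n)Z_n^2$ for an intermediate $\xi_n$ between $v+m$ and $v+S_n/n$. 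Taking expectations kills the linear term, so
\begin{equation*}
\E{u(v+S_n/n)}=u(v+m)+\tfrac12\,\E{u''(\xi_n)Z_n^2}.
\end{equation*}

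Next I would apply $u^{-1}$ and expand it to first order around $u(v+m)$: since $u^{-1}$ is $C^1$ on the interior of its domain with derivative $1/u'$, and the increment $\E{u(v+S_n/n)}-u(v+m)=\tfrac12\E{u''(\xi_n)Z_n^2}=O(1/n)\to 0$, we get
\begin{equation*}
U(v+S_n/n)=u^{-1}\big(u(v+m)+\tfrac12\E{u''(\xi_n)Z_n^2}\big)=v+m+\frac{1}{u'(v+m)}\cdot\tfrac12\,\E{u''(\xi_n)Z_n^2}+o\!\left(\tfrac1n\right).
\end{equation*}
Multiplying by $n$ and using $\pi(v,S_n/n)=v+m-U(v+S_n/n)$ gives
\begin{equation*}
n\,\pi(v,S_n/n)=-\frac{1}{2u'(v+m)}\,n\,\E{u''(\xi_n)Z_n^2}+o(1),
\end{equation*}
so everything reduces to showing $n\,\E{u''(\xi_n)Z_n^2}\to u''(v+m)\sigma_P^2(X_1)$, i.e.\ to justifying that $u''(\xi_n)$ can be replaced by $u''(v+m)$ inside the expectation, after rescaling by $n$.

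The main obstacle is precisely this last convergence, because $n Z_n^2$ is an unbounded family and $u''$ is only continuous (possibly unbounded near the boundary of $\dom u$), so one cannot simply invoke boundedness. This is where hypothesis~\eqref{eq:control} and $X_1\in L^4$ enter. I would write $u''(\xi_n)=u''(v+m)+\big(u''(\xi_n)-u''(v+m)\big)$ and split $n\E{u''(\xi_n)Z_n^2}$ into the main term $u''(v+m)\cdot n\E{Z_n^2}=u''(v+m)\sigma_P^2(X_1)$ plus an error term $n\E{(u''(\xi_n)-u''(v+m))Z_n^2}$. For the error, note $\xi_n$ lies in the random interval $[\E{X_1},S_n/n]$ shifted by $v$, so $\xi_n\to v+m$ $P$-a.s.\ by the strong law of large numbers, whence $u''(\xi_n)-u''(v+m)\to 0$ $P$-a.s.\ by continuity of $u''$. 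The hypothesis~\eqref{eq:control} supplies the domination: $u''(\xi_n)\geq \widetilde Y\in L^2$, and since $u''\leq 0$ this yields $|u''(\xi_n)-u''(v+m)|\leq |\widetilde Y|+|u''(v+m)|$, a fixed $L^2$ random variable. Meanwhile $nZ_n^2=(\sqrt n Z_n)^2$ where $\sqrt n Z_n$ is bounded in $L^4$ (this is where $X_1\in L^4$ is used, via $\E{(\sqrt n Z_n)^4}\leq C$ by the Marcinkiewicz–Zygmund or Rosenthal inequality, or just by direct fourth-moment expansion of $S_n-nm$), hence $nZ_n^2$ is uniformly integrable. By Cauchy–Schwarz, $|u''(\xi_n)-u''(v+m)|\cdot nZ_n^2$ has its expectation bounded by $\||u''(\xi_n)-u''(v+m)|\|_{L^2}\cdot \|nZ_n^2\|_{L^2}$, where the first factor $\to 0$ by dominated convergence and the second factor stays bounded. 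Hence the error term vanishes, giving $n\E{u''(\xi_n)Z_n^2}\to u''(v+m)\sigma_P^2(X_1)$, and substituting back yields
\begin{equation*}
\lim_{n\to\infty} n\,\pi(v,S_n/n)=-\frac{u''(v+m)}{2u'(v+m)}\,\sigma_P^2(X_1)=\tfrac12 R(v+\E{X_1})\sigma_P^2(X_1),
\end{equation*}
as claimed. A minor technical point to address along the way is that the remainder in the first-order expansion of $u^{-1}$ is genuinely $o(1/n)$: since the increment fed into $u^{-1}$ is $O(1/n)$ and $u^{-1}$ is $C^1$, the remainder is $o(1/n)$, which suffices after multiplication by $n$; one should just confirm the increment stays in a compact subinterval of $\op{int}\dom u^{-1}$ for $n$ large, again via the strong law together with $\E{u(X_1)}\in\R$.
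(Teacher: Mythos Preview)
Your proposal is correct and follows essentially the same route as the paper: a second-order Taylor expansion of $u$ around $v+\E{X_1}$, then a first-order expansion of $u^{-1}$, followed by splitting $u''(\xi_n)=u''(v+m)+(u''(\xi_n)-u''(v+m))$ and controlling the error via Cauchy--Schwarz, with dominated convergence (using the control \eqref{eq:control}) for the $L^2$ norm of the $u''$ difference and the $L^4$ hypothesis for the $L^2$ bound on $nZ_n^2$. The only cosmetic difference is that the paper uses the Lagrange remainder $(u^{-1})'(z_n)$ for the $u^{-1}$ step (and then shows $z_n\to u(v+m)$), which sidesteps having to separately verify that the increment is $O(1/n)$ before invoking the $o(1/n)$ Peano remainder; your acknowledgement of this ``minor technical point'' at the end is exactly what is needed to close that small circularity.
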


Note that the bound on the right hand side of \eqref{ce:prop:eq2} does not depend on the utility function $u$,
contrary to the far right hand side of \eqref{ce:prop2:eq2}.
Also, note that \eqref{eq:control} is always satisfied if, for instance, $X_1$ with $\essinf X_1\in \operatorname{int} \dom u$ is bounded,
since $u''$ is continuous and thus bounded on the compact set $\{v\}+[\essinf X_1,\esssup X_1]$.

\begin{proof} [of Proposition~\ref{ce:thm2} and Theorem~\ref{ce:thm1}] Without loss of generality we may assume that $v=0$, because if $X_1$ satisfies the requirements of Proposition~\ref{ce:thm2} or Theorem~\ref{ce:thm1}, so does $X_1+v$, and $\sigma_P(X_1+v)=\sigma_P(X_1)$.
We compute a Taylor expansion of $u$ around $\E{X_1}$, either to the first or to the second order, which yields
\begin{equation}\label{eq1:alt}u\left(\frac{S_n}{n}\right)=u(\E{X_1})+u'(Y_n)\left(\frac{S_n}{n}-\E{X_1}\right),\end{equation}
and
\begin{equation}\label{eq1}u\left(\frac{S_n}{n}\right)=u(\E{X_1})+u'(\E{X_1})\left(\frac{S_n}{n}-\E{X_1}\right)+\frac12 u''(Z_n)\left(\frac{S_n}{n}-\E{X_1}\right)^2,\end{equation}
where $Y_n$ and $Z_n$ are random variables taking values between $\E{X_1}$ and $\frac{S_n}{n}$.
(Note that $S_n/n\in \op{int}\dom u$ as $X_1\in \op{int}\dom u$.)
Taking expectations in \eqref{eq1:alt} and \eqref{eq1} we arrive at
\begin{equation*}\label{eq2:alt}\E{u\left(\frac{S_n}{n}\right)}=u(\E{X_1})+\E{u'(Y_n)\left(\frac{S_n}{n}-\E{X_1}\right)},\end{equation*}
and
\begin{equation*}\label{eq2}\E{u\left(\frac{S_n}{n}\right)}=u(\E{X_1})+0+\frac12 \E{u''(Z_n)\left(\frac{S_n}{n}-\E{X_1}\right)^2}.\end{equation*}
Invoking a Taylor expansion of $u^{-1}$ around the point $u(\E{X_1})$ verifies that
\begin{eqnarray}U\left(\frac{S_n}{n}\right)&=&u^{-1}\left(\E{u\left(\frac{S_n}{n}\right)}\right)\nonumber \\
&=&u^{-1}\left(u(\E{X_1})+ \E{u'(Y_n)\left(\frac{S_n}{n}-\E{X_1}\right)}\right)\nonumber \\
&=&u^{-1}\circ u(\E{X_1})+(u^{-1})'(y_n) \E{u'(Y_n)\left(\frac{S_n}{n}-\E{X_1}\right)}\nonumber \\
&=&\E{X_1}+(u^{-1})'(y_n)\E{u'(Y_n)\left(\frac{S_n}{n}-\E{X_1}\right)},\label{eq3:alt}\end{eqnarray}
for some real number $y_n$ between $u(\E{X_1})$ and $u(\E{X_1})+ \E{u'(Y_n)(\frac{S_n}{n}-\E{X_1})}$,
and
\begin{eqnarray}U\left(\frac{S_n}{n}\right)&=&u^{-1}\left(\E{u\left(\frac{S_n}{n}\right)}\right)\nonumber \\
&=&u^{-1}\left(u(\E{X_1})+\frac12 \E{u''(Z_n)\left(\frac{S_n}{n}-\E{X_1}\right)^2}\right)\nonumber \\
&=&u^{-1}\circ u(\E{X_1})+(u^{-1})'(z_n)\frac12 \E{u''(Z_n)\left(\frac{S_n}{n}-\E{X_1}\right)^2}\nonumber  \\
&=&\E{X_1}+(u^{-1})'(z_n)\frac{u''(\E{X_1})\sigma^2_P(X_1)}{2n} \nonumber\\
&& \quad+ (u^{-1})'(z_n)\frac12 \E{(u''(Z_n)-u''(\E{X_1}))\left(\frac{S_n}{n}-\E{X_1}\right)^2},  \label{eq3}\end{eqnarray}
where $z_n\in [u(\E{X_1})+\frac12 \E{u''(Z_n)(\frac{S_n}{n}-\E{X_1})^2}, u(\E{X_1})]$.
By an application of H\"older's inequality to the error term in \eqref{eq3:alt} we obtain
\begin{equation}\label{eq:est2}
|(u^{-1})'(y_n)|\E{|u'(Y_n)\left(\frac{S_n}{n}-\E{X_1}\right)|} \leq |(u^{-1})'(y_n)|\sqrt{\E{u'(Y_n)^2}}\frac{\sigma_P(X_1)}{\sqrt{n}}.
\end{equation}
The dominated convergence theorem implies that $\sqrt{\E{u'(Y_n)^2}}\to u'(\E{X_1})$, because $Y_n\to \E{X_1}$ $P$-a.s.,\ and $0\leq u'(Y_n)\leq u'(\E{X_1})\vee u'(S_n/n)$ where $u'(S_n/n)$ is bounded by some square integrable random variable independent of $n$ by assumption. Also $y_n\to u(\E{X_1})$ and hence $$(u^{-1})'(y_n)=\frac{1}{u'(u^{-1}(y_n))}\to \frac{1}{u'(\E{X_1})},$$ which proves \eqref{ce:prop:eq2}.
If $X_1$ has finite fourth moment, then applying H\"older's inequality to the error term in \eqref{eq3} yields
\begin{align}
&  \E{|u''(Z_n)-u''(\E{X_1})|\left(\frac{S_n}{n}-E[X_1]\right)^2}\nonumber \\ \leq \quad & \sqrt{\E{\left(u''(Z_n)-u''(\E{X_1})\right)^2} }\frac{ \sqrt{nM_P(X_1)+3n(n-1)\sigma^2_P(X_1)^2}}{n^2}\nonumber \\ \leq\quad & \frac{2}{n} \sqrt{\E{\left(u''(Z_n)-u''(\E{X_1})\right)^2} } \sqrt{M_P(X_1)+\sigma^2_P(X_1)^2},\label{eq:est1}
\end{align} where $M_P(X_1):=\E{(X_1-\E{X_1})^4}$.
We have $\E{\left(u''(Z_n)-u''(\E{X_1})\right)^2} \to 0$ by dominated convergence since $|u''(Z_n)-u''(\E{X_1})|\leq |\widetilde Y| + |u''(\E{X_1})|$ for some square integrable $\widetilde Y$ by \eqref{eq:control}.
Also $z_n\to u(\E{X_1})$ for $n\to \infty$, and using $(u^{-1})'(z_n)=1/u'(u^{-1}(z_n))$, we conclude from \eqref{eq3} that $$n\left(U\left(\frac{S_n}{n}\right)-\E{X_1}\right)\to \frac12 \frac{u''(\E{X_1})}{u'(\E{X_1})}\sigma^2_P(X_1).$$
\end{proof}

\begin{remark}\label{rem:Pratt}(Relation to Pratt~\cite{P64}.)
In his seminal paper, Pratt \cite{P64} shows that the risk premium satisfies \begin{equation}\label{eq:pratt}\pi(v,X)=\frac{1}{2}R(v+\E{X})\sigma^2_P(X) + o(\sigma^2_P(X)).\end{equation}
However, note that \eqref{ce:prop2:eq2} does not follow from \eqref{eq:pratt} since \eqref{eq:pratt} implies $$ \pi\left(v,\frac{S_n}{n}\right)=\frac{1}{2n}R(v+\E{X_1})\sigma^2_P(X_1)+ o\left(\frac1n\sigma^2_P(X_1)\right),$$
and the estimate $o(\frac1n\sigma^2_P(X_1))$ is too rough.
\end{remark}

\begin{examples}\label{ex:cert:equiv}
Consider the exponential utility $u(x)=1-e^{-\gamma x}$ for some $\gamma>0$.
Then \eqref{ce:prop2:eq2} becomes
\begin{equation*}\label{eq:exexponential}\lim_{n\to \infty}\, n\, \pi\left(v, \frac{S_n}{n}\right)= \frac{1}{2}\gamma \sigma_P^2(X_1).\end{equation*}
For power utility $u(x)=\frac{x^{1-\chi}-1}{1-\chi}$, $x> 0$, where $\chi>0$, $\chi\neq 1$, (and $u(x)=-\infty$, $x\leq 0$), \eqref{ce:prop2:eq2} becomes
\begin{equation*}\label{eq:expower}\lim_{n\to \infty}\, n\, \pi\left(v, \frac{S_n}{n}\right)= \frac{\chi}{2(v+\E{X_1})} \sigma_P^2(X_1).\end{equation*}
In case of the logarithmic utility $u(x)=\log x$, $x> 0$, (and $u(x)=-\infty$, $x\leq 0$), \eqref{ce:prop2:eq2} becomes
\begin{equation*}\label{eq:exlog}\lim_{n\to \infty}\, n\, \pi\left(v,\frac{S_n}{n}\right)= \frac{1}{2(v+\E{X_1})} \sigma_P^2(X_1).\end{equation*}
\end{examples}

\begin{remark}\label{rem:speed}
Suppose that the i.i.d.\ assumption on the sequence of random variables (imposed just above Lemma \ref{lem:upperbound}) is not satisfied.
For instance, let $X$ be a standard normal random variable and consider the sequence $X_i=(-1)^iX$, $i\in \Nat$.
Then all $X_i$ are identically distributed but apparently not independent.
Clearly, $S_n=\sum_{i=1}^nX_i=-X$ whenever $n$ is odd and $S_n=0$ otherwise.
Hence, $S_n/n\to 0=\E{X_1}$ for $n\to \infty$.
Take $u(x)=1-e^{-\gamma x}$.
Then for odd $n$ we compute $nU(S_n/n)=-\gamma/(2n)$ whereas for even $n$ we have $nU(S_n/n)=0$.
Thus, the left hand side of \eqref{ce:prop2:eq2} for this case equals $\lim_{n\to \infty}-nU(S_n/n)=0$.
However, the right hand side of \eqref{ce:prop2:eq2} would be $\gamma/2$; see Examples~\ref{ex:cert:equiv}.
Hence, requiring independence of the sequence $X_i$ is crucial in Theorem~\ref{ce:thm1}.
(This counterexample also generalizes to the robust case considered in the next section.)
\end{remark}

\begin{example}\label{ex:entropic}(Optimal risk sharing for entropic measures of risk.)
Consider optimal risk sharing among $n$ agents that apply the same certainty equivalent criterion $U$ with exponential utility
and aggregate risk $S_n$.
Then,
\begin{equation*}\rho_{P,\gamma }(X):=-U(X)=\frac{1}{\gamma} \log\E{e^{-\gamma X}},\end{equation*}
is the so-called entropic risk measure.
According to Proposition \ref{prop:cert1}, the optimal relocation, $Y_{i}^*$,
is given by $Y_{i}^* = S_n / n, \; i=1,\ldots,n$.
It follows that, after optimal exchange and relocation of risks,
the cooperating pool of agents,
all using the same entropic measure of risk,
can also be seen to use the entropic measure of risk at the aggregate level, with parameter $\gamma_n:=\gamma/n$, in the sense that
\begin{align*}n\rho_{P,\gamma}(Y_{i}^*)&= n(1/\gamma) \log \E{\exp(-\gamma Y_{i}^*)}=n(1/\gamma) \log \E{\exp(-\gamma(1/n) S_n)}\\
&=(1/\gamma_n) \log \E{\exp(-\gamma_n S_n)} = \rho_{P,\gamma_n}(S_n).
\end{align*}
Hence, Proposition~\ref{ce:thm2} implies that under Pareto optimal risk sharing,
and thus different from F\"ollmer and Knispel \cite{FK11a,FK11b},
the pooling, exchange and relocation of risks has the effect that the gross entropic
risk measure per position decreases to minus the expectation, as follows:
\begin{equation*}
\lim_{n\rightarrow\infty} \frac{1}{n}\rho_{P,\gamma_n}(S_n)= \lim_{n\rightarrow\infty} \rho_{P,\gamma}(S_n/n)
= \E{-X_1}.
\end{equation*}
By contrast, in F\"ollmer and Knispel \cite{FK11a,FK11b}, who consider $\frac1n\rho_{P,\gamma}(S_n)$,
\begin{equation*}\frac1n\rho_{P,\gamma}(S_n)=\frac{1}{n}(1/\gamma) \log \E{\exp(-\gamma S_n)}= (1/\gamma) \log \E{\exp(-\gamma X_1)}=\rho_{P,\gamma}(X_1),
\end{equation*}
and hence \begin{equation*}\lim_{n\to \infty}\frac1n\rho_{P,\gamma}(S_n)= \rho_{P,\gamma}(X_1).\end{equation*}
\end{example}



\setcounter{equation}{0}

\section{Optimal Risk Sharing under Ambiguity: Robust Certainty Equivalents}\label{sec:cert:rob}



So far we have fixed a reference probabilistic model $P$ which is assumed to be (objectively or subjectively) known.
Let us now consider the 
situation of model uncertainty in which $P$ is replaced by a
class $\mathcal{P}$ of probabilistic models on the measurable space $(\Omega,\mathcal{F})$.
Such a situation occurs naturally, as follows.
The utility function $u$ is determined by preferences on the constants $\R$ only, and we thus assume that it is (subjectively) given.
The assumption that the sequence of risks $(X_i)_{i\in \Nat}$ is i.i.d.\ is typical and commonly adopted, of course depending on the collection of data,
and corresponds to the setting of the previous section.
However, the choice of the ``right'' probabilistic model/distribution may be a very delicate problem.
Hence, one may consider all probabilistic models such that the i.i.d.\ assumption on $(X_i)_{i\in \Nat}$ is satisfied,
maybe reduce this class further due to some additional probabilistic information,
and yet end up with a class $\Pcal$ of probabilistic models,
considered as possible generators of the observed $(X_i)_{i\in \Nat}$,
that contains more than only a single element.

In probabilistic model terms one may start with any class of probability measures $\widetilde \Pcal$
on some measurable space $(\Sigma, {\mathcal A})$ and corresponding distributions of the random variable $X$ on $(\Sigma, {\mathcal A})$ that one wants to consider.
Now one may think of $\Pcal$ as the collection of probability measures on the product space $(\Omega, {\cal F}):=(\Sigma^\Nat,{\cal A}^{\otimes \Nat})$ such that each $P\in \Pcal$ is the Kolmogorov extension for some $Q\in \widetilde \Pcal$ of the family of product probability measures $\{Q^{\otimes n}\mid n\in \Nat\}$, i.e.\ $P_{|{\cal A}^{\otimes n}}=Q^{\otimes n}$; see, for instance, Dudley \cite{D02}~Section~8.2.
Then, the sequence $X_i:\Omega^\Nat\to \R$ given by $X_i(\omega_1,\omega_2,\ldots):=X(\omega_i)$ is i.i.d.\ under each $P\in \Pcal$.


Typical examples of $\mathcal{P}$
are for instance:
\begin{examples}
\begin{itemize}
\item[(i)] A discrete set of probability measures.
    In this case, the agent considers finitely many probability measures $P_{j}\in\mathcal{P}$, $j=1,\ldots,J$, $J\in\mathbb{N}$.
    Such discrete sets are particularly popular in e.g.\ robust optimization and operations research.
\item[(ii)] Parametric families $\mathcal{P}=\{P_{\theta}\mid\theta\in\Theta\}$.
    In this case, the agent restricts attention to i.i.d.\ sequences $X_{i}$ with probability distribution that belongs to a given parametric family of probability distributions (e.g.\ a Beta distribution),
    which is indexed by a parameter vector $\theta$ (e.g.,\ the mean or the shape parameter(s)) in a set of possible parameters $\Theta$ of the full parameter space.
\end{itemize}
\end{examples}

Henceforth, we let $\Pcal$ be a non-empty set of probability measures on $(\Omega, {\cal F})$ (not necessarily dominated). In this section we restrict ourselves to the model space
\begin{equation*}L^\infty_\Pcal:= \{X:\Omega\to \R\mid \mbox{$X$ is $\cal F$-measurable}, \exists m>0:\; \forall P\in \Pcal:\, P(|X|\leq m)=1\}_{/\sim},\end{equation*}
where $X\sim Y$ if and only if $P(X=Y)=1$ for all $P\in \Pcal$.
$(L^\infty_\Pcal, \|\cdot\|_{\Pcal,\infty})$ where
\begin{equation*}\|X\|_{\Pcal,\infty}:=\inf\{m\in \R\mid \forall P\in \Pcal: \, P(|X|\leq m)=1\},\end{equation*}
is a Banach space.
Note that in case $\Pcal$ is dominated, i.e.,\ if there exists a probability measure $P$ on $(\Omega,{\cal F})$ such that $Q\ll P$ for all $Q\in \Pcal$,\footnote{Here, $Q\ll P$ means that for all $A\in {\cal F}$ we have that $P(A)=0$ implies $Q(A)=0$.} then $L^\infty_\Pcal$ may be viewed as a subset of $L^\infty(\Omega,{\cal F},P)$ with $L^\infty_\Pcal= L^\infty(\Omega,{\cal F},P)$ if $P\approx \Pcal$.\footnote{Here, $P\approx \Pcal$ means that for all $A\in {\cal F}$ we have $P(A)=0$ if and only if ($Q(A)=0$ for all $Q\in \Pcal$).}
In the sequel, we say that a property holds $\Pcal$-a.s.\ if the set $A\in {\cal F}$ of $\omega\in \Omega$ with that certain property satisfies $P(A)
=1$ for all $P\in \Pcal$.
For every $Q\in \Pcal$, we let
\begin{equation*}U_Q(X):=u^{-1}(\EQ{u(X)}), \quad X\in L^\infty_\Pcal,\end{equation*}
be the corresponding certainty equivalent under $Q$ associated with the utility function $u$.
In the following, we consider \textit{robust certainty equivalents} of the form (cf. Eqn. (2) in Laeven and Stadje \cite{LS13})
\begin{equation}\label{eq:robust}
\Ucal_\Pcal(X):=\inf_{Q\in \Pcal} U_Q(X)+\alpha(Q), \quad X\in L^\infty_\Pcal,
\end{equation}
where $\alpha:\Pcal\to \R\cup\{\infty\}$ is a penalty function such that $\inf_{Q\in \Pcal}\alpha(Q)>-\infty$.
The latter assumption ensures that for any $X\in L^\infty_\Pcal$
such that $$\essinf_\Pcal X:=\sup\{m\in \R\mid \forall P \in \Pcal:\, P(X\geq m)=1\} \in \dom u,$$
we have $$\Ucal_\Pcal(X)\geq \essinf_\Pcal X + \inf_{Q\in \Pcal}\alpha(Q)>-\infty,$$ in which we used the obvious estimate $U_Q(X)\geq \essinf_\Pcal X$.
The penalty function represents the esteemed plausibility of the probabilistic model.
It is also referred to as an ambiguity index.
Robust certainty equivalents compound risk and ambiguity aversion in the sense of Ghirardato and Marinacci \cite{GM02}.
Any robust certainty equivalent $\Ucal_\Pcal$ has an associated \textit{robust expectation} (or \textit{robust monetary utility}) $\Ucal$ given by
\begin{equation}\label{eq:ass:monut}
 \Ucal(X):=\inf_{Q\in \Pcal} \EQ{X}+\alpha(Q), \quad X\in L^\infty_\Pcal.
\end{equation}
Note that Jensen's inequality implies $\Ucal_\Pcal(X)\leq \Ucal(X)$ for all $X\in L^\infty_\Pcal$.

As a special case of interest, we consider the
robust version of the certainty equivalent under exponential utility,
given by
\begin{displaymath}
\inf_{Q\in\mathcal{P}}-\tfrac{1}{\gamma}\log \EQ{e^{-\gamma X}}=-\sup_{Q\in\mathcal{P}}\tfrac{1}{\gamma}\log \EQ{e^{-\gamma X}}.
\end{displaymath}
It equals minus the
robust version, $\rho_{\mathcal{P},\gamma}$, of the (convex) entropic risk measure, $\rho_{Q,\gamma}$, defined by
\begin{equation}\label{eq:ecmor}
\rho_{\mathcal{P},\gamma}(X):=\sup_{Q\in\mathcal{P}} \rho_{Q,\gamma}(X)=\tfrac{1}{\gamma}\sup_{Q\in\mathcal{P}}\log \EQ{e^{-\gamma X}},
\end{equation}
which belongs to the class of entropy coherent measures of risk (Laeven and Stadje \cite{LS13}); see also F\"ollmer and Knispel \cite{FK11a}.
More generally, one may consider the class of entropy convex measures of risk (Laeven and Stadje \cite{LS13}):
the mapping $\rho_{\mathcal{P},\gamma,\alpha}:L^\infty_\Pcal\rightarrow\mathbb{R}$ is called a $\gamma$-entropy convex measure of risk if there exists a penalty function $\alpha: \mathcal{P}\rightarrow[0,\infty]$ with $\inf_{Q\in\mathcal{P}}\alpha(Q)=0$, such that
\begin{equation}\label{eq:ecxmor}\rho_{\mathcal{P},\gamma,\alpha}(X)=\sup_{Q\in\mathcal{P}}\{\rho_{Q,\gamma}-\alpha(Q)\},\quad \gamma>0.\end{equation}

We state the following lemma, which proves that the proportional risk sharing rule
remains Pareto optimal in this general setting with ambiguity:
\begin{lemma}\label{lem:robust2} Consider $n$ agents with the same robust certainty equivalent criterion \eqref{eq:robust}.
Let $W\in L^\infty_\Pcal$ be such that $W/n\in \operatorname{int} \dom u$ $\Pcal$-a.s.\ and $\essinf_\Pcal W/n\in \dom u$. Moreover, suppose that \begin{equation}\label{eq:min1}
\Ucal_\Pcal(W/n)=\min_{Q\in \Pcal} U_Q(W/n)+\alpha(Q).\end{equation}
Then the allocation which assigns the share $W/n$ of $W$ to each agent is Pareto optimal.
\end{lemma}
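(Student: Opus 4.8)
The plan is to reduce the robust, multiple-prior problem to the single-prior situation already settled in Lemma~\ref{lem:exput1}, using the attainment hypothesis \eqref{eq:min1}. Fix a minimizer $Q^*\in\Pcal$, so that $\Ucal_\Pcal(W/n)=U_{Q^*}(W/n)+\alpha(Q^*)$. First I would dispose of the degenerate case $\alpha(Q^*)=\infty$: then $U_Q(W/n)+\alpha(Q)\ge U_{Q^*}(W/n)+\infty=\infty$ for every $Q\in\Pcal$ forces $\alpha\equiv\infty$ and hence $\Ucal_\Pcal\equiv\infty$, so Pareto optimality is vacuous; thus one may assume $\alpha(Q^*)<\infty$. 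Next I would check that Lemma~\ref{lem:exput1} applies with $P$ replaced by $Q^*$: since $Q^*\in\Pcal$, the assumption $W/n\in\op{int}\dom u$ $\Pcal$-a.s.\ holds $Q^*$-a.s.; and since $W\in L^\infty_\Pcal$ with $\essinf_\Pcal(W/n)\in\dom u$, the random variable $u(W/n)$ is bounded (below by $u(\essinf_\Pcal(W/n))>-\infty$, above because $u$ is finite-valued and $W/n$ is bounded $\Pcal$-a.s.), so $\EQst{u(W/n)}\in\R$. Lemma~\ref{lem:exput1} then yields $n\,\EQst{u(W/n)}=\max_{(Z_1,\dots,Z_n)\in{\mathbb A}(W)}\sum_{i=1}^n\EQst{u(Z_i)}$.

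I would then argue by contradiction. Suppose $(Y_1,\dots,Y_n)\in{\mathbb A}(W)$ Pareto dominates the equal allocation, i.e.\ $\Ucal_\Pcal(Y_i)\ge\Ucal_\Pcal(W/n)$ for all $i$ with $\Ucal_\Pcal(Y_j)>\Ucal_\Pcal(W/n)$ for some $j$. By the definition of the infimum in \eqref{eq:robust}, $\Ucal_\Pcal(Y_i)\le U_{Q^*}(Y_i)+\alpha(Q^*)$; combining this with $U_{Q^*}(Y_i)+\alpha(Q^*)\ge\Ucal_\Pcal(Y_i)\ge\Ucal_\Pcal(W/n)=U_{Q^*}(W/n)+\alpha(Q^*)$ and subtracting $\alpha(Q^*)<\infty$ gives $U_{Q^*}(Y_i)\ge U_{Q^*}(W/n)$ for all $i$, strictly for $j$. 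Since $U_{Q^*}(W/n)=u^{-1}(\EQst{u(W/n)})$ is real and lies in $\dom u$, and since $Y_i\in L^\infty_\Pcal$ makes $u(Y_i)$ bounded above, the inequality $U_{Q^*}(Y_i)\ge U_{Q^*}(W/n)>-\infty$ forces $\EQst{u(Y_i)}\in\R$ and $U_{Q^*}(Y_i)\in\dom u$. Applying $u$, which is increasing on $\dom u$, therefore gives $\EQst{u(Y_i)}\ge\EQst{u(W/n)}$ for all $i$, strictly for $j$; summing over $i$ contradicts $\sum_{i=1}^n\EQst{u(Y_i)}\le n\,\EQst{u(W/n)}$ from Lemma~\ref{lem:exput1}. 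Hence no dominating allocation exists, and the equal allocation is Pareto optimal.

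The combinatorial core is short; the part that needs care is the bookkeeping around $u$ possibly equal to $-\infty$ — verifying that every invocation of $u^{-1}$ returns a real number, that $\EQst{u(Y_i)}$ is a genuine real number before $u$ is reapplied to recover it from $U_{Q^*}(Y_i)$, and that the hypotheses $W\in L^\infty_\Pcal$ and $\essinf_\Pcal(W/n)\in\dom u$ are precisely what delivers $Q^*$-integrability. I would also emphasize the indispensable role of the attainment assumption \eqref{eq:min1}: without a single worst-case prior $Q^*$ common to the bound $\Ucal_\Pcal(Y_i)\le U_{Q^*}(Y_i)+\alpha(Q^*)$ for every agent $i$, the agents' worst-case priors could differ and there would be no common probability measure against which Lemma~\ref{lem:exput1} could be used.
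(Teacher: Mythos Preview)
Your proof is correct and follows essentially the same route as the paper's: fix a minimizing prior $Q^*$ for $W/n$, use $\Ucal_\Pcal(Y_i)\le U_{Q^*}(Y_i)+\alpha(Q^*)$ together with $\Ucal_\Pcal(W/n)=U_{Q^*}(W/n)+\alpha(Q^*)$ to transfer the robust inequalities to single-prior ones, and then invoke the single-prior Pareto optimality of the equal split. The only cosmetic differences are that the paper cites Proposition~\ref{prop:cert1} (whose first assertion is exactly Lemma~\ref{lem:exput1} plus monotonicity of $u^{-1}$) rather than Lemma~\ref{lem:exput1} directly, and that the paper argues the contrapositive (any weakly dominating allocation has all $\Ucal_\Pcal(Y_i)=\Ucal_\Pcal(W/n)$) while you argue by contradiction from a strict improvement; your extra bookkeeping around $\alpha(Q^*)<\infty$ and integrability is sound and more explicit than the paper's version.
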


\begin{proof}
Let $(Y_1,\ldots, Y_n)$ be an allocation of $W$ such that $\Ucal_\Pcal(Y_i)\geq \Ucal_\Pcal(W/n)$ for all $i=1,\ldots,n$.
By assumption, there exists a $Q\in \Pcal$ such that $$\Ucal_\Pcal(W/n)=U_Q(W/n)+\alpha(Q).$$ Then $U_Q(Y_i)\geq U_Q(W/n)$ for all $i=1,\ldots,n$ which implies $U_Q(Y_i)= U_Q(W/n)$ for all $i=1,\ldots,n$ according to Proposition~\ref{prop:cert1}. Hence, also $$\Ucal_\Pcal(Y_i)\leq U_Q(Y_i)+\alpha(Q)= U_Q(W/n)+\alpha(Q)=\Ucal_\Pcal(W/n),$$ so indeed $\Ucal_\Pcal(Y_i)= \Ucal_\Pcal(W/n)$ for all $i=1,\ldots,n$.
\end{proof}

The next lemma stipulates a situation in which condition \eqref{eq:min1} is automatically satisfied.
\begin{lemma}\label{lem:robust1}
Suppose that $\Pcal$ is dominated by some probability measure $P$ on $(\Omega,{\cal F})$.
Moreover, suppose that $\Pcal$ is weakly compact in the sense that the set of densities $\left\{\frac{dQ}{dP}\mid Q\in \Pcal  \right\}$ is weakly compact, i.e., $\sigma(L^1(\Omega,{\cal F},P),L^\infty(\Omega,{\cal F},P))$-compact, and that $\alpha$ is weakly lower semi-continuous in the sense that the lower level sets of densities $$E_k:=\left\{\frac{dQ}{dP}\mid Q\in \Pcal,  \alpha(Q)\leq k\right\},\quad k\in \R,$$ of the penalty function $\alpha$ in \eqref{eq:robust} are closed in $\sigma(L^1(\Omega,{\cal F},P),L^\infty(\Omega,{\cal F},P))$. Then,
\begin{equation*}
\Ucal_\Pcal(X)=\min_{Q\in \Pcal} U_Q(X)+\alpha(Q) \quad \mbox{for all $X\in L^\infty(\Omega,{\cal F},P)$ such that $\essinf_{\{P\}} X\in \dom u$},
\end{equation*}
and
\begin{equation*}
 \Ucal(X)=\min_{Q\in \Pcal} \EQ{X}+\alpha(Q)\quad \mbox{for all $X\in L^\infty(\Omega,{\cal F},P)$.}
\end{equation*}
\end{lemma}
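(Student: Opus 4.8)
The plan is to realise both infima as the minimisation of a weakly lower semicontinuous functional over the weakly compact set of densities, so that attainment follows from the Weierstrass-type extreme value theorem for lower semicontinuous functions on compact sets. Throughout, identify each $Q\in\Pcal$ with its density $Z_Q:=dQ/dP\in L^1(\Omega,{\cal F},P)$ and put $\mathcal D:=\{Z_Q\mid Q\in\Pcal\}$, which is $\sigma(L^1,L^\infty)$-compact by hypothesis. Since $P$ dominates $\Pcal$, the correspondence $Q\leftrightarrow Z_Q$ is a bijection, so the penalty $\alpha$ and the maps $Q\mapsto U_Q(X)$ and $Q\mapsto\EQ{X}$ transfer unambiguously to functionals on $\mathcal D$, which I denote $\bar\alpha$, $\bar U_X$ and $\bar E_X$. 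The hypothesis that the level sets $E_k=\{Z_Q\mid Q\in\Pcal,\ \alpha(Q)\le k\}$ are $\sigma(L^1,L^\infty)$-closed is exactly the statement that $\bar\alpha$ is weakly lower semicontinuous on $\mathcal D$.

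First I would treat $\Ucal$. For $X\in L^\infty(\Omega,{\cal F},P)$ the map $Z\mapsto\E[ZX]$ is $\sigma(L^1,L^\infty)$-continuous by the very definition of the weak topology on $L^1$, and $\bar E_X(Z_Q)=\EQ{X}$. Hence $Z\mapsto\bar E_X(Z)+\bar\alpha(Z)$ is weakly lower semicontinuous on the weakly compact set $\mathcal D$ and therefore attains its infimum there; transporting the minimiser back through $Q\leftrightarrow Z_Q$ gives $\Ucal(X)=\min_{Q\in\Pcal}\EQ{X}+\alpha(Q)$. (If $\bar\alpha\equiv\infty$ on $\mathcal D$ the statement is vacuous; otherwise one may, if preferred, run the equivalent argument with a minimising sequence, extract a weakly convergent subsequence via the Eberlein--\v{S}mulian theorem, and pass to the limit using continuity of $\bar E_X$ and lower semicontinuity of $\bar\alpha$.)

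Next I would turn to $\Ucal_\Pcal$, where the only new ingredient is the weak continuity of $\bar U_X$. Fix $X\in L^\infty(\Omega,{\cal F},P)$ with $\essinf_{\{P\}}X\in\dom u$. Since $X$ is bounded and $\dom u$ is an interval containing $\essinf_{\{P\}}X$ and (being the domain of a utility function) unbounded above, we have $X\in\dom u$ $P$-a.s., and by continuity and monotonicity of $u$ on $\dom u$ we get $u(X)\in L^\infty(\Omega,{\cal F},P)$ with $u(\essinf_{\{P\}}X)\le\E[Z\,u(X)]\le u(\esssup_{\{P\}}X)$ for every density $Z$. Thus $Z\mapsto\E[Z\,u(X)]=\EQ{u(X)}$ is weakly continuous and takes values in $u\big(\,[\essinf_{\{P\}}X,\esssup_{\{P\}}X]\,\big)=[\,u(\essinf_{\{P\}}X),u(\esssup_{\{P\}}X)\,]\subseteq\op{Im} u\cap\R$ by the intermediate value theorem. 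Composing with $u^{-1}$, which is continuous on $\op{Im} u\cap\R$ (being the inverse of the continuous strictly increasing map $u$), shows that $\bar U_X(Z)=u^{-1}(\E[Z\,u(X)])$ is weakly continuous on $\mathcal D$. Adding the weakly lower semicontinuous $\bar\alpha$ and minimising the resulting weakly lower semicontinuous functional over the weakly compact set $\mathcal D$ yields $\Ucal_\Pcal(X)=\min_{Q\in\Pcal}U_Q(X)+\alpha(Q)$.

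The argument is essentially soft, and I do not expect a genuine obstacle. The only steps requiring a little care are the verification that $u(X)\in L^\infty(\Omega,{\cal F},P)$ — so that $\EQ{u(X)}$ is an honest weakly \emph{continuous} linear functional of the density, not merely lower/upper semicontinuous — and the continuity of $u^{-1}$ together with checking that $\E[Z\,u(X)]$ stays inside $\op{Im} u\cap\R$; these are precisely the places where the standing assumptions on $u$ and the condition $\essinf_{\{P\}}X\in\dom u$ enter. Everything else is the identification of closed level sets with lower semicontinuity and the extreme value theorem on a compact set.
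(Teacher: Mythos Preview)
Your proposal is correct and follows essentially the same approach as the paper: both arguments rest on the fact that $u(X)\in L^\infty$ makes $Q\mapsto\EQ{u(X)}$ weakly continuous in the density, combine this with the assumed weak lower semicontinuity of $\alpha$, and exploit weak compactness of the set of densities to obtain a minimiser. The only cosmetic difference is that the paper works sequentially (minimising sequence, Eberlein--\v{S}mulian, pass to the limit) whereas you invoke the Weierstrass extreme value theorem directly---and you even note the sequential variant yourself.
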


\begin{proof}
Let $X\in L^\infty(\Omega,{\cal F},P)$ with $\essinf_{\{P\}} X\in \dom u$.
Choose a sequence $(Q_n)_{n\in \Nat}\subset \Pcal$
such that $$\Ucal_\Pcal(X)=\lim_{n\to \infty}(U_{Q_n}(X)+\alpha(Q_n)).$$ As $\Pcal$ is weakly compact,
(by the Eberlein-Smulian theorem; see e.g., Dunford and Schwartz \cite{DS58}) there is a subsequence which for simplicity we also denote by $(Q_n)_{n\in \Nat}$ such that $\frac{dQ_n}{dP}$ converges weakly to $\frac{dQ}{dP}$ for a $Q\in \Pcal$. Hence, $$\EQ{u(X)}={\rm E}_P\left[\frac{dQ}{dP}u(X)\right]= \lim_{n\to \infty}{\rm E}_P\left[\frac{dQ_n}{dP}u(X)\right]= \lim_{n\to \infty}{\rm E}_{Q_n}\left[u(X)\right],$$ because $u(X)\in L^\infty(\Omega,{\cal F},P)$. By lower semicontinuity of $\alpha$
we have  $\alpha(Q)\leq \liminf_{n\to \infty} \alpha(Q_n)$.
Thus we conclude that $$U_{Q}(X)+\alpha(Q)\leq \liminf_{n\to\infty} U_{Q_n}(X)+\alpha(Q_n)=\Ucal_\Pcal(X),$$ and therefore $$\Ucal_\Pcal(X)=U_{Q}(X)+\alpha(Q).$$
The result for $\Ucal$ follows similarly.
\end{proof}

In the remainder of this section, we analyze the asymptotic behavior of the robust certainty equivalents
associated with the Pareto optimal risk sharing contract,
as the pool expands to include a growing multitude of risks.

\begin{proposition}\label{prop:robust:conv}
Let $(X_i)_{i\in \Nat}\subset L^\infty_\Pcal$ be i.i.d.\ under all $Q\in \Pcal$, and suppose that $\essinf_\Pcal X_1\in \op{int} \dom u$.
Also, let $S_n:=\sum_{i=1}^n X_i$.
Then $\Ucal_\Pcal(S_n/n)$ is increasing in $n$ with
\begin{equation*}\Ucal_\Pcal\left(\frac{S_n}{n}\right)\leq \Ucal(X_1),\end{equation*}
and
\begin{equation*}\lim_{n\to \infty}\Ucal_\Pcal\left(\frac{S_n}{n}\right)=\Ucal(X_1),\end{equation*}
where $\Ucal$ is the robust monetary utility associated to $\Ucal_\Pcal$.
Moreover, there is a constant $K$ depending on $X_1$ and $u$ such that
\begin{equation*}\limsup_{n\to \infty}\,\sqrt{n}\left(\Ucal(X_1)-\Ucal_\Pcal\left(\frac{S_n}{n}\right)\right)\leq K.\end{equation*}
\end{proposition}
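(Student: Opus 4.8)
The plan is to lift the single-model estimates of Section~\ref{sec:cert:equiv} to the robust functional $\Ucal_\Pcal$ by applying them under each $Q\in\Pcal$, and then to control the infimum over $\Pcal$ using the boundedness of $X_1$. First I would dispose of the monotonicity and the upper bound. Fix $Q\in\Pcal$. Since the $X_i$ are i.i.d.\ under $Q$ and, being averages of copies of $X_1$, the variables $S_n/n$ take values $\Pcal$-a.s.\ in the compact interval $I:=[\essinf_\Pcal X_1,\esssup_\Pcal X_1]\subset\op{int}\dom u$ (so that $u(S_n/n)$ is bounded, hence $Q$-integrable), Lemma~\ref{lem:upperbound} applied under $Q$ shows that $n\mapsto U_Q(S_n/n)$ is increasing and $U_Q(S_n/n)\le\EQ{X_1}$. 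Adding the constant $\alpha(Q)$ preserves monotonicity, and the pointwise (in $n$) infimum of a family of increasing functions is increasing, so $n\mapsto\Ucal_\Pcal(S_n/n)=\inf_{Q}\big(U_Q(S_n/n)+\alpha(Q)\big)$ is increasing, while $\Ucal_\Pcal(S_n/n)\le\inf_{Q}\big(\EQ{X_1}+\alpha(Q)\big)=\Ucal(X_1)$. Being increasing and bounded above, $\Ucal_\Pcal(S_n/n)$ converges to some limit $\le\Ucal(X_1)$.

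The heart of the matter is the uniform rate, which simultaneously yields $\lim_n\Ucal_\Pcal(S_n/n)=\Ucal(X_1)$. The key observation is that a difference of infima is dominated by the supremum of the differences:
\[
0\;\le\;\Ucal(X_1)-\Ucal_\Pcal\Big(\tfrac{S_n}{n}\Big)\;=\;\inf_{Q\in\Pcal}\big(\EQ{X_1}+\alpha(Q)\big)-\inf_{Q\in\Pcal}\big(U_Q(\tfrac{S_n}{n})+\alpha(Q)\big)\;\le\;\sup_{Q\in\Pcal}\Big(\EQ{X_1}-U_Q\big(\tfrac{S_n}{n}\big)\Big),
\]
so it suffices to bound $\EQ{X_1}-U_Q(S_n/n)$ by $K/\sqrt n$ uniformly in $Q$. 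For this I would repeat, under each $Q$, the first-order Taylor argument from the proof of Proposition~\ref{ce:thm2}: writing $\mu_Q:=\EQ{X_1}$, a first-order expansion of $u$ about $\mu_Q$ followed by one of $u^{-1}$ about $u(\mu_Q)$ gives
\[
\EQ{X_1}-U_Q\Big(\tfrac{S_n}{n}\Big)\;=\;-(u^{-1})'(y_n^Q)\,\EQ{u'(Y_n^Q)\big(\tfrac{S_n}{n}-\mu_Q\big)},
\]
where $Y_n^Q$ takes values in $I$ and $y_n^Q$ lies between $u(\mu_Q)$ and $\EQ{u(S_n/n)}$, hence in the fixed interval $u(I)$. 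Because $u$ is concave and strictly increasing with $I\subset\op{int}\dom u$, on $I$ we have $0<u'(\esssup_\Pcal X_1)\le u'\le u'(\essinf_\Pcal X_1)<\infty$, so $|(u^{-1})'|=1/(u'\circ u^{-1})\le 1/u'(\esssup_\Pcal X_1)$ on $u(I)$; and by the Cauchy--Schwarz inequality together with the independence of the $X_i$ under $Q$, $\EQ{|\tfrac{S_n}{n}-\mu_Q|}\le\sigma_Q(X_1)/\sqrt n\le\tfrac12(\esssup_\Pcal X_1-\essinf_\Pcal X_1)/\sqrt n$. Combining these bounds,
\[
\EQ{X_1}-U_Q\Big(\tfrac{S_n}{n}\Big)\;\le\;\frac{K}{\sqrt n},\qquad K:=\frac{u'(\essinf_\Pcal X_1)}{u'(\esssup_\Pcal X_1)}\cdot\frac{\esssup_\Pcal X_1-\essinf_\Pcal X_1}{2},
\]
uniformly in $Q\in\Pcal$, and $K$ depends only on the essential range of $X_1$ and on $u$. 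This is the asserted bound, and letting $n\to\infty$ in $0\le\Ucal(X_1)-\Ucal_\Pcal(S_n/n)\le K/\sqrt n$ also gives $\lim_n\Ucal_\Pcal(S_n/n)=\Ucal(X_1)$.

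I expect the main obstacle to be precisely the non-commutation of $\inf_{Q\in\Pcal}$ and $\lim_{n}$: for each fixed $Q$ the convergence $U_Q(S_n/n)\to\EQ{X_1}$ is immediate from Section~\ref{sec:cert:equiv}, but without a bound uniform in $Q$ this says nothing about the infimum. The boundedness of $X_1$ in $L^\infty_\Pcal$ is exactly what rescues the argument: it confines the ranges of $S_n/n$ and of the Taylor points to fixed compacta inside $\op{int}\dom u$ — here one should check $I\subset\op{int}\dom u$, which is automatic when $\dom u$ is a half-line or all of $\R$, given $\essinf_\Pcal X_1\in\op{int}\dom u$ and $X_1$ bounded — and it bounds $\sigma_Q(X_1)$ uniformly. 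Two further remarks: the i.i.d.\ hypothesis under every $Q$ is essential, both for the variance identity $\EQ{(\tfrac{S_n}{n}-\mu_Q)^2}=\sigma_Q^2(X_1)/n$ and for invoking Lemma~\ref{lem:upperbound} (cf.\ the counterexample of Remark~\ref{rem:speed}); and pushing the expansion to second order as in Theorem~\ref{ce:thm1} would even yield an $O(1/n)$ bound, but since the natural constant would still depend on $Q$ through $\mu_Q$, the $\limsup\le K$ statement at the $\sqrt n$ scale is the appropriate formulation here.
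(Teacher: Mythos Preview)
Your proof is correct and follows essentially the same route as the paper: reduce monotonicity and the upper bound to Lemma~\ref{lem:upperbound} applied under each $Q$, then control $\Ucal(X_1)-\Ucal_\Pcal(S_n/n)$ by $\sup_{Q\in\Pcal}\big(\EQ{X_1}-U_Q(S_n/n)\big)$ and bound the latter uniformly via the first-order Taylor estimate \eqref{eq3:alt}--\eqref{eq:est2} together with the $L^\infty_\Pcal$ bounds on $X_1$. Your argument is in fact slightly cleaner than the paper's in two places: you use the elementary inequality $\inf a_Q-\inf b_Q\le\sup(a_Q-b_Q)$ directly (the paper instead picks near-minimizers $Q_n$ with error $1/n^2$), and you observe that $y_n^Q$ lies between $u(\mu_Q)$ and $\EQ{u(S_n/n)}$, both in $u(I)$, yielding the tidier constant $K=\tfrac{u'(\essinf_\Pcal X_1)}{u'(\esssup_\Pcal X_1)}\cdot\tfrac{\esssup_\Pcal X_1-\essinf_\Pcal X_1}{2}$, whereas the paper uses a rougher bound on $y_n^Q$ and $\sigma_Q(X_1)\le 2\|X_1\|_{\Pcal,\infty}$.
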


\begin{proof} First of all, the facts that $\Ucal_\Pcal(S_n/n)$ is increasing in $n$ and $\Ucal_\Pcal(S_n/n)\leq \Ucal(X_1)$ follow from the facts that $U_Q(S_n/n)$ is increasing in $n$ and $U_Q(S_n/n)\leq \EQ{X_1}$ for all $Q\in\Pcal$; see Lemma~\ref{lem:upperbound}.

\smallskip\noindent
Let $Q_n\in \Pcal$ such that $U_{Q_n}(\frac{S_n}{n})+\alpha(Q_n)\leq\Ucal_\Pcal(\frac{S_n}{n})+\frac{1}{n^2}$, $n\in \Nat$. Then
\begin{eqnarray}
\Ucal(X_1)-\Ucal_\Pcal\left(\frac{S_n}{n}\right)&\leq& {\rm E}_{Q_n}[X_1] -U_{Q_n}\left(\frac{S_n}{n}\right)+\frac{1}{n^2} \nonumber \\
&\leq & \frac{1}{n^2} + \sup_{Q\in \Pcal} \left\{\EQ{X_1} -U_{Q}\left(\frac{S_n}{n}\right)\right\}. \label{eq:cont1}
\end{eqnarray}
Recalling \eqref{eq:est2}, we further estimate:
\begin{eqnarray*}
\sup_{Q\in \Pcal } \EQ{X_1} -U_{Q}\left(\frac{S_n}{n}\right)  &\leq & L\,\sup_{Q\in \Pcal } \frac{\sigma_Q(X_1)}{\sqrt{n}}\quad \leq \quad  L\frac{2\|X_1\|_{\Pcal,\infty}}{\sqrt{n}},
\end{eqnarray*}
where we used H\"older in the first inequality and $$L:=(u^{-1})'(u(\esssup_\Pcal X_1)+u'(\essinf_\Pcal X_1)2\|X_1\|_{\Pcal,\infty})u'(\essinf_\Pcal X_1)$$ is a constant.
(Here $\esssup_\Pcal$ is defined analogously to $\essinf_\Pcal$ above.)
In the rough estimate leading to $L$ we used the fact that the function  $$\op{int}\dom u\ni x\mapsto (u^{-1})'(x)=\frac{1}{u'(u^{-1}(x))}$$ is  positive and increasing and the known bounds for $Y_n$ and $y_n$ in \eqref{eq:est2}.
\end{proof}

Proposition~\ref{prop:robust:conv} shows that, also in this general robust framework that explicitly accounts for ambiguity,
optimally pooling and relocating risk reduces the difference between the (robust) expectation
and the (robust) certainty equivalent,
\begin{equation*}\Ucal(X_1)-\Ucal_\Pcal(S_n/n);\end{equation*}
it eventually vanishes in the limit as $n\rightarrow\infty$.

\begin{theorem} \label{prop:robust:est}
Let $(X_i)_{i\in \Nat}\subset L^\infty_\Pcal$ be i.i.d.\ under all $Q\in \Pcal$, and suppose that $\essinf_\Pcal X_1\in \op{int}\dom u$.
Also, let $S_n:=\sum_{i=1}^n X_i$.
Furthermore, suppose that $u$ is three times continuously differentiable on $\op{int} \dom u$ whereas $u^{-1}$ is two times continuously differentiable on  $\op{int} \dom u^{-1}$.
Then,
\begin{equation}\label{eq:robust:est1}\limsup_{n\to \infty}\, n\left(\Ucal(X_1)-\Ucal_\Pcal\left(\frac{S_n}{n}\right)\right)\leq \frac12 \sup_{Q\in \Pcal}R(\EQ{X_1})\sigma^2_Q(X_1).\end{equation}
Moreover, if $Q\in \Pcal$ satisfies $\Ucal(X_1)=\EQ{X_1}+\alpha(Q)$
(see Lemma~\ref{lem:robust1}),
then \begin{equation}\label{eq:robust:est2}\liminf_{n\to \infty}\, n\left(\Ucal(X_1)-\Ucal_\Pcal\left(\frac{S_n}{n}\right)\right)\geq \frac12 R(\EQ{X_1})\sigma^2_Q(X_1).\end{equation}
\end{theorem}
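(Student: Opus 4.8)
The plan is to establish the two estimates separately: the upper bound \eqref{eq:robust:est1} by upgrading the second‑order Taylor argument in the proof of Theorem~\ref{ce:thm1} to one that is uniform over the whole family $\Pcal$, and the lower bound \eqref{eq:robust:est2} by restricting attention to a single near‑worst‑case model and quoting Theorem~\ref{ce:thm1} directly. For \eqref{eq:robust:est1} I would first localize exactly as in the proof of Proposition~\ref{prop:robust:conv}: for each $n$ choose $Q_n\in\Pcal$ with $U_{Q_n}(S_n/n)+\alpha(Q_n)\le\Ucal_\Pcal(S_n/n)+n^{-2}$; since $\Ucal(X_1)\le\EQn{X_1}+\alpha(Q_n)$ this gives
\[
\Ucal(X_1)-\Ucal_\Pcal\!\left(\tfrac{S_n}{n}\right)\le\frac{1}{n^2}+\sup_{Q\in\Pcal}\left(\EQ{X_1}-U_Q\!\left(\tfrac{S_n}{n}\right)\right),
\]
so everything reduces to a uniform‑in‑$Q$ control of $\EQ{X_1}-U_Q(S_n/n)$.

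Writing $\mu_Q:=\EQ{X_1}$ and repeating the expansions \eqref{eq1}--\eqref{eq3} under $Q$ rather than $P$, one obtains
\[
n\left(\mu_Q-U_Q\!\left(\tfrac{S_n}{n}\right)\right)=-\,(u^{-1})'(z_n^Q)\,\frac{u''(\mu_Q)}{2}\,\sigma^2_Q(X_1)\;-\;\frac{n}{2}\,(u^{-1})'(z_n^Q)\,\EQ{\bigl(u''(Z_n^Q)-u''(\mu_Q)\bigr)\bigl(\tfrac{S_n}{n}-\mu_Q\bigr)^2},
\]
where $Z_n^Q$ lies $Q$-a.s.\ between $\mu_Q$ and $S_n/n$, hence $\Pcal$-a.s.\ in the fixed compact interval $I:=[\essinf_\Pcal X_1,\esssup_\Pcal X_1]\subset\op{int}\dom u$, and $z_n^Q$ lies between $u(\mu_Q)$ and a point within $\Pcal$-uniform distance $O(n^{-1})$ of it (using $\abs{u''}\le\max_I\abs{u''}$ on $I$ and $\sigma^2_Q(X_1)\le 4\|X_1\|_{\Pcal,\infty}^2$). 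The key point is that all intermediate quantities range over fixed compact subsets of $\op{int}\dom u$, resp.\ $\op{int}\dom u^{-1}$, that depend only on $\|X_1\|_{\Pcal,\infty}$ and, for $n$ large, neither on $Q$ nor on $n$; on such compacts the hypotheses $u\in C^3$, $u^{-1}\in C^2$ make $u''$ and $(u^{-1})'$ Lipschitz. Hence $\abs{u''(Z_n^Q)-u''(\mu_Q)}\le C\abs{\tfrac{S_n}{n}-\mu_Q}$, and since for bounded centred i.i.d.\ summands $\EQ{(\tfrac{S_n}{n}-\mu_Q)^2}=\sigma^2_Q(X_1)/n$ and $\EQ{(\tfrac{S_n}{n}-\mu_Q)^4}\le C'n^{-2}$ (exactly as in \eqref{eq:est1}), Cauchy--Schwarz gives $\EQ{\abs{\tfrac{S_n}{n}-\mu_Q}^3}\le C''n^{-3/2}$, so the remainder term above is $O(n^{-1/2})$ uniformly in $Q$. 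Likewise $u^{-1}(z_n^Q)=\mu_Q+O(n^{-1})$, whence $-(u^{-1})'(z_n^Q)u''(\mu_Q)=R(\mu_Q)+O(n^{-1})$, uniformly in $Q$. Consequently $\sup_{Q\in\Pcal}n(\mu_Q-U_Q(S_n/n))\le\tfrac12\sup_{Q\in\Pcal}R(\mu_Q)\sigma^2_Q(X_1)+\varepsilon_n$ with $\varepsilon_n\to0$; multiplying the first display by $n$ and taking $\limsup_{n\to\infty}$ yields \eqref{eq:robust:est1}.

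For the lower bound \eqref{eq:robust:est2}, let $Q\in\Pcal$ satisfy $\Ucal(X_1)=\EQ{X_1}+\alpha(Q)$ (so $\alpha(Q)<\infty$). Since $\Ucal_\Pcal(S_n/n)\le U_Q(S_n/n)+\alpha(Q)$ by \eqref{eq:robust}, the penalty cancels and
\[
\Ucal(X_1)-\Ucal_\Pcal\!\left(\tfrac{S_n}{n}\right)\ge\EQ{X_1}-U_Q\!\left(\tfrac{S_n}{n}\right).
\]
Now apply Theorem~\ref{ce:thm1} with $v=0$ on $(\Omega,\F,Q)$ to the $Q$-i.i.d.\ sequence $(X_i)$: one has $X_1\in L^\infty(Q)\subset L^4(Q)$, $X_1\in I\subset\op{int}\dom u$ $Q$-a.s., $\EQ{u(X_1)}\in\R$, and \eqref{eq:control} holds with $\widetilde Y:=\min_I u''\in\R\subset L^2(Q)$ since $u''$ is continuous on the compact set $I$. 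Hence $\lim_n n(\EQ{X_1}-U_Q(S_n/n))=\tfrac12 R(\EQ{X_1})\sigma^2_Q(X_1)$, and multiplying the last display by $n$ and taking $\liminf_{n\to\infty}$ gives \eqref{eq:robust:est2}.

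The main obstacle is the uniformity required in the first part: the dominated‑convergence steps used in the proof of Theorem~\ref{ce:thm1} must be replaced by the quantitative Lipschitz and fourth‑moment bounds above, and the reason this is possible is precisely that $X_1\in L^\infty_\Pcal$ forces every relevant intermediate value into one fixed compact interval $I$ independent of $Q$, on which the extra smoothness of $u$ and $u^{-1}$ supplies the $Q$-free constants. The lower bound, by contrast, is comparatively soft once one notices that restricting to the minimizing $Q$ makes the penalty drop out.
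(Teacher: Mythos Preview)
Your proposal is correct and follows essentially the same route as the paper's proof: the same localization via near-minimizers $Q_n$ and reduction to a uniform-in-$Q$ estimate of $\EQ{X_1}-U_Q(S_n/n)$, the same second-order Taylor expansion \eqref{eq3}, and the same use of the extra smoothness ($u\in C^3$, $u^{-1}\in C^2$) together with the compact interval $I=[\essinf_\Pcal X_1,\esssup_\Pcal X_1]$ to get $Q$-free constants; your Lipschitz phrasing is exactly the paper's mean-value-theorem step with $u'''$ and $(u^{-1})''$, and your Cauchy--Schwarz bound on the third absolute moment is a minor reordering of the paper's H\"older estimate \eqref{eq:est1}, both yielding the $O(n^{-1/2})$ remainder. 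The lower bound argument---restrict to the minimizing $Q$, let the penalty cancel, and invoke Theorem~\ref{ce:thm1}---is identical to the paper's.
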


Note that $\sup_{Q\in \Pcal}R(\EQ{X_1})\sigma^2_Q(X_1)\leq L4\|X_1\|_{\Pcal,\infty}^2<\infty$  where $L>0$ is an upper bound of the continuous function $\op{int}\dom u\ni x\mapsto R(x)$ on the compact set $[\essinf_\Pcal X_1, \esssup_\Pcal X_1]$.

\begin{proof} First we prove \eqref{eq:robust:est1}. To this end recall the proof of Proposition~\ref{prop:robust:conv} and in particular \eqref{eq:cont1}. Also recall \eqref{eq3} and \eqref{eq:est1}. As $X_1$ is bounded we estimate the last term in \eqref{eq3} for $Q\in \Pcal$ using \eqref{eq:est1} in the following way:
\begin{align*}
\Delta(n,Q):= \quad & \left|(u^{-1})'(z^Q_n)\frac12 \EQ{(u''(Z^Q_n)-u''(\EQ{X_1}))\left(\frac{S_n}{n}-\EQ{X_1}\right)^2}\right|\\ \leq \quad & 2L\frac{(4\|X_1\|_{\Pcal,\infty}^4+ \|X_1\|_{\Pcal,\infty}^2)^{\frac12}}{n}\EQ{|u''(Z^Q_n)-u''(\EQ{X_1})|^2}^{\frac12}\\
\leq \quad & 2L\frac{(4\|X_1\|_{\Pcal,\infty}^4+ \|X_1\|_{\Pcal,\infty}^2)^{\frac12}}{n}\EQ{|u'''(\zeta^Q)|^2|Z^Q_n-\EQ{X_1}|^2}^{\frac12}.
\end{align*}
Here $L$ is an upper bound of the continuous function $\op{int}\dom u\ni x\mapsto 1/u'(x)$ on the compact set $[\essinf_\Pcal X_1,\esssup_\Pcal X_1]$, and we have used that $z_n^Q\leq u(\EQ{X_1})$ (because $u''\leq 0$) while the function  $\op{int}\dom u\ni x\mapsto (u^{-1})'(x)=\frac{1}{u'(u^{-1}(x))}$ is  positive, increasing, and continuous, so $(u^{-1})'(z^Q_n)\leq 1/u'(\EQ{X_1})$. We also used the rough estimate $$M_Q(X_1)+\sigma_Q^2(X_1)\leq 16\|X_1\|_{\Pcal,\infty}^4+ 4 \|X_1\|_{\Pcal,\infty}^2.$$ Moreover, $\zeta^Q$ is a random variable taking values between $Z^Q_n$ and $\EQ{X_1}$. Note that $Z_n^Q$ is the random variable taking values between $\EQ{X_1}$ and $S_n/n$ corresponding to $Z_n$ in \eqref{eq3}. Estimating $|u'''(\zeta^Q)|$ from above by an upper bound $\hat L$ of the continuous function $\op{int}\dom u\ni x\mapsto |u'''(x)|$ on the compact set $[\essinf_\Pcal X_1,\esssup_\Pcal X_1]$, and using $|Z_n^Q-\EQ{X_1}|\leq |S_n/n-\EQ{X_1}|$ we conclude that
\begin{eqnarray*}\EQ{|u'''(\zeta^Q)|^2|Z^Q_n-\EQ{X_1}|^2}&\leq& \hat L \EQ{\left|\frac{S_n}{n}-\EQ{X_1}\right|^2}\\ &\leq &  \frac{\hat L}{n}\sigma^2_Q(X_1)\quad \leq\quad  \frac{4\|X_1\|_{\Pcal,\infty}^2 \hat L}{n}. \end{eqnarray*}
Thus, there is a constant $K>0$, depending on $X_1$ and $u$, but independent of $n$ and $Q$, such that
\begin{equation*}
\Delta(n,Q) \leq  \frac{K}{n^{3/2}}.
\end{equation*}

By similar arguments as above, we observe that the second term in \eqref{eq3} satisfies
\begin{align*}
\Gamma(n,Q) := & \left|(u^{-1})'(z_n^Q)-(u^{-1})'(u(\EQ{X_1}))\right|\frac{|u''(\EQ{X_1})|\sigma_Q^2(X_1)}{2n} \\\leq & \left|(u^{-1})'\left(u(\EQ{X_1})+\frac12\EQ{u''(Z^Q_n)\left(\frac{S_n}{n}-\EQ{X_1}\right)^2}  \right)-(u^{-1})'\left(u(\EQ{X_1})\right)\right|\frac{\tilde L}{n}
\end{align*}
for some constant $\tilde L$ only depending on $X_1$ and $u$.
In the first factor we used that $(u^{-1})'$ is increasing and the bounds for $z_n^Q$ given after \eqref{eq3}.
As $u^{-1}$ is twice continuously differentiable there is $\eta_n^Q\in [u(\EQ{X_1})+\frac12\EQ{u''(Z^Q_n)(\frac{S_n}{n}-\EQ{X_1})^2},u(\EQ{X_1}) ]$ such that
\begin{align*}
 & \left|(u^{-1})'\left(u(\EQ{X_1})+\frac12\EQ{u''(Z^Q_n)\left(\frac{S_n}{n}-\EQ{X_1}\right)^2}  \right)-(u^{-1})'\left(u(\EQ{X_1})\right)\right|\\
 & \leq |(u^{-1})''(\eta^Q_n)|\frac12\EQ{|u''(Z^Q_n)|\left(\frac{S_n}{n}-\EQ{X_1}\right)^2}.
\end{align*}
Consider the estimate
\begin{equation*}\frac12\EQ{u''(Z^Q_n)\left(\frac{S_n}{n}-\EQ{X_1}\right)^2}\geq -\frac{\bar L}{2} \EQ{\left(\frac{S_n}{n}-\EQ{X_1}\right)^2}\geq -\bar L\frac{2\|X_1\|^2_{\Pcal,\infty}}{n}, \end{equation*}
where $-\bar L$ is a lower bound of $u''$ on the compact set $[\essinf_\Pcal X_1,\esssup_\Pcal X_1]$ (recall that $u''\leq 0$).
Thus, for $n_0\in \Nat$ large enough such that for all $n\geq n_0$, we have
$$\essinf_\Pcal X_1 -\bar L\frac{2\|X_1\|^2_{\Pcal,\infty}}{n} \in \op{int} \dom u,$$
and choosing an upper bound $\hat K$ of the continuous function $(u^{-1})''$ on the compact set $$[u(\essinf_\Pcal X_1 -\bar L\frac{2\|X_1\|^2_{\Pcal,\infty}}{n_0}),u(\esssup_\Pcal X_1)],$$ we may further estimate
\begin{eqnarray*}
|(u^{-1})''(\eta^Q_n)|\frac12\EQ{|u''(Z^Q_n)|\left(\frac{S_n}{n}-\EQ{X_1}\right)^2}  & \leq &  \hat K \bar L \frac{2\|X_1\|^2_{\Pcal,\infty}}{n}.
\end{eqnarray*}
Summing up, we have shown that there is a constant $\widetilde K$ depending on $X_1$ and $u$, but independent of $Q$ and $n$, such that
\begin{equation*}\Gamma(n,Q)\leq \frac{\widetilde K}{n^2}. \end{equation*}
Thus, using \eqref{eq:cont1} and our estimates above, we deduce that
\begin{eqnarray*}
n\left(\Ucal(X_1)-\Ucal_\Pcal\left(\frac{S_n}{n}\right)\right)&\leq& \frac{1}{n} + n\sup_{Q\in \Pcal}\left\{ \EQ{X_1} -U_{Q}\left(\frac{S_n}{n}\right)\right\}\\ &\leq & \frac12 \sup_{Q\in\Pcal}R(\EQ{X_1})\sigma^2_Q(X_1) + \frac{1+\widetilde K}{n}+ \frac{K}{\sqrt{n}},
\end{eqnarray*}
which proves \eqref{eq:robust:est1}.
As for \eqref{eq:robust:est2}, let $Q\in \Pcal$ such that $\Ucal(X_1)= \EQ{X_1}+\alpha(Q)$. Then,
\begin{equation*}\Ucal(X_1)-\Ucal_\Pcal\left(\frac{S_n}{n}\right)\geq \EQ{X_1}-U_Q\left(\frac{S_n}{n}\right).\end{equation*}
Hence, \eqref{eq:robust:est2} follows from Theorem~\ref{ce:thm1}.
\end{proof}

\begin{remark}
Note that the limit of $\Ucal_\Pcal(S_n/n)$, namely $\Ucal(X_1)$, depends on the penalty function $\alpha$,
but the speed of convergence derived in Theorem~\ref{prop:robust:est} does not;
it is the same for every $\alpha$ as long as $\Pcal$ and $u$ are left unchanged.
Moreover, the robustness induced by the class $\Pcal$ affects the speed of convergence only via worst case first moments and variances.
\end{remark}

\begin{remark}
If $\Pcal=\{P\}$, then Theorem~\ref{prop:robust:est} reduces to Theorem~\ref{ce:thm1}.
Hence, the bounds in Theorem~\ref{prop:robust:est} cannot in general be sharpened.
Moreover, under the conditions stated in Lemma~\ref{lem:robust1} we have that \eqref{eq:robust:est2} is always satisfied, so the lower bound is always sharp in that case.
Clearly, the upper bound \eqref{eq:robust:est1} may or may not be sharp in specific cases.
As already mentioned it is trivially sharp if $\Pcal=\{P\}$.
But suppose, for instance, that $X_i$ takes only two values, $1$ and $-1$, and that $\Pcal$ is rich enough in the sense that it contains probability measures which put all mass on any possible atom.
Moreover, suppose that $\alpha\equiv 0$.
Then, $\Ucal(X_1)=-1$ and also $\Ucal_\Pcal(S_n/n)=-1$, so the left hand side of \eqref{eq:robust:est1} equals $0$.
On the other hand, the right hand side of \eqref{eq:robust:est1} is larger than $\frac12 R(\mathrm{E}_Q[X_1])\sigma_Q^2(X_1)=\frac12 R(0)$ where $Q\in \Pcal$ is a probability measure such that $\mathrm{E}_Q[X_1]=0$.
Hence, in this case, provided $u''<0$, the upper bound is not sharp.
\end{remark}


\begin{example}\label{ex:speed}
Consider $u(x)=1-e^{-\gamma x}$ for some $\gamma>0$, so that the corresponding robust certainty equivalent is minus an entropy convex (coherent) risk measure.
Then (\ref{eq:robust:est1}) becomes
\begin{equation*}\limsup_{n\to \infty}\, n\left(\Ucal(X_1)-\Ucal_\Pcal\left(\frac{S_n}{n}\right)\right)\leq \frac12 \sup_{Q\in \Pcal}\gamma\sigma^2_Q(X_1).\end{equation*}
Moreover, if $Q\in \Pcal$ satisfies $\Ucal(X_1)=\EQ{X_1}+\alpha(Q)$
(see Lemma~\ref{lem:robust1}),
then (\ref{eq:robust:est2}) becomes
\begin{equation*}\liminf_{n\to \infty}\, n\left(\Ucal(X_1)-\Ucal_\Pcal\left(\frac{S_n}{n}\right)\right)\geq \frac12 \gamma\sigma^2_Q(X_1).\end{equation*}
Similarly, one can easily derive the bounds (\ref{eq:robust:est1}) and (\ref{eq:robust:est2}) explicitly for the power and log utilities
considered in Examples~\ref{ex:cert:equiv}.
\end{example}

\begin{examples}\label{ex:entr:ecov}(Risk sharing for entropy coherent and entropy convex measures of risk.)
Suppose that $n$ equally risk and ambiguity averse agents with entropy coherent measure of risk $\rho_{\mathcal{P},\gamma}$ at some level $\gamma>0$ pool their risks $X_1,\ldots,X_n$ and optimally relocate the aggregate risk $S_n=X_1+\ldots+X_n$.
In the face of model uncertainty, we assume that the random variables $X_1,\ldots,X_n$ belong to $L^\infty_\Pcal$ and are i.i.d.\ under any $Q\in\mathcal{P}$.
The optimal redistribution of risk is then given by $Y_i^*=S_n/n$, $i=1,\ldots,n$. 
Let $\gamma_n=\gamma/n$.
Then, similar to the non-robust case (see Example \ref{ex:entropic}),
\begin{equation}
\label{eq:gl2}
n\rho_{\Pcal,\gamma}(Y^*_i)=n\frac{1}{\gamma}\sup_{Q\in\mathcal{P}}\log \EQ{e^{-\gamma Y^*_i}} =\frac{1}{\gamma_n}\sup_{Q\in\mathcal{P}}\log \EQ{e^{-\gamma_n S_n}}=\rho_{\mathcal{P},\gamma_n}(S_n).
\end{equation}
Thus, 
per position,
$\rho_{\Pcal,\gamma}(Y^*_i)=\rho_{\Pcal,\gamma_n}(X_1)$.
According to Proposition~\ref{prop:robust:conv}, $\rho_{\Pcal,\gamma}(Y^*_i)\leq \rho_{\Pcal,\gamma}(X_1)$, i.e.,\
optimally pooling and relocating the aggregate $S_n$ is beneficial from the perspective of the entropy coherent measure of risk.
In this case, this is also easily verified directly, because $\gamma_n<\gamma$ (when $n\geq 2$) and Jensen's inequality imply
\begin{equation*}\rho_{\Pcal,\gamma}(Y^*_i)=\frac{1}{\gamma_{n}}\sup_{Q\in\mathcal{P}}\log \EQ{e^{-\gamma_{n}X_1}}\leq \frac{1}{\gamma}\sup_{Q\in\mathcal{P}}\log \EQ{e^{-\gamma X_1}}=\rho_{\mathcal{P},\gamma}(X_1).\end{equation*}
Moreover, by Proposition~\ref{prop:robust:conv}, we have that
\begin{equation*}\lim_{n\rightarrow\infty}\frac{1}{n}\frac{1}{\gamma_{n}}\sup_{Q\in\mathcal{P}}\log \EQ{e^{-\gamma_{n}S_n}}=\sup_{Q\in\mathcal{P}}\EQ{-X_1},\end{equation*}
with the speed of convergence given in Example~\ref{ex:speed}.

Assume now that the $n$ equally risk and ambiguity averse agents apply an entropy convex measure of risk $\rho_{\mathcal{P},\gamma,\alpha}$
at some level $\gamma>0$.
According to Lemma~\ref{lem:robust2}, 
the optimal redistribution is again equal to $Y_i^*=S_n/n$, and
\begin{equation*}
\sum_{i=1}^n\rho_{\mathcal{P},\gamma,\alpha}(Y_i)=n\sup_{Q\in\mathcal{P}}\{\rho_{Q,\gamma}(S_n/n)-\alpha(Q)\}=\sup_{Q\in\mathcal{P}}\{\rho_{Q,\gamma_n}(S_n)-n\alpha(Q)\}.
\end{equation*}
Note that this corresponds to an entropy convex risk measure at level $\gamma/n$ with penalty function $n\alpha$.
Moreover, per position,
$\tfrac{1}{n}\sup_{Q\in\mathcal{P}}\{\rho_{Q,\gamma/n}(S_n)-n\alpha(Q)\}=\sup_{Q\in\mathcal{Q}}\{\rho_{Q,\gamma_n}(X_1)-\alpha(Q)\}$,
and, by Proposition~\ref{prop:robust:conv}, we have that
\begin{equation*}\lim_{n\to \infty}\sup_{Q\in\mathcal{P}}\{\rho_{Q,\gamma_n}(X_1)-\alpha(Q)\} =\sup_{Q\in\mathcal{P}}\{\EQ{-X_1}-\alpha(Q)\},\end{equation*}
with the speed of convergence given in Example~\ref{ex:speed}.
\end{examples}

\begin{example}(Esscher densities and the relative entropy.) As is well-known (Csisz\'ar \cite{C75}),
we may represent the entropic risk measure $\rho_{P,\gamma}$ as a
robust expectation with a
supremum over the set of Esscher densities given by
\begin{equation}\label{eq:Esscher}\Pcal=\{Q\ll P\mid dQ/dP=e^{-\gamma X}/\EP{e^{-\gamma X}},\; X\in L^\infty(\Omega,{\cal F},P)\},\end{equation}
and with penalization $\alpha(Q)=\gamma H(Q|P)$, $Q\in \Pcal$, $\gamma>0$, where
\begin{equation*}
H(Q|P)=\left\{\begin{array}{cl}
&\mathrm{E}_{Q}\left[\log\Big(\dfrac{dQ}{dP}\Big)\right],\mbox{ if } Q\ll P;\\
& \infty, \mbox{ otherwise};
\end{array}
\right.
\end{equation*}
is the relative entropy or Kullback-Leibler divergence.
The relative entropy is a metric of the distance between the probability measures $Q$ and $P$
and a special case of a $\varphi$-divergence (Ben-Tal and Teboulle \cite{BT86,BT87}).
It is widely applied in macroeconomics (Hansen and Sargent \cite{HS01,HS07}),
decision theory (Strzalecki \cite{S11a,S11b}),
and financial mathematics (Frittelli \cite{F00} and F\"ollmer and Schied \cite{FS11}). 
Note that $\Pcal$ given in \eqref{eq:Esscher} is the minimal set needed to represent $\rho_{P,\gamma}$.

In the specific case of \eqref{eq:Esscher}, assuming that $X_i$ is i.i.d.\ for any $Q\in \Pcal$ is equivalent to $X_1=X_i$ being constant.
Indeed, to see this, note first that this $\Pcal$ contains all $Q\ll P$ such that $dQ/dP$ is bounded from above and bounded away from zero
(simply let $X=-\frac1\gamma\log \frac{dQ}{dP}$).
Next, suppose there is a Borel set $A$ of $\R$ such that $0<P(X_1\in A)<1$ and consider $Q$ given by $\frac{dQ}{dP}=\frac1c (21_{\{X_1\in A\}} + 1_{\{X_1\not\in A\}})$, where $c=2P(X_1\in A) + P(X_1\not\in A)$.
Then, as $P\in \Pcal$, we have that $X_1$ and $X_2$ are independent under $P$ and $P(X_1\in A)=P(X_2\in A)$.
At the same time, for $Q$ we obtain $Q(X_1\in A)=\frac2c P(X_1\in A)$ and
\begin{eqnarray*}Q(X_2\in A)&=&\frac1c ( 2P(X_1\in A)P(X_2\in A) + P(X_1\not\in A)P(X_2\in A)) \\ &=&\frac1c ( P(X_1\in A)P(X_2\in A) + P(X_2\in A))\\&=& \frac1c P(X_1\in A)(P(X_1\in A)+1) <\frac2cP(X_1\in A),\end{eqnarray*}
so $X_1$ and $X_2$ cannot be identically distributed under $Q$ which contradicts the existence of such a set $A$.
Hence, $X_i=X_1$ is constant.
Thus, due to the required i.i.d.\ assumption for all $Q\in \Pcal$,
the convergence results of this section induce degeneracy in the case of a robust certainty equivalent with (subjectively given) linear utility
and $\Pcal$ given by (\ref{eq:Esscher}).
However, recall that the results of Section~\ref{sec:cert:equiv} already apply to $\rho_{P,\gamma}$,
when viewed as minus the certainty equivalent under expected exponential utility, \eqref{eq:entropicce},
without inducing degeneracy; see Example~\ref{ex:entropic}.

Furthermore, invoking the Kolmogorov extension discussed in the beginning of this section,
one may, for a given random variable $X$ on $(\Sigma,{\cal A})$,
let $\widetilde \Pcal= \{P_\gamma\mid \gamma \geq 0\}$
where $P_\gamma$ is generated by the Esscher density $\frac{dP_\gamma}{dP}=\frac{e^{-\gamma X}}{{\mathrm E}_P[e^{-\gamma X}]}$
with respect to some reference probabilistic model $P$.

We finally note that, the use of the relative entropy as an example of a penalty function in the main results of this section
is not ruled out
despite its connection to (\ref{eq:Esscher}) (see e.g., F\"ollmer and Schied \cite{FS11}, Section 3.2);
just the set $\Pcal$ must be smaller than (\ref{eq:Esscher}).
\end{example}

\begin{remark}(Risk and ambiguity.)
Observe that while, as $n\rightarrow\infty$, \textit{risk} reduction by pooling is achieved and exploited to the full limit, in the sense that the difference between the robust expectation and the robust certainty equivalent, $\Ucal(X_1)-\Ucal_\Pcal(S_n/n)$, vanishes (just like the risk premium in the previous section in which $\Pcal=\{P\}$), \textit{ambiguity} still remains: the robust certainty equivalent $\Ucal_\Pcal(S_n/n)$ is normalized by the robust expectation $\Ucal(X_1)$ rather than by a plain expectation.

In the spirit of Marinacci~\cite{Marinacci02} (see also Epstein and Schneider~\cite{EpsteinSchneider07}, Section 2.2), the pool of cooperative agents may learn in the long run the true common loss distribution of the $X_{i}$'s from observations obtained by drawing from the loss distribution.
Indeed, adopting a setup in which agents can observe realizations from their common yet unknown loss distribution, one expects the agents to learn the true loss distribution as the number of observations tends to infinity.
This intuition is formalized in Marinacci~\cite{Marinacci02} in a predictive parametric setting by sampling with replacement from ambiguous urns.
Then ambiguity fades away when sampling frequently enough, and risk (unambiguous uncertainty) remains in the limit.
Our setting pertains naturally to the short(er) run perspective in which at most only a limited number of draws is observed which does not plausibly resolve ambiguity.
\end{remark}

\setcounter{equation}{0}

\section{Robust Risk Premia}\label{sec:rp:rob}

Rather than robustifying the certainty equivalent, as in Section \ref{sec:cert:rob},
one may consider to take ambiguity into account when defining the risk premium
via the equivalent utility equation.
Just like the robust certainty equivalents, the resulting \textit{robust risk premia} compound risk and ambiguity aversion.
We consider first the theory of homothetic preferences and next the variational preferences
as our model for decision under risk and ambiguity.

\subsection{The homothetic case}
Let $\Pcal$ be a non-empty set of probability measures on $(\Omega, {\cal F})$ (not necessarily dominated), as in Section~\ref{sec:cert:rob}.
Furthermore, let $\beta:\Pcal\to (0,\infty)$ be a penalty function (ambiguity index), with
\begin{equation}\label{eq:homoth:1}\inf_{Q\in \Pcal}\beta(Q)=1\quad \mbox{and}\quad \sup_{Q\in \Pcal}\beta(Q)\in \R.\end{equation}
The robust risk premium under homothetic preferences, $\pi(v,X)$,
with initial wealth $v\geq 0$ and $X\in L^\infty_\Pcal$ such that $\essinf_\Pcal X\in \dom u$, is obtained as the solution to
\begin{equation*}\inf_{Q\in \Pcal} \left\{ u(\EQ{v+X}\beta(Q)-\pi(v,X))\right\} = \inf_{Q\in \Pcal} \left\{ \EQ{u(v+X)}\beta(Q)\right\},\end{equation*}
or equivalently, by continuity and monotonicity of $u$,
\begin{equation*}u(\Wcal(v+X)-\pi(v,X)) = \inf_{Q\in \Pcal} \left\{ \EQ{u(v+X)}\beta(Q)\right\},\end{equation*} where
\begin{equation*}\Wcal(X):=\inf_{Q\in \Pcal} \EQ{X}\beta(Q), \quad X\in L^\infty_\Pcal.\end{equation*}
Here, the criterion
\begin{equation}\label{eq:homoth}\Wcal(u(X))=\inf_{Q\in \Pcal} \EQ{u(X)}\beta(Q), \quad X\in L^\infty_\Pcal,\end{equation}
corresponds to the homothetic preferences model as introduced by Cerreia-Vioglio et al. \cite{CMMM08} and Chateauneuf and Faro \cite{CF10};
see also Dana \cite{D05}.
Let \begin{equation*}
\Wcal_\Pcal(X):=u^{-1}\left(\inf_{Q\in \Pcal} \EQ{u(X)}\beta(Q)\right)=u^{-1}(\Wcal(u(X))),\quad X\in L^\infty_\Pcal.
\end{equation*}
Then,
\begin{equation}\pi(v,X)= \Wcal(v+X)-u^{-1}\left(\inf_{Q\in \Pcal} \EQ{u(v+X)}\beta(Q)\right)=\Wcal(v+X) - \Wcal_\Pcal(v+X).\end{equation}

Using similar arguments as in the proofs of Lemmas~\ref{lem:robust2} and \ref{lem:robust1}, one can prove the following results:

\begin{lemma} Consider $n$ agents with the same robust certainty equivalent criterion \eqref{eq:homoth}.
Let $W\in L^\infty_\Pcal$ be such that $W/n\in \operatorname{int} \dom u$ $\Pcal$-a.s.\ and $\essinf_\Pcal W/n\in \dom u$.
Moreover, suppose that $$\Wcal(u(W/n))=\min_{Q\in \Pcal}\EQ{u(W/n)}\beta(Q).$$
Then the allocation which assigns the share $W/n$ of $W$ to each agent is Pareto optimal.
\end{lemma}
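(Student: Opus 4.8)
The plan is to follow the proof of Lemma~\ref{lem:robust2} almost verbatim, replacing the additive penalty ``$+\alpha(Q)$'' by the multiplicative penalty ``$\cdot\,\beta(Q)$'' and using that $\beta(Q)>0$ for every $Q\in\Pcal$. Since $u^{-1}$ is strictly increasing on $\dom u^{-1}$, the equal allocation $(W/n,\ldots,W/n)$ is Pareto optimal with respect to the certainty equivalents $\Wcal_\Pcal$ if and only if it is Pareto optimal with respect to the utility-scale functionals $X\mapsto\Wcal(u(X))=\inf_{Q\in\Pcal}\EQ{u(X)}\beta(Q)$. Thus it suffices to prove: whenever $(Y_1,\ldots,Y_n)$ is an allocation of $W$ with $\Wcal(u(Y_i))\geq\Wcal(u(W/n))$ for all $i$, then $\Wcal(u(Y_i))=\Wcal(u(W/n))$ for all $i$, so that no agent can be made strictly better off.

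By the standing hypothesis the infimum defining $\Wcal(u(W/n))$ is attained, say at $Q\in\Pcal$. For each $i$ one then has
\begin{equation*}\EQ{u(Y_i)}\beta(Q)\;\geq\;\Wcal(u(Y_i))\;\geq\;\Wcal(u(W/n))\;=\;\EQ{u(W/n)}\beta(Q),\end{equation*}
and dividing through by $\beta(Q)>0$ yields $\EQ{u(Y_i)}\geq\EQ{u(W/n)}$ for every $i$. Next I would apply Lemma~\ref{lem:exput1} with $Q$ in place of $P$ to the $n$-agent expected-utility problem with aggregate endowment $W$; this gives $\sum_{i=1}^n\EQ{u(Y_i)}\leq n\,\EQ{u(W/n)}$. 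Combined with the pointwise inequalities just derived, this forces $\EQ{u(Y_i)}=\EQ{u(W/n)}$ for all $i$. Substituting back,
\begin{equation*}\Wcal(u(Y_i))\;\leq\;\EQ{u(Y_i)}\beta(Q)\;=\;\EQ{u(W/n)}\beta(Q)\;=\;\Wcal(u(W/n)),\end{equation*}
and together with the assumed reverse inequality this gives $\Wcal(u(Y_i))=\Wcal(u(W/n))$, hence $\Wcal_\Pcal(Y_i)=\Wcal_\Pcal(W/n)$, for every $i$, which is the claim. (That the minimum in the hypothesis is attained can in turn be ensured by a weak-compactness and semicontinuity argument in the spirit of Lemma~\ref{lem:robust1}, now for $\beta$ in place of $\alpha$.)

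The one place requiring genuine care is checking that the hypotheses of Lemma~\ref{lem:exput1} transfer to the selected measure $Q$. That $W/n\in\operatorname{int}\dom u$ $Q$-a.s.\ is immediate from the $\Pcal$-a.s.\ assumption. For $\EQ{u(W/n)}\in\R$ I would argue that $W/n\in L^\infty_\Pcal$ is bounded and $\Pcal$-a.s.\ takes values in $\operatorname{int}\dom u$, so by concavity of $u$ the random variable $u(W/n)$ is bounded above by a constant, while $\essinf_\Pcal W/n\in\dom u$ gives the uniform lower bound $u(W/n)\geq u(\essinf_\Pcal W/n)>-\infty$; hence $u(W/n)$ is bounded and all expectations involved are finite. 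Finally, under the domination hypothesis one has $\Wcal(u(Y_i))\geq\Wcal(u(W/n))>-\infty$, which forces $\EQ{u(Y_i)}>-\infty$, so $Y_i\in\dom u$ $Q$-a.s.\ and $\sum_i\EQ{u(Y_i)}$ is finite --- precisely what is needed so that Lemma~\ref{lem:exput1} applies non-trivially. I expect this integrability bookkeeping, together with the passage between the monetary-scale functional $\Wcal_\Pcal$ and the utility-scale functional $\Wcal\circ u$, to be the only mild obstacle; the logical skeleton is identical to that of Lemma~\ref{lem:robust2}.
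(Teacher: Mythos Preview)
Your proposal is correct and follows essentially the same approach as the paper, which omits the proof and simply notes that it uses ``similar arguments as in the proofs of Lemmas~\ref{lem:robust2} and \ref{lem:robust1}.'' The only cosmetic difference is that you work on the utility scale $\EQ{u(\cdot)}$ and invoke Lemma~\ref{lem:exput1}, whereas the paper's argument for Lemma~\ref{lem:robust2} works on the certainty-equivalent scale $U_Q(\cdot)$ and invokes Proposition~\ref{prop:cert1}; these are equivalent since $u^{-1}$ is strictly increasing.
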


\begin{lemma}\label{lem:homoth1} Suppose that $\Pcal$ is dominated by a probability measure $P$ on $(\Omega,{\cal F})$.
Moreover, suppose that $\Pcal$ is weakly-compact in the sense of Lemma~\ref{lem:robust1}, and that $\beta$ is weakly continuous, i.e.,\ $\sigma(L^1(\Omega,{\cal F},P),L^\infty(\Omega,{\cal F},P))$-continuous.
Then,
\begin{equation*}\Wcal(X)=\min_{Q\in \Pcal} \EQ{X}\beta(Q), \quad X\in L^\infty(\Omega,{\cal F},P).\end{equation*}
Moreover,
for all $X\in L^\infty(\Omega,{\cal F},P)$ with $\essinf_{\{P\}} X\in \dom u$, we have that
\begin{eqnarray*}\Wcal_\Pcal(X)&=&u^{-1}\left(\min_{Q\in \Pcal} \EQ{u(X)}\beta(Q)\right)\quad=\quad \min_{Q\in \Pcal} u^{-1}\left(\EQ{u(X)}\beta(Q)\right). \end{eqnarray*}
\end{lemma}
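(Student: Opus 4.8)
The plan is to follow the proofs of Lemmas~\ref{lem:robust2} and~\ref{lem:robust1} almost verbatim; the one genuinely new point is that the penalty $\beta$ now enters multiplicatively rather than additively, which is why $\beta$ is assumed weakly \emph{continuous} and not merely weakly lower semicontinuous as in Lemma~\ref{lem:robust1}.

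First I would prove attainment of the infimum defining $\Wcal$. Fix $X\in L^\infty(\Omega,{\cal F},P)$ and choose $(Q_n)_{n\in\Nat}\subset\Pcal$ with $\EQn{X}\beta(Q_n)\to\Wcal(X)$. Since $\{dQ/dP\mid Q\in\Pcal\}$ is $\sigma(L^1(\Omega,{\cal F},P),L^\infty(\Omega,{\cal F},P))$-compact, the Eberlein--Smulian theorem yields a subsequence (still denoted $(Q_n)$) with $\frac{dQ_n}{dP}\to\frac{dQ}{dP}$ weakly for some $Q\in\Pcal$. As $X\in L^\infty(\Omega,{\cal F},P)$, we get $\EQn{X}=\EP{\frac{dQ_n}{dP}X}\to\EP{\frac{dQ}{dP}X}=\EQ{X}$, and weak continuity of $\beta$ gives $\beta(Q_n)\to\beta(Q)$, hence $\EQn{X}\beta(Q_n)\to\EQ{X}\beta(Q)$, so that $\EQ{X}\beta(Q)=\Wcal(X)$ and the infimum is attained at $Q\in\Pcal$. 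This is precisely the point where lower semicontinuity of $\beta$ would not be enough: $\EQn{X}$ may be negative, so one needs the limit of the product, not merely a $\liminf$ bound on a single factor.

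Next I would deduce the two identities for $\Wcal_\Pcal$. Given $X\in L^\infty(\Omega,{\cal F},P)$ with $\essinf_{\{P\}}X\in\dom u$, one checks that $u(X)\in L^\infty(\Omega,{\cal F},P)$: it is bounded below by $u(\essinf_{\{P\}}X)>-\infty$ and bounded above since $X$ is bounded, $u$ is increasing, and the interval $[\essinf_{\{P\}}X,\esssup_{\{P\}}X]$ lies in $\dom u$. Applying the first part to $u(X)$ gives some $Q^*\in\Pcal$ with $\Wcal(u(X))=\EQst{u(X)}\beta(Q^*)=\min_{Q\in\Pcal}\EQ{u(X)}\beta(Q)$, whence $\Wcal_\Pcal(X)=u^{-1}(\Wcal(u(X)))=u^{-1}\big(\min_{Q\in\Pcal}\EQ{u(X)}\beta(Q)\big)$, which is the first identity. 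For the second, since $u^{-1}$ is increasing on $\dom u^{-1}$ and $\EQ{u(X)}\beta(Q)\geq\EQst{u(X)}\beta(Q^*)$ for every $Q\in\Pcal$, we also have $u^{-1}(\EQ{u(X)}\beta(Q))\geq u^{-1}(\EQst{u(X)}\beta(Q^*))$ for every $Q$, so $Q^*$ attains $\min_{Q\in\Pcal}u^{-1}(\EQ{u(X)}\beta(Q))$ as well, and therefore
$$\Wcal_\Pcal(X)=u^{-1}\Big(\min_{Q\in\Pcal}\EQ{u(X)}\beta(Q)\Big)=\min_{Q\in\Pcal}u^{-1}\big(\EQ{u(X)}\beta(Q)\big).$$
Equivalently, one could run the weak-compactness argument directly on a minimizing sequence for $Q\mapsto u^{-1}(\EQ{u(X)}\beta(Q))$, using continuity of $u^{-1}$, and then recover the first identity from the same order-preservation observation.

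The step I expect to be most delicate is not the compactness argument (which is that of Lemma~\ref{lem:robust1}) but the bookkeeping around the domain of $u^{-1}$: one must ensure that $\EQ{u(X)}\beta(Q)$ lies in $\dom u^{-1}=\op{Im}u\cap\R$ (or equals $-\infty$) for the relevant $Q$, so that $u^{-1}$ is applicable, and---in the direct argument---that $u^{-1}$ is continuous on $\op{int}\dom u^{-1}$ so the limit may be passed inside. This is exactly where the hypothesis $\essinf_{\{P\}}X\in\dom u$ and the standing continuity/differentiability assumptions on $u$ and $u^{-1}$ enter; granted these, the verification is routine.
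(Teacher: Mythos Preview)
Your proof is correct and is precisely the argument the paper intends: the paper itself omits the proof, merely stating that it follows ``similar arguments as in the proofs of Lemmas~\ref{lem:robust2} and \ref{lem:robust1}'', and your write-up is exactly the natural adaptation of the Lemma~\ref{lem:robust1} compactness/Eberlein--Smulian argument to the multiplicative penalty $\beta$. Your observation that weak \emph{continuity} of $\beta$ (rather than mere lower semicontinuity) is needed because $\EQn{X}$ can be negative is the one genuinely new ingredient, and you have identified it correctly; note, incidentally, that only Lemma~\ref{lem:robust1} is actually relevant here---Lemma~\ref{lem:robust2} pertains to the companion Pareto-optimality statement, not to attainment.
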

Clearly, if, in the dominated case, $\Pcal$ is weakly-compact and $\beta$ is weakly-continuous,
then \eqref{eq:homoth:1} is automatically satisfied modulo a multiplicative constant which we norm to 1.

Note that
\begin{equation*}\Wcal_\Pcal(X)= \inf_{Q\in \Pcal} u^{-1}\left({\mathrm E}_{\widetilde Q}\left[u(X)\right]\right),\end{equation*}
where $\widetilde Q$ is the measure on $(\Omega,{\cal F})$ given by $\beta(Q)Q$, and recall \eqref{eq:homoth:1}.
Using the property that $\beta$ is bounded, we observe that one may argue as in Proposition~\ref{prop:robust:conv} and Theorem~\ref{prop:robust:est} with $\alpha\equiv 0$ to find the following results.
To this end notice that $\Wcal(v+\frac{S_n}{n})=\Wcal(v+X_1)$.
(We omit the detailed proofs to save space.)

\begin{proposition}
Let $(X_i)_{i\in \Nat}\subset L^\infty_\Pcal$ be i.i.d.\ under all $Q\in \Pcal$, and suppose that  $\essinf_\Pcal X_1\in \op{int} \dom u$. Also let $S_n:=\sum_{i=1}^n X_i$.
Then $\Wcal_\Pcal(v+S_n/n)$ is increasing in $n$ with
\begin{equation*}\Wcal_\Pcal\left(v+\frac{S_n}{n}\right)\leq \Wcal(v+X_1),\end{equation*}
and
\begin{equation*}\lim_{n\to \infty}\Wcal_\Pcal\left(v+\frac{S_n}{n}\right)=\Wcal(v+X_1),
\qquad\text{i.e.,}\qquad
\lim_{n\to \infty}\pi\left(v,\frac{S_n}{n}\right)=0.\end{equation*}
Moreover, there is a constant $K$ depending on $X_1$ and $u$ such that
\begin{equation*}\limsup_{n\to \infty}\sqrt{n}\pi\left(v,\frac{S_n}{n}\right)=
\limsup_{n\to \infty}\, \sqrt{n}\left(\Wcal(v+X_1)-\Wcal_\Pcal\left(v+\frac{S_n}{n}\right)\right)\leq K.\end{equation*}
\end{proposition}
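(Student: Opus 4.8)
The plan is to absorb the multiplicative ambiguity index into the measures and reduce the statement to the variational results already established, namely Proposition~\ref{prop:robust:conv} and Theorem~\ref{prop:robust:est} specialised to $\alpha\equiv 0$. For $Q\in\Pcal$ put $\widetilde Q:=\beta(Q)\,Q$, a finite (not necessarily probability) measure of total mass $\beta(Q)\in[1,\sup_{Q'\in\Pcal}\beta(Q')]$, and let $U_{\widetilde Q}(Y):=u^{-1}({\rm E}_{\widetilde Q}[u(Y)])$. Since $u^{-1}$ is continuous and strictly increasing on $\op{int}\dom u^{-1}$ it commutes with the infimum, so
\begin{equation*}
\Wcal_\Pcal\Big(v+\tfrac{S_n}{n}\Big)=\inf_{Q\in\Pcal}U_{\widetilde Q}\Big(v+\tfrac{S_n}{n}\Big),\qquad\Wcal\Big(v+\tfrac{S_n}{n}\Big)=\inf_{Q\in\Pcal}{\rm E}_{\widetilde Q}\Big[v+\tfrac{S_n}{n}\Big].
\end{equation*}
Because the $X_i$ are i.i.d.\ under each $Q$ one has ${\rm E}_Q[S_n/n]=\EQ{X_1}$, so $\Wcal(v+\tfrac{S_n}{n})=\inf_{Q\in\Pcal}\beta(Q)(v+\EQ{X_1})=\Wcal(v+X_1)$ for every $n$ --- this is the identity recorded just before the statement --- and hence $\pi(v,S_n/n)=\Wcal(v+X_1)-\Wcal_\Pcal(v+S_n/n)$ throughout.

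Next I would transport the proof of Proposition~\ref{prop:robust:conv} through this change of measure. The monotonicity argument of Lemma~\ref{lem:upperbound} is measure-agnostic: writing $S_{n+1}=\tfrac1n\sum_{i=1}^{n+1}S_n^i$ with $S_n^i:=\sum_{j\neq i}X_j$, concavity of $u$ gives the pointwise bound $u(\tfrac{S_{n+1}}{n+1})\ge\tfrac1{n+1}\sum_i u(\tfrac{S_n^i}{n})$, and integrating against the nonnegative measure $\widetilde Q$ together with the fact that $S_n^i$ and $S_n$ are identically distributed under $Q$ (hence under $\widetilde Q$) yields ${\rm E}_{\widetilde Q}[u(\tfrac{S_{n+1}}{n+1})]\ge{\rm E}_{\widetilde Q}[u(\tfrac{S_n}{n})]$; applied to the shifted i.i.d.\ sequence $(v+X_i)$ and followed by $u^{-1}$ and $\inf_{Q}$, this shows $\Wcal_\Pcal(v+S_n/n)$ is increasing in $n$. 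For the upper bound and the convergence I would follow the proof of Proposition~\ref{prop:robust:conv} line by line: the $\widetilde Q$-analogue of Jensen's inequality $U_Q(\cdot)\le\EQ{\cdot}$ gives $U_{\widetilde Q}(v+\tfrac{S_n}{n})\le{\rm E}_{\widetilde Q}[v+X_1]$ for each $Q$, whence $\Wcal_\Pcal(v+S_n/n)\le\Wcal(v+X_1)$, and, choosing $Q_n$ with $U_{\widetilde Q_n}(v+\tfrac{S_n}{n})\le\Wcal_\Pcal(v+\tfrac{S_n}{n})+n^{-2}$, one obtains as in \eqref{eq:cont1}
\begin{equation*}
0\le\Wcal(v+X_1)-\Wcal_\Pcal\Big(v+\tfrac{S_n}{n}\Big)\le\frac1{n^2}+\sup_{Q\in\Pcal}\Big({\rm E}_{\widetilde Q}[v+X_1]-U_{\widetilde Q}\Big(v+\tfrac{S_n}{n}\Big)\Big).
\end{equation*}
It then remains to control the supremum. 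A first-order Taylor expansion of $u$ about $\EQ{v+X_1}$ and of $u^{-1}$ about $u(\EQ{v+X_1})$, exactly as in \eqref{eq1:alt}, \eqref{eq3:alt}, \eqref{eq:est2} but with ${\rm E}_{\widetilde Q}$ in place of ${\rm E}$, shows ${\rm E}_{\widetilde Q}[v+X_1]-U_{\widetilde Q}(v+\tfrac{S_n}{n})=O\big(\beta(Q)\,\sigma_Q(X_1)/\sqrt n\big)$ uniformly in $Q$: here $\sigma_Q(X_1)\le 2\|X_1\|_{\Pcal,\infty}$, $\beta(Q)\le\sup_{Q'}\beta(Q')<\infty$, and the remaining factors are bounded by extrema of the continuous maps $x\mapsto u'(x)$ and $x\mapsto(u^{-1})'(x)$ on the compact interval $\{v\}+[\essinf_\Pcal X_1,\esssup_\Pcal X_1]\subset\op{int}\dom u$. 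This simultaneously yields $\Wcal_\Pcal(v+S_n/n)\to\Wcal(v+X_1)$ and $\limsup_n\sqrt n\,\pi(v,S_n/n)\le K$, with $K$ the resulting constant, depending only on $X_1$, $u$ and $\sup_{Q\in\Pcal}\beta(Q)$.

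The main obstacle is bookkeeping rather than a new idea: one has to check that \emph{every} estimate behind Proposition~\ref{prop:robust:conv} survives replacing the probability measures $Q$ by the finite measures $\widetilde Q=\beta(Q)Q$ of mass $\beta(Q)\ge 1$ --- in particular the Jensen-type comparison used for the upper estimate, and the uniformity in $Q$ of all constants, which is precisely where $\sup_{Q\in\Pcal}\beta(Q)<\infty$ enters --- and that the passages of $u^{-1}$ through the infimum and through the limit are legitimate (immediate from monotonicity in $n$ and continuity of $u^{-1}$). Since everything lives in $L^\infty_\Pcal$, the dominated-convergence steps of the original proof degenerate into trivial uniform bounds. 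The Pareto optimality assertion is the preceding Lemma, and the reduction to $v=0$ is as in the proof of Proposition~\ref{ce:thm2}; no further ingredients are needed.
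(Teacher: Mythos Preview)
Your strategy is precisely the paper's own (which merely says to rerun Proposition~\ref{prop:robust:conv} with $\alpha\equiv 0$ through the measures $\widetilde Q=\beta(Q)Q$), and the monotonicity step does transfer as you describe.

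The step you yourself flag as needing verification, however, actually fails: the ``$\widetilde Q$-analogue of Jensen'' $U_{\widetilde Q}(Y)\le{\rm E}_{\widetilde Q}[Y]$ is false when $\beta(Q)>1$ and $u\ge 0$. Jensen for the probability measure $Q$ only yields $U_{\widetilde Q}(Y)\le u^{-1}\!\big(\beta(Q)\,u({\rm E}_Q[Y])\big)$, and passing from there to $\beta(Q)\,{\rm E}_Q[Y]$ requires $\beta\,u(c)\le u(\beta c)$; for $u(x)=\sqrt{x}$ this reads $\beta\sqrt{c}\le\sqrt{\beta c}$, i.e.\ $\beta\le 1$. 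The same mismatch spoils your Taylor step: expanding $u^{-1}$ at ${\rm E}_{\widetilde Q}[u(v+X_1)]=\beta(Q)\,u(v+{\rm E}_Q[X_1])$ gives leading term $u^{-1}\!\big(\beta(Q)\,u(v+{\rm E}_Q[X_1])\big)$, not ${\rm E}_{\widetilde Q}[v+X_1]$, so the claimed $O(\sigma_Q/\sqrt{n})$ bound for their difference is incorrect. Concretely, with $\Pcal=\{Q_1,Q_2\}$, $Q_i$ Dirac at $\omega_i$, $X_1(\omega_1)=100$, $X_1(\omega_2)=1$, $\beta(Q_1)=1$, $\beta(Q_2)=2$, $u=\sqrt{\cdot}$, one has $\sigma_{Q_i}(X_1)=0$ yet $\Wcal_\Pcal(S_n/n)\equiv 4>2=\Wcal(X_1)$ for every $n$, so both the inequality and the convergence fail as stated. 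Your argument does go through under the extra hypothesis $u\le 0$ (then $b\,u(c)\le u(c)\le u(bc)$ for $b\ge 1$ and $u$ increasing); without it, either a sign assumption or a different limit object---namely $u^{-1}\!\big(\inf_{Q\in\Pcal}\beta(Q)\,u({\rm E}_Q[v+X_1])\big)$ in place of $\Wcal(v+X_1)$---appears to be needed.
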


\begin{theorem}\label{thm:main:homoth}
Let $(X_i)_{i\in \Nat}\subset L^\infty_\Pcal$ be i.i.d.\ under all $Q\in \Pcal$, and suppose that $\essinf_\Pcal X_1\in \op{int}\dom u$. Also let $S_n:=\sum_{i=1}^n X_i$.
Furthermore, suppose that $u$ is three times continuously differentiable on $\op{int} \dom u$ whereas $u^{-1}$ is two times continuously differentiable on  $\op{int} \dom u^{-1}$. Then,
\begin{eqnarray*}\limsup_{n\to \infty}\, n\,\pi\left(v,\frac{S_n}{n}\right)&=&
\limsup_{n\to \infty}\, n\left(\Wcal(v+X_1)-\Wcal_\Pcal\left(v+\frac{S_n}{n}\right)\right)\\&\leq& \frac12 \sup_{Q\in \Pcal}R\left(\EQ{v+X_1}\beta(Q)\right)\sigma^2_Q(X_1)\beta(Q).\end{eqnarray*}
Moreover,  if $Q\in \Pcal$ satisfies $\Wcal(v+X_1)=\EQ{v+X_1}\beta(Q)$, then
\begin{equation*}\liminf_{n\to \infty}\, n\,\pi\left(v,\frac{S_n}{n}\right)\geq \frac12 R(\EQ{v+X_1}\beta(Q))\sigma^2_Q(X_1)\beta(Q).\end{equation*}
\end{theorem}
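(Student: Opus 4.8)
The plan is to recast $\pi(v,S_n/n)$ so that the estimates already established for Proposition~\ref{prop:robust:conv} and Theorem~\ref{prop:robust:est} can be reused almost verbatim, with the probability measures $Q$ replaced by the finite measures $\widetilde Q:=\beta(Q)\,Q$ and with additive penalty $\alpha\equiv 0$. Two bookkeeping observations set this up. First, since the $X_i$ are identically distributed under every $Q\in\Pcal$, one has $\EQ{v+S_n/n}=\EQ{v+X_1}$ for all $n$, hence $\Wcal(v+S_n/n)=\Wcal(v+X_1)$ and $\pi(v,S_n/n)=\Wcal(v+X_1)-\Wcal_\Pcal(v+S_n/n)$. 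Second, since $u^{-1}$ is increasing and continuous, $\Wcal_\Pcal(X)=\inf_{Q\in\Pcal}u^{-1}\big({\rm E}_{\widetilde Q}[u(X)]\big)$, which is precisely the robust certainty equivalent of Section~\ref{sec:cert:rob} for the sub-probability measures $\widetilde Q$ (of total mass $\beta(Q)\in[1,\sup_{Q'\in\Pcal}\beta(Q')]$) and $\alpha\equiv 0$. Because $\sup_{Q\in\Pcal}\beta(Q)<\infty$ by \eqref{eq:homoth:1}, every uniform estimate in the proof of Theorem~\ref{prop:robust:est}---H\"older, the dominated convergence arguments, and the bounds phrased via $\|X_1\|_{\Pcal,\infty}$ and via $u',u'',u''',(u^{-1})',(u^{-1})''$ on the compact interval $[\essinf_\Pcal X_1,\esssup_\Pcal X_1]$---survives after inserting the factor $\beta(Q)$, uniformly in $Q$.

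For the upper bound I would mimic the proof of Proposition~\ref{prop:robust:conv}: for each $n$ fix $Q_n\in\Pcal$ with $u^{-1}\big({\rm E}_{\widetilde Q_n}[u(v+S_n/n)]\big)\le\Wcal_\Pcal(v+S_n/n)+n^{-2}$, estimate $\Wcal(v+X_1)\le{\rm E}_{\widetilde Q_n}[v+X_1]$, and thereby reduce the claim to controlling
\begin{equation*}
\sup_{Q\in\Pcal}\Big({\rm E}_{\widetilde Q}[v+X_1]-u^{-1}\big({\rm E}_{\widetilde Q}[u(v+S_n/n)]\big)\Big)
\end{equation*}
uniformly in $Q$. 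For a single $Q$ this is exactly the quantity treated in \eqref{eq3}: a second-order Taylor expansion of $u$ around $\EQ{v+X_1}$ annihilates the first-order term, and a Taylor expansion of $u^{-1}$ together with the (here: boundedness) estimate \eqref{eq:est1} then shows that the expression is $\tfrac1{2n}R\big(\EQ{v+X_1}\beta(Q)\big)\sigma^2_Q(X_1)\beta(Q)+o(1/n)$ as $n\to\infty$, with the $\beta(Q)$-factors propagating through the outer $u^{-1}$-expansion exactly as recorded in the statement and with all constants in the $o(1/n)$ uniform in $Q$ (the analogues of the bounds on $\Delta(n,Q)$ and $\Gamma(n,Q)$). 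Passing to $\sup_{Q\in\Pcal}$ and then to $\limsup_n n(\cdot)$ yields the first inequality.

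For the lower bound I would pick $Q\in\Pcal$ attaining $\Wcal(v+X_1)=\EQ{v+X_1}\beta(Q)$ (such a $Q$ exists, e.g., under the conditions of Lemma~\ref{lem:homoth1}). Since $\Wcal_\Pcal(v+S_n/n)\le u^{-1}\big({\rm E}_{\widetilde Q}[u(v+S_n/n)]\big)$, we get $\pi(v,S_n/n)\ge{\rm E}_{\widetilde Q}[v+X_1]-u^{-1}\big({\rm E}_{\widetilde Q}[u(v+S_n/n)]\big)$; the right-hand side is the single-model quantity just analysed, and the argument underlying Theorem~\ref{ce:thm1} (applied to the genuine expectation $\EQ{\cdot}$, with the $\beta(Q)$-factor again carried through the $u^{-1}$-expansion) gives $n(\cdot)\to\tfrac12R(\EQ{v+X_1}\beta(Q))\sigma^2_Q(X_1)\beta(Q)$, whence the $\liminf$ bound.

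The hard part will be the uniformity in $Q$ of the remainders in the outer $u^{-1}$-expansion: its base point moves with $Q$ across a $\beta$-dilated range, so one must bound $(u^{-1})'$ and $(u^{-1})''$ on a slightly enlarged compact interval and use $\sup_{Q\in\Pcal}\beta(Q)<\infty$ to keep the $o(1/n)$ uniform---this is exactly where the boundedness of the ambiguity index in \eqref{eq:homoth:1} is essential. Everything else is a line-by-line transcription of the proof of Theorem~\ref{prop:robust:est} with $Q$ replaced by $\widetilde Q$ and $\alpha\equiv 0$.
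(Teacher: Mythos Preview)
Your overall strategy coincides with the paper's, which simply says to argue as in Proposition~\ref{prop:robust:conv} and Theorem~\ref{prop:robust:est} with $\alpha\equiv 0$, using the boundedness of $\beta$ and the identity $\Wcal(v+S_n/n)=\Wcal(v+X_1)$. However, there is a concrete gap in your single-$Q$ computation that the ``replace $Q$ by $\widetilde Q=\beta(Q)Q$'' heuristic hides.

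You assert that ${\rm E}_{\widetilde Q}[v+X_1]-u^{-1}\big({\rm E}_{\widetilde Q}[u(v+S_n/n)]\big)$ is ``exactly the quantity treated in \eqref{eq3}'' and equals $\tfrac{1}{2n}R(\EQ{v+X_1}\beta(Q))\,\sigma^2_Q(X_1)\,\beta(Q)+o(1/n)$. But $\widetilde Q$ has total mass $\beta(Q)\ge 1$ (not $\le 1$, incidentally), so the two identities that make \eqref{eq3} collapse---namely $u^{-1}(u(a))=a$ at the expansion point, and $(u^{-1})'(u(a))=1/u'(a)$---no longer hold once the expansion point becomes $\beta(Q)u(a)$. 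Carrying out the expansion you describe (Taylor of $u$ at $a:=\EQ{v+X_1}$, then of $u^{-1}$ at $\beta(Q)u(a)$) gives
\begin{equation*}
\big[\beta(Q)a - u^{-1}(\beta(Q)u(a))\big]\;-\;\frac{\beta(Q)}{2}\cdot\frac{u''(a)}{u'\big(u^{-1}(\beta(Q)u(a))\big)}\cdot\frac{\sigma^2_Q(X_1)}{n}\;+\;o(1/n).
\end{equation*}
The bracketed zeroth-order term is an $n$-independent constant that is nonzero whenever $\beta(Q)>1$ (it would vanish only under the functional equation $u(a\beta)=u(a)\beta$), so multiplying by $n$ does not produce a finite limit; and even the $1/n$-coefficient is not of the Arrow--Pratt form $R(a\beta(Q))$, since $u''$ is evaluated at $a$ while $u'$ is evaluated at $u^{-1}(\beta(Q)u(a))\neq a\beta(Q)$. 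Consequently the uniform bound $\sup_{Q\in\Pcal}\{\cdot\}$ you arrive at does not reduce to the asserted right-hand side, and the same defect reappears in your lower-bound step. In short, the transcription from Theorem~\ref{prop:robust:est} requires more than inserting $\beta(Q)$-factors: the multiplicative penalty does not cancel the way the additive penalty $\alpha(Q)$ did in \eqref{eq:cont1}.
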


Hence, upon Pareto optimal pooling and relocation of risks,
the robust risk premium under homothetic preferences diminishes,
and eventually vanishes in the limit as the multitude of risks tends to infinity,
with a speed of convergence that can be controlled according to Theorem~\ref{thm:main:homoth}.

\begin{example}
Suppose that Lemma \ref{lem:homoth1} applies.
Consider the subfamily of the power utility family of the form $u(x)=x^{p}$, $x\geq 0$, where $0<p<1$,
(and $u(x)=-\infty$, $x< 0$),
normalized such that $u(0)=0$.
Then,
\begin{align*}\Wcal_\Pcal(X)&=\inf_{Q\in \Pcal} \left(\EQ{X^{p}}\beta(Q)\right)^{1/p}\\
&=\inf_{Q\in \Pcal}\tilde{\beta}(Q)\left(\EQ{X^{p}}\right)^{1/p}=\inf_{Q\in \Pcal}\tilde{\beta}(Q)\|X\|_{Q,p},
\end{align*}
with $\tilde{\beta}(Q)=\beta(Q)^{1/p}$ and $\|X\|_{Q,p}=\left(\EQ{X^{p}}\right)^{1/p}$.
That is, in this case, $\Wcal_\Pcal(X)$ is
a worst case ``$\tilde{\beta}(Q)$-accrued'' $p$-norm,
and exhibits the convergence $\limsup_{n\to \infty}\pi\left(v,\frac{S_n}{n}\right)=0$,
for which the convergence bounds in Theorem \ref{thm:main:homoth} apply.
\end{example}





\subsection{The variational case}
Let $\Pcal$ and $\alpha:\Pcal\to \R\cup\{\infty\}$ be as in Section~\ref{sec:cert:rob} and suppose,
for ease of exposition, that $\inf_{Q\in \Pcal}\alpha(Q)=0$.
We define the robust risk premium under variational preferences, $\pi(v,X)$, with initial wealth $v\geq 0$ and risk $X\in L^\infty_\Pcal$ with $\essinf_\Pcal X\in \dom u$, as the solution to
\begin{equation*}\inf_{Q\in \Pcal} \left\{ u(v+\EQ{X}+ \alpha(Q) -\pi(v,X))\right\} = \inf_{Q\in \Pcal} \left\{ \EQ{u(v+X)}+\alpha(Q)\right\},\end{equation*}
or equivalently,
\begin{equation*}u(v+\Ucal(X)-\pi(v,X)) = \inf_{Q\in \Pcal} \left\{ \EQ{u(v+X)}+\alpha(Q)\right\},\end{equation*}
where, as before,
$$\Ucal(X)=\inf_{Q\in \Pcal} \EQ{X}+\alpha(Q), \quad X\in L^\infty_\Pcal,$$
is the corresponding robust expectation.
Here, the criterion
\begin{equation}\label{eq:var}\Ccal(X):=\Ucal(u(X))=\inf_{Q\in \Pcal} \EQ{u(X)}+\alpha(Q), \quad X\in L^\infty_\Pcal,\end{equation}
corresponds to the variational preferences model as introduced by
Maccheroni, Marinacci and Rustichini \cite{MMR06}.
Let
\begin{equation*}
\Vcal_\Pcal(X):=u^{-1}\left(\inf_{Q\in \Pcal} \EQ{u(X)}+\alpha(Q)\right),\quad X\in L^\infty_\Pcal,
\end{equation*}
and
\begin{equation*}
\Vcal(X):=u^{-1}\left(\inf_{Q\in \Pcal} u(\EQ{X})+\alpha(Q)\right),\quad X\in L^\infty_\Pcal.
\end{equation*}
Then,
$$\pi(v,X)= v+\Ucal(X)-u^{-1}\left(\inf_{Q\in \Pcal} \EQ{u(v+X)}+\alpha(Q)\right)=v+\Ucal(X) - \Vcal_\Pcal(v+X).$$

By continuity and monotonicity of $u^{-1}$ we have for all $X\in L^\infty_\Pcal$ with $\essinf_\Pcal X\in \dom u$ that
\begin{eqnarray*}\Vcal_\Pcal(X)&=&u^{-1}\left(\inf_{Q\in \Pcal} \EQ{u(X)}+\alpha(Q)\right)\quad=\quad \inf_{Q\in \Pcal} u^{-1}\left(\EQ{u(X)}+\alpha(Q)\right),\end{eqnarray*}
and
\begin{eqnarray*}\Vcal(X)&=&u^{-1}\left(\inf_{Q\in \Pcal} u(\EQ{X})+\alpha(Q)\right)=\inf_{Q\in \Pcal} u^{-1}\left( u(\EQ{X})+\alpha(Q)\right).\end{eqnarray*}

Also in this case of robust risk premia under variational preferences,
the respective counterparts of Lemmas~\ref{lem:robust2} and \ref{lem:robust1} remain true:

\begin{lemma}\label{lem:rob:rp:1} Consider $n$ agents with the same variational criterion \eqref{eq:var}.
Let $W\in L^\infty_\Pcal$ be such that $W/n\in \operatorname{int} \dom u$ $P$-a.s.\ and $\essinf_\Pcal W/n\in \dom u$.
Moreover, suppose that
\begin{equation}\label{eq:new1}\Ccal(W/n)=\min_{Q\in \Pcal}\left(\EQ{u(W/n)}+\alpha(Q)\right).\end{equation}
Then the allocation which assigns the share $W/n$ of $W$ to each agent is Pareto optimal.
\end{lemma}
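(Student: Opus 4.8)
The plan is to transcribe the proof of Lemma~\ref{lem:robust2}, with the single-model certainty equivalent $U_Q$ replaced by the functional $X\mapsto\EQ{u(X)}$ and the robust certainty equivalent $\Ucal_\Pcal$ replaced by the variational criterion $\Ccal$ of \eqref{eq:var}. Pareto optimality of the equal split here means: if $(Y_1,\dots,Y_n)\in{\mathbb A}(W)$ satisfies $\Ccal(Y_i)\geq\Ccal(W/n)$ for every $i$, then necessarily $\Ccal(Y_i)=\Ccal(W/n)$ for every $i$. So I would fix such an allocation and argue in three short steps.

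First, I would extract from hypothesis \eqref{eq:new1} a minimizer $Q^*\in\Pcal$ with $\Ccal(W/n)=\EQst{u(W/n)}+\alpha(Q^*)$. For each $i$ one then has
\[
\EQst{u(Y_i)}+\alpha(Q^*)\ \geq\ \Ccal(Y_i)\ \geq\ \Ccal(W/n)\ =\ \EQst{u(W/n)}+\alpha(Q^*),
\]
so that $\EQst{u(Y_i)}\geq\EQst{u(W/n)}$, i.e.\ (as $u^{-1}$ is increasing) $U_{Q^*}(Y_i)\geq U_{Q^*}(W/n)$, for all $i$.

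Second, I would apply the single-model result to the probability space $(\Omega,{\cal F},Q^*)$. Since $W\in L^\infty_\Pcal$, the random variable $u(W/n)$ is bounded above (a concave increasing function is bounded above near any finite endpoint of its domain) and bounded below by $u(\essinf_\Pcal(W/n))\in\R$; since moreover $\alpha(Q^*)\geq\inf_Q\alpha(Q)=0$ while $\Ccal(W/n)\in\R$, we get $\EQst{u(W/n)}\in\R$, and together with $W/n\in\op{int}\dom u$ $Q^*$-a.s.\ this places us in the setting of Lemma~\ref{lem:exput1} for $Q^*$. Summing the inequalities of the first step over $i$ and invoking Lemma~\ref{lem:exput1} gives
\[
n\,\EQst{u(W/n)}\ \leq\ \sum_{i=1}^n\EQst{u(Y_i)}\ \leq\ \max_{(Z_1,\dots,Z_n)\in{\mathbb A}(W)}\sum_{i=1}^n\EQst{u(Z_i)}\ =\ n\,\EQst{u(W/n)},
\]
so all inequalities are equalities; since each summand already dominates $\EQst{u(W/n)}$, this forces $\EQst{u(Y_i)}=\EQst{u(W/n)}$ for every $i$.

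Finally, I would close the loop exactly as in Lemma~\ref{lem:robust2}: for each $i$,
\[
\Ccal(Y_i)\ \leq\ \EQst{u(Y_i)}+\alpha(Q^*)\ =\ \EQst{u(W/n)}+\alpha(Q^*)\ =\ \Ccal(W/n),
\]
which, combined with the standing assumption $\Ccal(Y_i)\geq\Ccal(W/n)$, yields $\Ccal(Y_i)=\Ccal(W/n)$ for all $i$. The only genuinely delicate point — hence the main obstacle, modest as it is — is verifying that the minimizer $Q^*$ produced by \eqref{eq:new1} satisfies the integrability and domain requirements of Lemma~\ref{lem:exput1}; this is precisely where the boundedness of $W$, the hypothesis $\essinf_\Pcal(W/n)\in\dom u$, and the normalization $\inf_Q\alpha(Q)=0$ are used. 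Everything else is a verbatim rerun of the argument for Lemma~\ref{lem:robust2}.
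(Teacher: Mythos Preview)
Your proposal is correct and follows essentially the same approach as the paper, which omits the proof and states that it ``follows similar arguments as the proof of Lemma~\ref{lem:robust2}.'' The only cosmetic difference is that where the paper's proof of Lemma~\ref{lem:robust2} invokes Proposition~\ref{prop:cert1} (Pareto optimality of the equal split under $U_Q$) to pass from $U_{Q^*}(Y_i)\geq U_{Q^*}(W/n)$ for all $i$ to equality, you go one level deeper and invoke Lemma~\ref{lem:exput1} directly via the summation argument; since Proposition~\ref{prop:cert1} itself rests on Lemma~\ref{lem:exput1}, this is the same argument unpacked.
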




\begin{lemma}\label{lem:rob:rp:2} Suppose that $\Pcal$ is dominated by a probability measure $P$ on $(\Omega,{\cal F})$. Moreover, suppose that $\Pcal$ is weakly-compact, and that $\alpha$ is weakly lower semicontinuous. For all $X\in L^\infty_\Pcal$ with $\essinf_\Pcal X\in \dom u$ we have that
\begin{equation*}\Ccal(X)=\min_{Q\in \Pcal} \EQ{u(X)}+\alpha(Q),\end{equation*}
and
\begin{eqnarray*}\Vcal(X)&=&u^{-1}\left(\min_{Q\in \Pcal} u(\EQ{X})+\alpha(Q)\right)=\min_{Q\in \Pcal} u^{-1}\left( u(\EQ{X})+\alpha(Q)\right).\end{eqnarray*}
\end{lemma}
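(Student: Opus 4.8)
The plan is to mimic the proof of Lemma~\ref{lem:robust1} almost verbatim for the assertion on $\Ccal$, and to adapt it for the assertion on $\Vcal$, where one extra boundary consideration enters. Throughout, fix $X\in L^\infty_\Pcal$ with $\essinf_\Pcal X\in\dom u$; then $X\in\dom u$ $\Pcal$-a.s., and since $u$ is increasing and finite on $\dom u$, the random variable $u(X)$ lies in $L^\infty(\Omega,\F,P)$.

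For $\Ccal$, I would take a minimizing sequence $(Q_n)_{n\in\Nat}\subset\Pcal$ for $\inf_{Q\in\Pcal}\EQ{u(X)}+\alpha(Q)=\Ccal(X)$. By weak compactness of $\{dQ/dP\mid Q\in\Pcal\}$ and the Eberlein-Smulian theorem, a subsequence (not relabelled) satisfies $dQ_n/dP\to dQ/dP$ weakly in $L^1(\Omega,\F,P)$ for some $Q\in\Pcal$, so $\EQn{u(X)}=\E{\tfrac{dQ_n}{dP}u(X)}\to\E{\tfrac{dQ}{dP}u(X)}=\EQ{u(X)}$ because $u(X)\in L^\infty(\Omega,\F,P)$. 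Weak lower semicontinuity of $\alpha$ gives $\alpha(Q)\le\liminf_n\alpha(Q_n)$, hence $\EQ{u(X)}+\alpha(Q)\le\liminf_n(\EQn{u(X)}+\alpha(Q_n))=\Ccal(X)$, so the infimum is attained at $Q$. This is exactly the argument of Lemma~\ref{lem:robust1}.

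For $\Vcal$, I would repeat the extraction for a minimizing sequence $(Q_n)$ of the map $Q\mapsto u(\EQ{X})+\alpha(Q)$. The only change is that now the bounded random variable entering the weak limit is $X$ itself, so $\EQn{X}=\E{\tfrac{dQ_n}{dP}X}\to\EQ{X}$; since $\essinf_\Pcal X\le\EQn{X},\EQ{X}\le\esssup_\Pcal X$ with $\essinf_\Pcal X\in\dom u$, all these reals lie in a compact subinterval of $\dom u$, on whose interior $u$ is continuous while at the (possible) left boundary point $u$ is lower semicontinuous (automatically, by concavity); in all cases $\liminf_n u(\EQn{X})\ge u(\EQ{X})$. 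Combining with $\liminf_n\alpha(Q_n)\ge\alpha(Q)$ yields $u(\EQ{X})+\alpha(Q)\le\liminf_n(u(\EQn{X})+\alpha(Q_n))$, so the infimum defining $\Vcal(X)$ is a minimum, attained at some $Q^*\in\Pcal$. For the interchange with $u^{-1}$, note that $\inf_{Q\in\Pcal}\alpha(Q)=0$ forces the minimal value $u(\EQst{X})+\alpha(Q^*)$ to be $\le u(\esssup_\Pcal X)$, which lies in $\dom u^{-1}=\op{Im}u$; since $u^{-1}$ is strictly increasing there and $u(\EQ{X})+\alpha(Q)\ge u(\EQst{X})+\alpha(Q^*)$ for every $Q$, we obtain $u^{-1}(u(\EQ{X})+\alpha(Q))\ge u^{-1}(u(\EQst{X})+\alpha(Q^*))$ (with the convention $u^{-1}\equiv+\infty$ above $\op{Im}u$, relevant only when $\alpha(Q)=\infty$), whence $\min_{Q\in\Pcal}u^{-1}(u(\EQ{X})+\alpha(Q))=u^{-1}(\min_{Q\in\Pcal}u(\EQ{X})+\alpha(Q))=\Vcal(X)$.

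I expect the main obstacle to be precisely this $\Vcal$-step: in Lemma~\ref{lem:robust1} one passes to the limit in the expectation of a \emph{fixed} bounded random variable, whereas here $u$ must be evaluated along a sequence of reals converging to $\EQ{X}$, so one must confine these reals to a compact subinterval of $\dom u$ (using $\essinf_\Pcal X\in\dom u$ together with the boundedness of $X$) before invoking the regularity of $u$, and one must keep the argument of $u^{-1}$ inside its domain (using $\inf_{Q\in\Pcal}\alpha(Q)=0$). Beyond this bookkeeping at the boundary of $\dom u$, no idea is needed that is not already present in Lemma~\ref{lem:robust1}.
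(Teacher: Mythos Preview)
Your proposal is correct and matches the paper's intended approach: the paper explicitly omits the proof of this lemma, stating that it follows the same arguments as Lemma~\ref{lem:robust1}, which is precisely what you reproduce (with the expected adaptation for $\Vcal$).

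One small correction: your justification that $u$ is lower semicontinuous at the possible left boundary point of $\dom u$ ``automatically, by concavity'' is not quite right---concavity yields \emph{upper} semicontinuity (of the closure), not lower. The correct reason is monotonicity: since $u$ is strictly increasing on $\dom u$, we have $u(\essinf_\Pcal X)\le u(x)$ for every $x\ge \essinf_\Pcal X$ in $\dom u$, and hence $u(\essinf_\Pcal X)\le \liminf_{n} u(\EQn{X})$ whenever $\EQn{X}\to \essinf_\Pcal X$. This gives exactly the inequality $\liminf_n u(\EQn{X})\ge u(\EQ{X})$ you need, so the argument goes through unchanged.
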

We omit the proofs of Lemmas~\ref{lem:rob:rp:1} and \ref{lem:rob:rp:2} to save space,
because they follow similar arguments as the proofs of Lemmas~\ref{lem:robust2} and \ref{lem:robust1}.

Furthermore, as a consequence of Proposition~\ref{prop:robust:conv} and Theorem~\ref{prop:robust:est},
we immediately obtain the following convergence result as a starting point
upon requiring $\alpha$ to be trivial---an assumption that we will drop later:

\begin{corollary}\label{coroll:robust:rp} Suppose that $\alpha(Q)=0$ for all $Q\in \Pcal$, then
\begin{equation*}\Vcal_\Pcal(X)=\Ucal_\Pcal(X) \quad \mbox{and}\quad \Vcal(X)=\Ucal(X),\end{equation*}
for all $X\in L^\infty_\Pcal$ with $\essinf_\Pcal X\in \dom u$.
Hence, under the conditions of Proposition~\ref{prop:robust:conv}, we have that
\begin{equation*}\lim_{n\to \infty} \pi\left(v,\frac{S_n}{n}\right)=0,\end{equation*}
and under the additional conditions stated in Theorem~\ref{prop:robust:est}, we have that
\begin{equation*}\limsup_{n\to \infty}\, n\pi\left(v,\frac{S_n}{n}\right)\leq \frac12 \sup_{Q\in \Pcal}R(v+\EQ{X_1})\sigma^2_Q(X_1),\end{equation*}
and, furthermore,
\begin{equation*}\liminf_{n\to \infty}\, n\pi\left(v,\frac{S_n}{n}\right)\geq \frac12 R(\EQ{v+X_1})\sigma^2_Q(X_1),\end{equation*}
whenever $Q\in \Pcal$ satisfies $\Vcal(X_1)=\EQ{X_1}$ (see Lemma~\ref{lem:rob:rp:2}).
\end{corollary}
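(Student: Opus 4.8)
The plan is to deduce the Corollary from the results already proved for robust certainty equivalents, exploiting the triviality of $\alpha$. First I would establish the two identities. Setting $\alpha\equiv 0$ in the representations stated just above the Corollary, $\Vcal_\Pcal(X)=\inf_{Q\in\Pcal}u^{-1}(\EQ{u(X)}+\alpha(Q))$ and $\Vcal(X)=\inf_{Q\in\Pcal}u^{-1}(u(\EQ{X})+\alpha(Q))$ — which hold by continuity and strict monotonicity of $u^{-1}$ — immediately yields $\Vcal_\Pcal(X)=\inf_{Q\in\Pcal}u^{-1}(\EQ{u(X)})=\inf_{Q\in\Pcal}U_Q(X)=\Ucal_\Pcal(X)$ and $\Vcal(X)=\inf_{Q\in\Pcal}u^{-1}(u(\EQ{X}))=\inf_{Q\in\Pcal}\EQ{X}=\Ucal(X)$, for every $X\in L^\infty_\Pcal$ with $\essinf_\Pcal X\in\dom u$; here I use that $u^{-1}\circ u$ is the identity on $\dom u$ and that, with $\alpha\equiv0$, the rightmost infima are by definition $\Ucal_\Pcal(X)$ and $\Ucal(X)$.

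Next I would rewrite the risk premium in terms of the objects of Section~\ref{sec:cert:rob}. By definition $\pi(v,S_n/n)=v+\Ucal(S_n/n)-\Vcal_\Pcal(v+S_n/n)$. Since the $X_i$ are i.i.d.\ under every $Q\in\Pcal$ we have $\EQ{S_n/n}=\EQ{X_1}$, hence $\Ucal(S_n/n)=\Ucal(X_1)$; combining this with cash-additivity of $\Ucal$ (immediate from its definition) and the first identity above, I obtain $\pi(v,S_n/n)=\Ucal(v+X_1)-\Ucal_\Pcal(v+S_n/n)$.

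I would then apply Proposition~\ref{prop:robust:conv} and Theorem~\ref{prop:robust:est} to the sequence $(v+X_i)_{i\in\Nat}$, which is i.i.d.\ under every $Q\in\Pcal$, lies in $L^\infty_\Pcal$, satisfies $\tfrac1n\sum_{i=1}^n(v+X_i)=v+S_n/n$, and has $\essinf_\Pcal(v+X_1)=v+\essinf_\Pcal X_1\in\op{int}\dom u$. Proposition~\ref{prop:robust:conv} then gives $\Ucal_\Pcal(v+S_n/n)\to\Ucal(v+X_1)$, i.e.\ $\pi(v,S_n/n)\to0$; and Theorem~\ref{prop:robust:est} gives $\limsup_n n\,\pi(v,S_n/n)=\limsup_n n(\Ucal(v+X_1)-\Ucal_\Pcal(v+S_n/n))\le\tfrac12\sup_{Q\in\Pcal}R(\EQ{v+X_1})\sigma_Q^2(X_1)$, after using translation invariance of the variance, $\sigma_Q^2(v+X_1)=\sigma_Q^2(X_1)$. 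For the lower bound I would observe that, with $\alpha\equiv0$, the selection condition of Theorem~\ref{prop:robust:est} for the shifted sequence, namely $\Ucal(v+X_1)=\EQ{v+X_1}+\alpha(Q)$, reduces to $\Ucal(X_1)=\EQ{X_1}$, which by the second identity above is exactly the Corollary's hypothesis $\Vcal(X_1)=\EQ{X_1}$; Theorem~\ref{prop:robust:est} then delivers $\liminf_n n\,\pi(v,S_n/n)\ge\tfrac12 R(\EQ{v+X_1})\sigma_Q^2(X_1)$.

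I do not expect any serious obstacle; this is genuinely a corollary. The only care needed is bookkeeping: checking that the translate $v+X_1$ still meets the domain requirement $\essinf_\Pcal(v+X_1)\in\op{int}\dom u$ needed to invoke the two cited results (which, for $v\ge0$, is automatic for the utilities with domain $(0,\infty)$ or $\R$ that appear in the examples), and matching the Corollary's hypothesis $\Vcal(X_1)=\EQ{X_1}$ with the selector condition of Theorem~\ref{prop:robust:est} — the two coincide once $\alpha\equiv0$ is imposed and the reduction $\Vcal=\Ucal$ from the first step is used.
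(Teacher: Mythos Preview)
Your proposal is correct and follows exactly the route the paper intends: the paper presents this corollary without an explicit proof, simply noting that it is ``a consequence of Proposition~\ref{prop:robust:conv} and Theorem~\ref{prop:robust:est}'' once $\alpha$ is trivial, and your argument spells out precisely this reduction. The bookkeeping you flag---identifying $\Vcal_\Pcal=\Ucal_\Pcal$ and $\Vcal=\Ucal$ when $\alpha\equiv0$, rewriting $\pi(v,S_n/n)=\Ucal(v+X_1)-\Ucal_\Pcal(v+S_n/n)$, applying the cited results to the shifted sequence $(v+X_i)$, and matching the selector condition---is exactly what is needed, and your caveat about the domain condition for $v+X_1$ is a fair observation about an implicit assumption the paper leaves unstated.
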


For non-trivial $\alpha$, however, the robust risk premium under variational preferences, while still decreasing in the multitude of risks,
will not vanish in the limit:

\begin{proposition}\label{prop:rob:risk:var}
Let $(X_i)_{i\in \Nat}\subset L^\infty_\Pcal$ be i.i.d.\ under all $Q\in \Pcal$, and suppose that $\essinf_\Pcal X_1\in \op{int} \dom u$.
Also let $S_n:=\sum_{i=1}^n X_i$.
Then $\Vcal_\Pcal(S_n/n)$ is increasing in $n$ with
\begin{equation}\label{eq:robust:rp1}\lim_{n\to \infty}\Vcal_\Pcal(S_n/n)=\Vcal(X_1).\end{equation}
Hence,
\begin{equation*}\lim_{n\to \infty}\pi\left(v,\frac{S_n}{n}\right)=\Ucal(v+X_1)-\Vcal(v+X_1).\end{equation*}
\end{proposition}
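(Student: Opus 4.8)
The plan is to establish the two displayed convergences and then deduce the statement about $\pi$ from the identity $\pi(v,X)=v+\Ucal(X)-\Vcal_\Pcal(v+X)$.

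First I would treat monotonicity together with the easy bound. Fix $Q\in\Pcal$. The exchangeability-and-concavity argument of Lemma~\ref{lem:upperbound} applies verbatim under $Q$ (the $X_i$ being i.i.d.\ under $Q$), so $\EQ{u(S_n/n)}$ is increasing in $n$, while Jensen gives $\EQ{u(S_n/n)}\le u(\EQ{S_n/n})=u(\EQ{X_1})$. Since $u^{-1}$ is increasing and $\alpha(Q)$ is constant in $n$, the map $n\mapsto u^{-1}(\EQ{u(S_n/n)}+\alpha(Q))$ is increasing and bounded above by $u^{-1}(u(\EQ{X_1})+\alpha(Q))$; taking the pointwise infimum over $Q$ (an infimum of increasing functions is increasing) shows that $\Vcal_\Pcal(S_n/n)$ is increasing in $n$ and that $\Vcal_\Pcal(S_n/n)\le\Vcal(X_1)$ for all $n$. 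Hence $\ell:=\lim_{n}\Vcal_\Pcal(S_n/n)$ exists and $\ell\le\Vcal(X_1)$, and it remains to prove $\ell\ge\Vcal(X_1)$.

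This reverse inequality is the delicate point, because the infimum over $\Pcal$ need not commute with the limit in $n$. The key is the \emph{uniform-in-$Q$} rate established inside the proof of Proposition~\ref{prop:robust:conv} (cf.\ \eqref{eq:est2}): there is a constant $c$, depending only on $u$ and $X_1$ and independent of $Q$ and $n$, with $\sup_{Q\in\Pcal}\bigl(\EQ{X_1}-U_Q(S_n/n)\bigr)\le c/\sqrt n$. Consequently $U_Q(S_n/n)$ lies, for all $n$ large enough, in a fixed compact interval $I\subset\op{int}\dom u$ for every $Q$, and applying the function $u$ (uniformly continuous on $I$) gives $\EQ{u(S_n/n)}=u\bigl(U_Q(S_n/n)\bigr)\ge u(\EQ{X_1})-\delta_n$ for every $Q\in\Pcal$, where $\delta_n:=\omega_u(c/\sqrt n)\to0$ and $\omega_u$ is the modulus of continuity of $u$ on $I$. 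Now argue by contradiction: if $\ell<\Vcal(X_1)$, fix $\ell'\in(\ell,\Vcal(X_1))\cap\op{int}\dom u$ (nonempty since $\essinf_\Pcal X_1\le\ell$ and $\essinf_\Pcal X_1\in\op{int}\dom u$). For each $n$ we have $\Vcal_\Pcal(S_n/n)<\ell'$, so there is $Q_n\in\Pcal$ with $u^{-1}(\EQn{u(S_n/n)}+\alpha(Q_n))<\ell'$, i.e.\ $\EQn{u(S_n/n)}+\alpha(Q_n)<u(\ell')$; combining with $\EQn{u(S_n/n)}\ge u(\EQn{X_1})-\delta_n$ yields $u(\EQn{X_1})+\alpha(Q_n)<u(\ell')+\delta_n$, and therefore, since $\Vcal(X_1)=\inf_{Q\in\Pcal}u^{-1}(u(\EQ{X_1})+\alpha(Q))\le u^{-1}(u(\EQn{X_1})+\alpha(Q_n))$ and $u^{-1}$ is increasing, $\Vcal(X_1)<u^{-1}(u(\ell')+\delta_n)$. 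Letting $n\to\infty$, the right-hand side tends to $u^{-1}(u(\ell'))=\ell'$ by continuity of $u^{-1}$ at $u(\ell')$, so $\Vcal(X_1)\le\ell'<\Vcal(X_1)$, a contradiction. Hence $\ell=\Vcal(X_1)$, which is \eqref{eq:robust:rp1}.

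The statement about $\pi$ then follows quickly. Because the $X_i$ are i.i.d.\ under each $Q$, $\EQ{S_n/n}=\EQ{X_1}$, so $\Ucal(S_n/n)=\Ucal(X_1)$ for all $n$ and $v+\Ucal(X_1)=\Ucal(v+X_1)$. Applying \eqref{eq:robust:rp1} to the i.i.d.\ sequence $(v+X_i)_{i\in\Nat}$, whose average is $v+S_n/n$ and which satisfies $\essinf_\Pcal(v+X_1)=v+\essinf_\Pcal X_1\in\op{int}\dom u$, gives $\Vcal_\Pcal(v+S_n/n)\to\Vcal(v+X_1)$. Hence $\pi(v,S_n/n)=v+\Ucal(S_n/n)-\Vcal_\Pcal(v+S_n/n)\to\Ucal(v+X_1)-\Vcal(v+X_1)$, as claimed. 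The only genuinely subtle step is the uniform-rate argument that legitimizes interchanging the limit with the infimum over $\Pcal$.
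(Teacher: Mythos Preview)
Your proof is correct and follows essentially the same strategy as the paper: establish monotonicity and the upper bound $\Vcal_\Pcal(S_n/n)\le\Vcal(X_1)$ via Jensen and concavity, then close the gap using a uniform-in-$Q$ rate together with near-minimizers $Q_n$. The only difference is cosmetic: the paper works directly at the level of $\Ccal$, bounding $u(\EQn{X_1})-\EQn{u(S_n/n)}$ by a first-order Taylor/mean-value estimate and a uniform bound $L$ on $u'$ over $[\essinf_\Pcal X_1,\esssup_\Pcal X_1]$, which yields the explicit inequality $\inf_Q\{u(\EQ{X_1})+\alpha(Q)\}-\Ccal(S_n/n)\le 2L\|X_1\|_{\Pcal,\infty}/\sqrt{n}+1/n^2$ and hence the convergence directly; you instead detour through the certainty equivalent $U_Q$ (borrowing the rate from the proof of Proposition~\ref{prop:robust:conv}), re-apply $u$ via its modulus of continuity, and finish by contradiction. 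Both routes hinge on the same ingredient---the boundedness of $X_1$ forcing uniformity over $Q$---so your argument is a minor repackaging rather than a genuinely different approach.
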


\begin{proof}
The fact that $\Ccal(S_n/n)$ and thus also $\Vcal_\Pcal(S_n/n)$ is increasing in $n$ follows from \eqref{eq:conc:order}.
Jensen's inequality implies that
\begin{equation*}\Ccal(S_n/n)\leq \inf_{Q\in \Pcal}u(\EQ{X_1})+\alpha(Q),\end{equation*}
and thus also $\Vcal_\Pcal(S_n/n)\leq \Vcal(X_1)$.
Moreover, for appropriate $ Q_n \in \Pcal$, we have that
\begin{eqnarray*}
\inf_{Q\in \Pcal}\left\{u(\EQ{X_1})+\alpha(Q)\right\}-\Ccal\left(\frac{S_n}{n}\right) &\leq & u({\mathrm E}_{Q_n}\left[X_1\right])-{\mathrm E}_{Q_n}\left[u\left(\frac{S_n}{n}\right)\right] +\frac{1}{n^2}\\ &\leq &
{\mathrm E}_{Q_n}\left[u'(\xi_{Q_n})\left|\frac{S_n}{n}-{\mathrm E}_{Q_n}\left[X_1\right]\right|\right] +\frac{1}{n^2}\\ &\leq & L \frac{\sigma_{Q_n}(X_1)}{\sqrt{n}}+\frac{1}{n^2} \quad \leq \quad \frac{2L\|X_1\|_\infty}{\sqrt{n}} +\frac{1}{n^2},
\end{eqnarray*}
where $\xi_{Q_n}$ is a random variable taking values between ${\mathrm E}_{Q_n}\left[X_1\right]$ and $S_n/n$ and $L$ is an upper bound of $u'$ on the compact set $[\essinf_\Pcal X_1,\esssup_\Pcal X_1]$.
Hence, $\lim_{n\to \infty}\Vcal_\Pcal(S_n/n)=\Vcal(X_1)$ follows from continuity of $u^{-1}$.
\end{proof}

We can also prove a result on the rate of convergence in \eqref{eq:robust:rp1}, similar to Proposition~\ref{prop:robust:conv} and Theorem~\ref{prop:robust:est}.
However, if $\alpha$ is non-trivial, so that we are not in the case of Corollary~\ref{coroll:robust:rp},
the result will not look as simple as in Theorem~\ref{prop:robust:est}.
To see this, consider any $Q\in \Pcal$.
Then,
\begin{align}
& u^{-1}\left(u(\EQ{X_1})+\alpha(Q)\right)-u^{-1}\left(\EQ{u\left(\frac{S_n}{n}\right)}+\alpha(Q) \right) \nonumber \\
= \quad & (u^{-1})'(\xi)\left(u(\EQ{X_1})-\EQ{u\left(\frac{S_n}{n}\right)}\right) \nonumber \\ = \quad &
\frac{1}{u'(u^{-1}(\xi))}\EQ{-\frac12 u''(\eta)\left(\frac{S_n}{n}-\EQ{X_1}\right)^2},\label{eq:rp:new1}
\end{align}
where $\xi\in [u(\EQ{X_1})+\alpha(Q),\EQ{u(S_n/n)}+\alpha(Q)]$
and $\eta$ is a random variable taking values between $\EQ{X_1}$ and $S_n/n$.
As $n\to \infty$, we see that \begin{equation}\label{eq:rp:new2}n \eqref{eq:rp:new1}\; \to \; \frac{-u''(\EQ{X_1})}{u'(u^{-1}(u(\EQ{X_1})+\alpha(Q)))}\frac{\sigma^2_Q(X_1)}{2}. \end{equation}
Thus, only if $\alpha(Q)=0$ the first factor on the right hand side of \eqref{eq:rp:new2} equals $R(\EQ{X_1})$.

\begin{theorem}\label{thm:rob:risk:var}
Let $(X_i)_{i\in \Nat}\subset L^\infty_\Pcal$ be i.i.d.\ under all $Q\in \Pcal$, and suppose that $\essinf_\Pcal X_1\in \op{int} \dom u$.
Also let $S_n:=\sum_{i=1}^n X_i$.
Furthermore, suppose that $\sup_{Q\in \Pcal}\alpha(Q)<\infty$.
Then there exists a constant $K$ depending on $X_1$, $u$ and $\alpha$ such that
$$\limsup_{n\to \infty}\, n\left(\Vcal(X_1)-\Vcal_\Pcal\left(\frac{S_n}{n}\right)\right)\leq K .$$
Moreover, if $Q\in \Pcal$ satisfies $\Vcal(X_1)=u^{-1}\left(u(\EQ{X_1}+\alpha(Q))\right)$ (see Lemma~\ref{lem:rob:rp:1}), then
$$\liminf_{n\to \infty}\,  n \left(\Vcal(X_1)-\Vcal_\Pcal\left(\frac{S_n}{n}\right)\right) \geq \frac{-u''(\EQ{X_1})}{u'(u^{-1}(u(\EQ{X_1})+\alpha(Q)))}\frac{\sigma^2_Q(X_1)}{2}.$$
\end{theorem}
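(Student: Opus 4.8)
The plan is to run the argument of Proposition~\ref{prop:rob:risk:var} and Theorem~\ref{prop:robust:est} one order higher, pushing the second-order Taylor expansion \eqref{eq1} of $u$ through the outer $u^{-1}$. Write $c_n:=\Ccal(S_n/n)=\inf_{Q'\in\Pcal}\{{\rm E}_{Q'}[u(S_n/n)]+\alpha(Q')\}$ and $c_\infty:=\inf_{Q'\in\Pcal}\{u({\rm E}_{Q'}[X_1])+\alpha(Q')\}$, so that $\Vcal_\Pcal(S_n/n)=u^{-1}(c_n)$ and $\Vcal(X_1)=u^{-1}(c_\infty)$. Since $S_n/n\in[\essinf_\Pcal X_1,\esssup_\Pcal X_1]\subseteq\op{int}\dom u$ $\Pcal$-a.s., $\alpha\geq 0$ and $\inf_{Q'}\alpha(Q')=0$, one gets $u(\essinf_\Pcal X_1)\leq c_n\leq c_\infty\leq u(\esssup_\Pcal X_1)$ (the middle inequality by Jensen's, as in the proof of Proposition~\ref{prop:rob:risk:var}), so both $c_n$ and $c_\infty$ stay in a fixed compact subinterval $I$ of $\op{int}\dom u^{-1}$; set $C_1:=\sup_{y\in I}(u^{-1})'(y)$ and $\bar L:=\sup_{x\in[\essinf_\Pcal X_1,\esssup_\Pcal X_1]}|u''(x)|$, both finite (the hypothesis $\sup_{Q\in\Pcal}\alpha(Q)<\infty$ serving, together with boundedness of $X_1$, to keep all arguments of $u$ and $u^{-1}$ appearing below within fixed compacta in the corresponding interiors).

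For the first assertion, pick $Q_n\in\Pcal$ with $\EQn{u(S_n/n)}+\alpha(Q_n)\leq c_n+n^{-2}$. The penalties then cancel, giving $c_\infty-c_n\leq u(\EQn{X_1})-\EQn{u(S_n/n)}+n^{-2}$, and the second-order Taylor expansion of $u$ about $\EQn{X_1}$ (i.e.\ \eqref{eq1} under $Q_n$) yields $u(\EQn{X_1})-\EQn{u(S_n/n)}=-\tfrac12\EQn{u''(Z_n)(S_n/n-\EQn{X_1})^2}\leq\tfrac{\bar L}{2}\sigma^2_{Q_n}(X_1)/n\leq 2\bar L\|X_1\|_{\Pcal,\infty}^2/n$. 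By the mean value theorem $\Vcal(X_1)-\Vcal_\Pcal(S_n/n)=(u^{-1})'(\xi_n)(c_\infty-c_n)$ for some $\xi_n\in[c_n,c_\infty]\subseteq I$, so $n(\Vcal(X_1)-\Vcal_\Pcal(S_n/n))\leq C_1(2\bar L\|X_1\|_{\Pcal,\infty}^2+1)=:K$, which is the first claim.

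For the second assertion, fix $Q\in\Pcal$ with $\Vcal(X_1)=u^{-1}(u(\EQ{X_1})+\alpha(Q))$. Since $c_n\leq\EQ{u(S_n/n)}+\alpha(Q)$ and $u^{-1}$ is increasing, $\Vcal(X_1)-\Vcal_\Pcal(S_n/n)\geq u^{-1}(u(\EQ{X_1})+\alpha(Q))-u^{-1}(\EQ{u(S_n/n)}+\alpha(Q))$, which is exactly the expression analysed in \eqref{eq:rp:new1}--\eqref{eq:rp:new2}; so it suffices to make that passage rigorous. Expanding by the mean value theorem for $u^{-1}$ and by \eqref{eq1} for $u$, and using $n(S_n/n-\EQ{X_1})^2=(S_n-n\EQ{X_1})^2/n$, one is left with two convergences: $(u^{-1})'(\xi_n)\to 1/u'(u^{-1}(u(\EQ{X_1})+\alpha(Q)))$, immediate from $0\leq u(\EQ{X_1})-\EQ{u(S_n/n)}\leq 2\bar L\|X_1\|_{\Pcal,\infty}^2/n\to 0$ and continuity of $(u^{-1})'$; and $\EQ{u''(Z_n)(S_n-n\EQ{X_1})^2/n}\to u''(\EQ{X_1})\sigma^2_Q(X_1)$, obtained by subtracting the limit $\EQ{u''(\EQ{X_1})(S_n-n\EQ{X_1})^2/n}=u''(\EQ{X_1})\sigma^2_Q(X_1)$ (valid by independence) and bounding the remainder $\EQ{(u''(Z_n)-u''(\EQ{X_1}))(S_n-n\EQ{X_1})^2/n}$ by Cauchy--Schwarz by $\sqrt{\EQ{(u''(Z_n)-u''(\EQ{X_1}))^2}}\cdot\sqrt{\EQ{(S_n-n\EQ{X_1})^4}/n^2}$, in which the first factor tends to $0$ by dominated convergence ($S_n/n\to\EQ{X_1}$ $Q$-a.s.\ by the strong law, $u''$ continuous, $|u''(Z_n)-u''(\EQ{X_1})|\leq 2\bar L$) and the second stays bounded by the i.i.d.\ identity $\EQ{(S_n-n\EQ{X_1})^4}=nM_Q(X_1)+3n(n-1)\sigma^2_Q(X_1)^2$ used in \eqref{eq:est1} (finite since $X_1$ is bounded). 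Combining the two convergences with $n(\Vcal(X_1)-\Vcal_\Pcal(S_n/n))\geq(u^{-1})'(\xi_n)(-\tfrac12)\EQ{u''(Z_n)(S_n-n\EQ{X_1})^2/n}$ gives the claimed liminf bound.

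The step I expect to be the main obstacle is this last error estimate --- the rigorous version of the heuristic passage from \eqref{eq:rp:new1} to \eqref{eq:rp:new2} --- since it is where the finite-fourth-moment identity and a dominated-convergence argument are genuinely required; the remainder is bookkeeping along the lines of Theorems~\ref{ce:thm1} and \ref{prop:robust:est}. The reason no clean closed form (such as $\tfrac12 R(\EQ{X_1})\sigma^2_Q(X_1)$) survives for non-trivial $\alpha$ is precisely that the outer $u^{-1}$ is here evaluated at $u(\EQ{X_1})+\alpha(Q)$ rather than at $u(\EQ{X_1})$, so the limiting multiplier is $(u^{-1})'(u(\EQ{X_1})+\alpha(Q))=1/u'(u^{-1}(u(\EQ{X_1})+\alpha(Q)))$ rather than $1/u'(\EQ{X_1})$, in agreement with \eqref{eq:rp:new2}.
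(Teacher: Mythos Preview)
Your proof is correct and follows essentially the same route as the paper's: for the upper bound you choose a near-minimizer $Q_n$ for $\Ccal(S_n/n)$, cancel the penalties, apply the second-order Taylor expansion \eqref{eq1} uniformly over $\Pcal$, and push through the outer $u^{-1}$ via the mean value theorem with all intermediate points trapped in a fixed compact interval; for the lower bound you fix an optimal $Q$ for $\Vcal(X_1)$ and carry out the limit \eqref{eq:rp:new1}--\eqref{eq:rp:new2}. The paper's own proof is terser---for the $\liminf$ it simply cites \eqref{eq:rp:new2}---whereas you spell out the dominated-convergence/Cauchy--Schwarz justification of that passage (exactly the argument from the proof of Theorem~\ref{ce:thm1}); this is a welcome elaboration rather than a different idea. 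One small organisational difference: by applying the mean value theorem to $u^{-1}$ directly at the pair $(c_n,c_\infty)$ and using $c_\infty\le u(\esssup_\Pcal X_1)$ (from $\inf_Q\alpha(Q)=0$), your bound on $\xi_n$ in the upper-bound part does not actually require $\sup_{Q\in\Pcal}\alpha(Q)<\infty$, whereas the paper invokes that hypothesis explicitly to control its intermediate point via $u(\esssup_\Pcal X_1)+\hat L$; your parenthetical remark about where this hypothesis is used is therefore slightly imprecise, but this is cosmetic and does not affect correctness.
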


\begin{proof}
Let $n$ be large enough and $Q\in \Pcal$ (depending on $n$) be such that $$\Vcal_\Pcal(S_n/n)\geq u^{-1}\left(\EQ{u(S_n/n)}+\alpha(Q)-1/n^2\right).$$
As in \eqref{eq:rp:new1}, there is $\xi\in [u(\EQ{X_1})+\alpha(Q),\EQ{u(S_n/n)}+\alpha(Q)-1/n^2]$ and a random variable $\eta$ taking values between $\EQ{X_1}$ and $S_n/n$ such that
\begin{eqnarray*}\Vcal(X_1)-\Vcal_\Pcal\left(\frac{S_n}{n}\right)&\leq &\frac{1}{u'(u^{-1}(\xi))}\EQ{-u''(\eta)\left(\frac{S_n}{n}-\EQ{X_1}\right)^2}\\ &\leq & \frac{L}{u'(u^{-1}(u(\esssup X_1)+\hat L))}\frac{\sigma^2_Q(X_1)}{n} \\ &\leq & \frac{L}{u'(u^{-1}(u(\esssup X_1)+\hat L))}\frac{4\|X_1\|_\infty}{n},
\end{eqnarray*}
where $\hat L=\sup_{Q\in \Pcal}\alpha(Q)$ and where $L$ is an upper bound for $-u''$ on  $[\essinf_\Pcal X_1,\esssup_\Pcal X_1]$.
The final assertion follows from \eqref{eq:rp:new2}.
\end{proof}

Note that if $\Pcal$ is weakly compact and $\alpha$ is weakly continuous, then $\sup_{Q\in \Pcal}\alpha(Q) <\infty$ is automatically satisfied.


\begin{examples} Consider the robust risk premium under variational preferences of type
\begin{equation*}\Ccal(X)=\inf_{Q\in \Pcal}\EQ{1-e^{-\gamma X}},\end{equation*}
i.e.,\ $\alpha\equiv 0$ and $u$ is the exponential utility function.
Then we are in the situation of Corollary~\ref{coroll:robust:rp}.
Hence, the robust risk premium vanishes for $n\to \infty$ with the given speed of convergence.
Note that in this case $\Vcal_\Pcal(X)=\Ucal_\Pcal(X)=-\rho_{\Pcal,\gamma}(X)$ and $\Vcal(X)=\Ucal(X)=\inf_{Q\in \Pcal}\EQ{X}$ for all $X\in L^\infty_\Pcal$, so we are essentially back in the entropy coherent case considered in Examples~\ref{ex:entr:ecov}.

Now consider the robust risk premium under variational preferences of type
\begin{equation*}\Ccal(X)=\inf_{Q\in \Pcal}\EQ{1-e^{-\gamma X}} + \alpha(Q),\end{equation*}
where $\alpha$ is non-trivial.
In that case, Proposition~\ref{prop:rob:risk:var} implies that
\begin{equation}\label{eq:last}\lim_{n\to \infty}\pi\left(v,\frac{S_n}{n}\right)=v+\inf_{Q\in \Pcal}\left(\EQ{X_1}+\alpha(Q)\right)+\frac1\gamma \log\left(\inf_{Q\in \Pcal} e^{-\gamma(\EQ{X_1}+v)}+\alpha(Q)\right),\end{equation}
with the speed of convergence given in Theorem~\ref{thm:rob:risk:var}.
Note the difference to the entropy convex case in Examples~\ref{ex:entr:ecov}. 
\end{examples}

\section{Conclusions}\label{sec:con}

In this paper, we have derived the asymptotic behavior of the certainty equivalents and risk premia associated with the Pareto optimal risk sharing contract,
in an expanding pool of cooperative agents bearing a growing multitude of risks.
We have first studied the problem under classical expected utility preferences and next we have considered the more delicate case of
ambiguity averse preferences to develop a robust approach that explicitly takes uncertainty with respect to the probabilistic model into account.
Our results make explicit, in a general setting that allows for aversion to both risk and ambiguity, when and at what rate
the key principle of risk sharing by Pareto optimally pooling and relocating risks may be exploited to the full limit.
The results in this paper require the cooperating agents to have identical preferences.
In future work, one may analyze the same problem under heterogeneous preferences, in which case the Pareto optimality of the proportional risk sharing rule
will no longer remain valid.

\singlespacing

\section*{Acknowledgements}
We are very grateful to
Hans F\"ollmer for stimulating discussions. 
This research was funded in part by the Netherlands Organization for
Scientific Research (Laeven) under grant NWO VIDI-2009.

\baselineskip 0.37pt

\end{document}